\pdfoutput=1
\documentclass[preprint]{elsarticle}

\usepackage{amsmath}
\usepackage{amssymb}
\usepackage{amsthm}
\usepackage{graphicx}

\usepackage{algorithm}
\usepackage{algpseudocode}

\newtheorem{proposition}{Proposition}
\newtheorem{remark}{Remark}

\usepackage{xcolor}
\AtBeginDocument{\definecolor{red}{rgb}{0,0,0}}

\usepackage[colorlinks=true, linkcolor=blue, citecolor=blue,
            urlcolor=blue]{hyperref}

\journal{Journal of Computational Physics}

\makeatletter
\def\ps@pprintTitle{%
     \let\@oddhead\@empty
     \let\@evenhead\@empty
     \def\@oddfoot{\ifnopreprintline\relax\else\footnotesize\itshape
       Under consideration and revision at \ifx\@journal\@empty Elsevier
       \else\@journal\fi\hfill\@date\fi}%
     \let\@evenfoot\@oddfoot}
\makeatother

\begin{document}

\begin{frontmatter}

\title{AMPERe: an exactly divergence-free quadrature-by-expansion evaluator for
near-surface Biot–Savart and virtual-casing fields in fusion geometries}

\author[cea]{Homam Betar}
\ead{homam.betar@cea.fr}

\author[cea]{Yanick Sarazin}

\author[cea]{Virginie Grandgirard}

\author[M2P2]{David Zarzoso}

\address[cea]{CEA, IRFM, 13108 Saint-Paul-lez-Durance Cedex, France}
\address[M2P2]{Aix-Marseille Univ, CNRS, Centrale Med, M2P2 UMR 7340, Marseille, France}

\begin{abstract}
Accurate magnetic-field evaluation on and near current-carrying surfaces is a
recurring computational bottleneck in magnetic-confinement fusion. As targets
approach the source surface, Biot--Savart and virtual-casing kernels become
singular, causing standard Gauss--Legendre and upsampled trapezoidal
quadratures to lose accuracy. We introduce AMPERe, a quadrature-by-expansion
(QBX) evaluator for vector magnetostatic kernels on general three-dimensional,
non-axisymmetric tokamak and stellarator geometries. The method develops a
spherical-harmonic QBX formulation for the vector Biot--Savart and
virtual-casing (Stratton--Chu) fields and expresses the resulting magnetic
field as the analytic curl of a regular-solid-harmonic vector potential. This
construction yields a divergence-free discrete field at every expansion order,
provides analytic derivatives, and enables an a-priori separation of
truncation, quadrature, and geometric errors. On the rotating-ellipse benchmark
of Malhotra et al.~\cite{malhotra2020}, QBX maintains a relative error of
$2$--$5\times10^{-6}$ throughout the near-surface regime at
distance-independent cost, whereas upsampled trapezoidal quadrature exhibits
$O(1)$ errors and requires node counts scaling as $(h/d)^2$. AMPERe is
implemented as a header-only, performance-portable Kokkos library supporting
CPU and GPU backends, with target-specific and local-correction accelerations
validated against the baseline. The method is further demonstrated on
realistic Wendelstein~7-X configurations, including sheet-current and
virtual-casing field evaluations.
\end{abstract}

\begin{keyword}
quadrature by expansion \sep near-singular integration \sep Biot--Savart \sep
virtual casing \sep fusion \sep tokamaks \sep stellarators
\end{keyword}

\end{frontmatter}

{\small\tableofcontents}
\bigskip

\section{Introduction}
\label{sec:intro}

Magnetically confined fusion devices --- tokamaks and stellarators ---
aim to hold a hot plasma inside a toroidal magnetic cage long enough for
fusion reactions to release net energy, and they are the central platform
of the international effort toward fusion power~\cite{freidberg2007,hazeltine2003,boozer2004}.
In both concepts the quality of the confinement is set almost entirely by the
geometry of the magnetic field, so that nearly every step of design,
optimization, and analysis rests on the ability to compute that field
accurately.

The magnetic field everywhere in such a device --- in the vacuum region
surrounding the plasma, in the tenuous scrape-off layer, and at the
first wall --- is produced by two current systems: the external field coils and the currents
flowing inside the plasma. Modern stellarator coil sets are computed with dedicated design
tools such as \textsc{regcoil}~\cite{landreman2017regcoil} (with analogous tools for
tokamaks), while plasma equilibria are obtained from equilibrium solvers such
as \textsc{vmec}~\cite{hirshman1983vmec}, \textsc{desc}~\cite{dudt2020desc}, and
\textsc{spec}~\cite{hudson2012spec}. Once these current-carrying sources are
available, the magnetic field is, in principle, given everywhere by the
Biot--Savart law.

Evaluating the plasma contribution directly is expensive because it requires
integrating the Biot--Savart kernel over the plasma volume. The
\emph{virtual-casing principle} avoids this cost by replacing the volumetric
current with equivalent surface densities on the last closed flux surface
(LCFS), whose exterior field is identical to that of the plasma~%
\cite{shafranov1972vc,Lazerson2012,hanson2015vc}. The underlying
electromagnetic foundation is the Stratton--Chu representation, which
reconstructs a source-free field from its tangential and normal traces on the
bounding surface~\cite{strattonchu1939,stratton1941}. The evaluation is thus
reduced from a three-dimensional volume integral to a two-dimensional surface
integral.

This reduction introduces the central numerical challenge addressed in this
work. The surface integrals of the virtual-casing principle---and, more
generally, of layer-potential formulations---involve the Biot--Savart kernel,
or equivalently the Laplace Green's function
$1/|\mathbf{r}-\mathbf{r}'|$ and its gradient, which become singular as the
target point $\mathbf{r}$ approaches the source surface. On the surface the
integrand is genuinely singular; nearby it is only \emph{nearly} singular,
remaining smooth but developing a sharply localized peak that fixed
quadrature rules cannot resolve. Consequently, standard
Gauss--Legendre quadrature, although highly accurate for well-separated
targets, rapidly loses accuracy in the near-surface regime.

The \emph{on-surface} case has long been addressed in magnetic-fusion
computations, where the virtual-casing field is evaluated on the source
surface itself. The classical approach, dating back to
\textsc{nestor}~\cite{merkel1987}, is \emph{singularity subtraction}: the
singular part of the kernel is integrated analytically, leaving a smooth
remainder for standard quadrature. More recently, Malhotra
et al.~\cite{malhotra2019,malhotra2020} developed high-order boundary-integral
methods for Taylor states in stellarators, including a singularity-subtraction scheme based on
a partition of unity and a polar coordinate transformation. These methods
represent the current state of the art for on-surface evaluation on general
three-dimensional toroidal surfaces.

The singularity-subtraction methods used in fusion---including the recent
high-order scheme of~\cite{malhotra2020}---are designed for \emph{on-surface}
evaluation. As noted in~\cite{malhotra2020}, accurate and efficient
off-surface quadratures for targets near arbitrary three-dimensional surfaces
remain an open problem in applied mathematics. In current fusion workflows,
off-surface field points are typically evaluated using upsampled trapezoidal
quadrature on refined surface meshes~\cite{malhotra2019}, i.e., by
locally increasing the number of quadrature nodes on the surface to reduce
quadrature errors. While effective away
from the surface, this approach becomes increasingly costly as the target
distance decreases, since the mesh resolution must track the shrinking
near-singular region. Developing an accurate and efficient near-surface
evaluation method for general three-dimensional, non-axisymmetric geometries
is the focus of this work. The proposed approach also applies directly to axisymmetric
configurations and is not restricted to closed three-dimensional surfaces.

The practical stakes of this gap are high. In magnetic-confinement studies,
the field is routinely evaluated on cylindrical $(R,\varphi,Z)$ grids and
interpolated for magnetic-field-line tracing and guiding-center orbit
calculations of energetic particles --- central to heat-load, fast-ion
confinement, and divertor analyses, as performed by codes such as
\textsc{beams3d}~\cite{beams3d}, \textsc{ascot5}~\cite{hirvijoki2014ascot},
\textsc{simple}~\cite{albert2020simple}, and \textsc{tapas}~\cite{tapas2024}.
These calculations often extend into the edge and scrape-off-layer region
just outside the LCFS, where the virtual-casing sources are located and
near-surface quadrature errors are most severe. Errors in this region
therefore directly propagate into particle trajectories and wall-load
predictions.

A second requirement is structural: the magnetic field must preserve the
solenoidal constraint $\nabla\!\cdot\!\mathbf{B}=0$ imposed by Maxwell's
equations. Violating this property introduces artificial magnetic sources
and can accumulate errors in field-line and guiding-center integrations.
Field representations assembled through interpolation or finite differences
are generally not exactly divergence-free, with the largest errors often
appearing near the singular and nearly singular regions where the field
evaluation is already most challenging. A near-surface evaluator that is
both high-order accurate and divergence-free by construction therefore
addresses both accuracy and physical consistency, which is a central aim of
this work.

These two requirements --- high-order accuracy up to the source surface and
exact preservation of the solenoidal structure --- motivate the
boundary-integral viewpoint adopted here. Many of the field-evaluation
problems above can be formulated through Poisson-type equations, for which
boundary-integral methods replace a three-dimensional volume problem by a
two-dimensional surface representation: the field at any point follows from
an explicit layer-potential evaluation. The price is a dense discretization
with naive $O(N^2)$ complexity, reduced to $O(N\log N)$ or $O(N)$ by fast
multipole methods (FMM)~\cite{greengard1997fmm}, and the fundamental
challenge of singular and nearly singular layer potentials. On-surface
targets require singular quadrature, while nearby off-surface targets produce
sharply localized kernels that defeat standard quadrature. Higher-order
rules, refinement, and oversampling can alleviate the problem but cannot
remove this kernel-driven limitation as the target approaches the surface.

The literature on singular and nearly singular layer-potential evaluation is
extensive. Common approaches combine product integration with singularity
subtraction, where the known kernel singularity is integrated analytically
against the local basis~\cite{atkinson1997,brebbia1984}. Other strategies use
kernel-specific quadratures, singularity-canceling transformations, or kernel
regularization to recover the accuracy of smooth quadrature rules
(see~\cite{kbgo2013qbx} and references therein). High-order near-boundary
evaluation has also been achieved through complex-analysis techniques
~\cite{helsing2008,bwv2015}. As emphasized in~\cite{kbgo2013qbx}, once a
potential representation is available, target derivatives follow
analytically; this is particularly relevant here since the magnetic field is
obtained from derivatives of the vector potential.

In this work, we adopt and extend the Quadrature by Expansion (QBX) method,
introduced by Kl\"ockner et al.~\cite{kbgo2013qbx} and further developed for
high-order boundary evaluation~\cite{barnett2014}. QBX overcomes the loss of
quadrature accuracy near a source surface by representing the layer potential
through local expansions about off-surface centers. The expansion coefficients
depend only on the sources, allowing the potential to be evaluated for many
targets within the convergence region. This source--target separation makes
QBX particularly compatible with fast summation methods such as the FMM.
The framework also provides a rigorous treatment of boundary limits and jump
relations~\cite{coltonkress1998}, enabling accurate evaluation on both sides of
the surface.

The principle underlying QBX is that, although a layer potential is singular
on the source surface, it is smooth at any point away from it. QBX exploits
this by placing an \emph{expansion center} at a controlled distance from the
surface, where the layer-potential integral can be evaluated accurately using
standard quadrature, and constructing a local expansion --- such as a Taylor,
Bessel--Hankel, or spherical-harmonic series --- about that center. Targets
lying within the expansion's region of convergence are then evaluated by
summing this local representation rather than by directly resolving the
near-singular kernel. Because these local expansions share the same structure
as those used in the FMM, QBX can be naturally combined with fast summation
methods~\cite{rachh2017}.

Several variants of QBX have been developed. In \emph{global} QBX, the
expansion coefficients at each center include contributions from the entire
surface. The resulting source--target separation allows the coefficients to be
reused for multiple targets and preserves compatibility with FMM acceleration
~\cite{klinteberg2014ewald}. In \emph{local} QBX, each expansion is instead
formed from a restricted neighborhood of the center, reducing the computational
cost while introducing a controllable approximation. The main challenge of
local QBX is that preserving the same compatibility with FMM requires a more
involved organization of the near and far contributions.

A related approach is the \emph{target-specific expansion} introduced by
Siegel and Tornberg~\cite{siegel2018tsqbx}, in which the expansion is constructed
directly for each target, eliminating the need for a reusable coefficient
array. A different realization was proposed by Wala and Kl\"ockner
~\cite{wala2020tse}, where the complete potential --- including both near and
far contributions --- is routed through target-specific expansions as part of
the specialized \textsc{GIGAQBX} FMM framework
~\cite{wala2018gigaqbx,wala2019gigaqbx3d}. This approach achieves high
performance but requires a substantially more complex FMM architecture, since
the expansion is no longer a simple post-processing step. In contrast, the
local target-specific construction of Siegel and Tornberg
~\cite{siegel2018tsqbx} retains a simpler organization by avoiding this global
coupling.

These ideas motivate the two acceleration strategies developed here: a local
correction that reduces the amount of near-field work while preserving the
separated source--target structure, and a target-specific contraction that
eliminates the $(P+1)^2$ coefficient storage and evaluation cost.

Against this background, we present AMPERe (Accelerated Magnetic-field
Provider via Expansion Regularization), a QBX-based evaluator for magnetostatic
Biot--Savart and virtual-casing kernels on stellarator and tokamak geometries.
The formulation is not restricted to these applications, however, and applies
to general source surfaces carrying known current or charge distributions.
AMPERe addresses a complementary problem to the boundary-integral approaches
discussed above. The methods of Malhotra et al.~\cite{malhotra2019} and Wala and Kl\"ockner~\cite{wala2018gigaqbx} are
primarily \emph{solvers}: they determine unknown surface densities by inverting
boundary integral equations. AMPERe is an \emph{evaluator}: given known surface
currents and charges --- from coil models, equilibrium calculations, or
Stratton--Chu traces of a reference field --- it computes the resulting field
at and near the source surface with high-order accuracy. These two tasks meet
at the level of near-surface evaluation, which is a fundamental component of
such boundary-integral solvers and is precisely the capability provided here
for fusion-relevant geometries and operating regimes.

The method rests on two core elements. First, it develops a spherical-harmonic
QBX formulation for the \emph{vector} magnetostatic kernel --- the Biot--Savart
field of surface currents and the virtual-casing field obtained from
Stratton--Chu traces --- with uniform evaluation across the on-, near-, and
far-surface regimes on general three-dimensional non-axisymmetric surfaces.
This addresses the off-surface near-boundary regime identified as an open
challenge in magnetic fusion by Malhotra et al.~\cite{malhotra2020}, and is
validated on the rotating-ellipse and W7-X vacuum configurations,
where $\nabla\times \mathbf{B} = 0$\cite{malhotra2019}.
Second, the field is constructed as the exact curl of
a vector potential represented by regular solid harmonics. Consequently, the
discrete QBX field is divergence-free at every expansion order and provides a
closed-form analytic gradient, eliminating the need for finite-difference
derivatives. Solenoidality is therefore enforced structurally rather than
obtained as an accuracy-dependent property; the objective is to preserve this
constraint throughout the discrete evaluation on realistic stellarator and tokamak
geometries.

These contributions extend QBX in a direction complementary to the extensive
work on its acceleration. The acceleration framework developed here --- a
target-specific near-field expansion, bounding-volume-hierarchy classification
of near and far interactions, and a header-only performance-portable
CPU/GPU implementation --- builds on established QBX acceleration techniques.
In particular, the target-specific expansion follows the approaches of
Wala--Kl\"ockner and Siegel--Tornberg~\cite{wala2020tse,siegel2018tsqbx}, while
the fast-algorithm foundations are closely related to
\textsc{gigaqbx}~\cite{wala2018gigaqbx,wala2019gigaqbx3d}. Whereas these works
focus on accelerating QBX for general layer potentials and fast summation,
AMPERe extends the framework to the vector magnetostatic setting: the
Biot--Savart and virtual-casing kernels, structure-preserving
divergence-free fields, and near-surface evaluation on realistic
three-dimensional fusion geometries. The two developments are therefore
complementary: the local-expansion interface introduced here is compatible
with future \textsc{gigaqbx}-style acceleration without modification. The
a priori error analysis used throughout --- separating truncation, quadrature,
and geometric representation errors --- builds on the foundational QBX
analyses of Epstein, Greengard, and Kl\"ockner, and of af Klinteberg and
Tornberg~\cite{egk2013,kt2017,kt2018}.

In summary, the contributions of this work are:

\begin{enumerate}
\item A spherical-harmonic QBX formulation for vector magnetostatic
Biot--Savart and virtual-casing fields on general three-dimensional surfaces.
The method resolves the near-surface off-surface regime identified as an open
problem in magnetic-fusion applications~\cite{malhotra2020}, and is validated
against the rotating-ellipse benchmark of~\cite{malhotra2019} as well as realistic
W7-X configurations.

\item A structure-preserving magnetic-field representation based on the curl of
a regular-harmonic vector potential. The resulting discrete field is exactly
divergence-free at every QBX order and provides an analytic gradient, together
with an error decomposition separating truncation, quadrature, and geometric
representation effects.
\end{enumerate}

Beyond these methodological contributions, AMPERe provides a unified QBX
evaluator supporting panel- and target-driven strategies, local correction,
and target-specific expansions. ArborX~\cite{ArborX_2020} enables near/far
classification, while the Kokkos implementation~\cite{Kokkos_2014} provides a
portable CPU/GPU execution model. The framework includes robust treatment of
complex fusion geometries through target-driven dispatch, orphan handling, and
two-sided surface classification, and provides a basis for future
boundary-integral solvers requiring accurate near-surface field evaluation.

The remainder of the paper is organized as follows.
Section~\ref{sec:problem} states the field-evaluation problem and the
Biot--Savart and virtual-casing/Stratton--Chu kernels, and explains why
Gaussian quadrature loses accuracy near the surface.
Section~\ref{sec:qbx} develops the spherical-harmonic QBX for these kernels.
Section~\ref{sec:hybrid} describes the target-driven hybrid near/far
dispatcher.
Section~\ref{sec:structure} establishes the structure-preservation results ---
the exact divergence-free theorem and the a-priori error decomposition.
Section~\ref{sec:acceleration} presents the local-correction and
target-specific accelerations and the Kokkos-based CPU/GPU portability
infrastructure.
Section~\ref{sec:results} reports the numerical results, including the
head-to-head comparison with the boundary-integral solver of
Malhotra et al.\ on their own geometries, a demonstration on realistic
W7-X stellarator geometries, and verification of the divergence-free
structure and of CPU/GPU portability.
Section~\ref{sec:software} covers software and reproducibility, and
Section~\ref{sec:conclusion} concludes.

\section{Biot--Savart magnetic-field evaluation in fusion
devices}
\label{sec:problem}

In a fusion device --- a stellarator or a tokamak --- the equilibrium field
that confines the plasma organizes it into a set of nested toroidal magnetic
surfaces: closed, twisted surfaces to which the field lines are everywhere
tangent\textcolor{red}{; however, in three-dimensional,
non-axisymmetric magnetic fields, the existence of globally nested flux
surfaces is not guaranteed. Nevertheless, when the magnetic field is sufficiently
close to an integrable configuration, KAM theory~\cite{ArnoldKozlovNeishtadt2006}
predicts the survival of a large set of invariant tori under sufficiently small
perturbations, provided appropriate non-resonance conditions are satisfied.
This theoretical picture has been experimentally confirmed in W7-X with extremely
high accuracy~\cite{pedersen2016}, although resonant surfaces can break into magnetic
islands and, with sufficiently strong perturbations, stochastic regions can develop.}
This field is produced by the currents flowing in the external coils
together with the currents flowing inside the plasma; in a stellarator the net
toroidal plasma current is designed to be small, so that the confining
rotational transform is generated mainly by the coils. Given these sources, the
field at a point $\mathbf{r}$ follows from the Biot--Savart law,
\begin{equation}\label{eq:BS}
\mathbf{B}(\mathbf{r}) = \frac{\mu_0}{4\pi}
\int_V \frac{\mathbf{J}(\mathbf{r}') \times (\mathbf{r} - \mathbf{r}')}
{|\mathbf{r} - \mathbf{r}'|^{3}}\,dV',
\end{equation}
where $\mathbf{J}$ [A\,m$^{-2}$] is the current density, $V$ the region that
carries it, and $\mu_0$ the vacuum permeability. The kernel is built from the
Green's function $G(\mathbf{r},\mathbf{r}')=1/|\mathbf{r}-\mathbf{r}'|$, which
is singular as the observation (target) point $\mathbf{r}$ approaches a source
point $\mathbf{r}'$. Integrating \eqref{eq:BS} numerically is therefore
straightforward when the target is well separated from the sources, but becomes
severely ill-behaved when the target lies on or near a current-carrying layer,
where every smooth quadrature rule loses accuracy. The same difficulty recurs
wherever the field is required close to a current-carrying boundary --- near an
antenna, or near a conducting wall on which eddy currents flow. We make this
breakdown quantitative in Section~\ref{sec:breakdown}, once the test
configurations and error metrics have been introduced.

The same near-singular behavior arises in a distinct and central task:
computing the field produced by the plasma itself in the region that
separates the plasma from the coils. \textcolor{red}{This region is not a
vacuum in the strict physical sense---the scrape-off layer has a low but
finite density and can sustain small, fluctuating currents. Nevertheless,
these currents are neglected in the present model, so the region is treated
as current-free. We therefore use the conventional term ``vacuum region'' throughout.}
In electromagnetic theory the
Stratton--Chu representation expresses the field inside a source-free region in
terms of the tangential and normal traces of the field on its bounding surface.
For the current-free and divergence-free vacuum region outside the plasma this
reduces to the virtual-casing principle: the field of the entire plasma volume
is reproduced, at any target outside the LCFS, by equivalent surface densities
supported on the LCFS alone. Writing $\hat{\mathbf{n}}$ for the outward unit
normal on the surface $S$ (the LCFS bounding the plasma volume $V$), the
equivalent surface current $\mathbf{K}$ and normal trace $\sigma$ are
\begin{equation}\label{eq:Ksigma}
\mathbf{K} = \frac{1}{\mu_0}\,\hat{\mathbf{n}}\times\mathbf{B}
\quad[\mathrm{A\,m^{-1}}],
\qquad
\sigma = \hat{\mathbf{n}}\cdot\mathbf{B}\quad[\mathrm{T}],
\end{equation}
and the field is reconstructed as the sum of a surface-current (Biot--Savart)
term and a surface-charge term,
\begin{equation}\label{eq:VCP}
\mathbf{B}(\mathbf{r}) =
\frac{\mu_0}{4\pi}\int_S
\frac{\mathbf{K}(\mathbf{r}')\times(\mathbf{r}-\mathbf{r}')}
{|\mathbf{r}-\mathbf{r}'|^{3}}\,dS'
+\frac{1}{4\pi}\int_S
\sigma(\mathbf{r}')\,\frac{\mathbf{r}-\mathbf{r}'}
{|\mathbf{r}-\mathbf{r}'|^{3}}\,dS'.
\end{equation}
The virtual-casing principle thus reduces the volume integral \eqref{eq:BS} to
a surface integral over the boundary of the plasma. It carries, however, exactly
the same $1/|\mathbf{r}-\mathbf{r}'|$ singularity: evaluating \eqref{eq:VCP} at
targets on or near $S$ is as ill-behaved as the coil problem above, and standard
quadrature fails there in the same way; we quantify this on the virtual-casing
benchmark in the numerical results (Section~\ref{sec:results}). \textcolor{red}
{Equations~\eqref{eq:Ksigma}--\eqref{eq:VCP} apply to an arbitrary
closed source surface $S$, which need not be a flux surface. For the LCFS and the
total equilibrium field, $\sigma=\hat{\mathbf n}\cdot\mathbf B=0$ by construction.
This is verified numerically for W7-X in Section~\ref{sec:res_w7x}, where $\sigma$
is at the $10^{-19}$ level; we nevertheless retain the $\sigma$ term as a consistency
check. The term is nonzero for non-flux source surfaces, such as the benchmark surfaces,
winding surfaces, coils, and conducting walls, and when the plasma-only field is used,
as in coil design.}

\textcolor{red}{It is convenient to introduce here the vector potential of a
surface current, on which the expansions of Section~\ref{sec:qbx} are built,
\begin{equation}\label{eq:Asingle}
\mathbf{A}(\mathbf{r}) = \frac{\mu_0}{4\pi}\int_S
\frac{\mathbf{K}(\mathbf{r}')}{|\mathbf{r}-\mathbf{r}'|}\,dS',
\qquad
\mathbf{B} = \nabla\times\mathbf{A}.
\end{equation}
Equation~\eqref{eq:Asingle} is a \emph{single-layer potential}: the source
density $\mathbf{K}$ is distributed on the surface $S$ and is integrated
against the Green's function $G(\mathbf{r},\mathbf{r}') =
1/|\mathbf{r}-\mathbf{r}'|$. The term ``single-layer'' refers to the fact that
the density itself multiplies the Green's function. In contrast, a
\emph{double-layer potential} involves the normal derivative of the Green's
function, $\partial G/\partial n'$, and can be interpreted as the potential
generated by a distribution of dipoles oriented normal to the surface. In the
same notation, the second term in Eq.~\eqref{eq:VCP} is $-\nabla\Phi$, where
$\Phi$ is the scalar single-layer potential generated by the surface density
$\sigma$.}

A representative and demanding instance of both difficulties is the tracing of
energetic-particle trajectories in a fusion device. Most tracer codes integrate the
equations of motion on a cylindrical $(R,\varphi,Z)$ grid; while an equilibrium
code supplies the field inside the plasma, the field at grid points outside it
must be assembled from the Biot--Savart contribution of the coils and, through
the virtual-casing principle, of the plasma. Singular and nearly singular
integrals then arise whenever a grid point falls on or close to the LCFS or a
coil --- a well-known obstacle in the community that only worsens as the nested
flux surfaces become more strongly shaped and twisted. The QBX-based evaluator
developed in this work is designed to bridge this gap.

\subsection{Model configurations and error metrics}
\label{sec:setup}

The AMPERe evaluator is assessed throughout the paper on a set of benchmark
configurations, each pairing a source surface with a reference field known
analytically or obtained to numerical convergence (the current-loop and W7-X
cases). For these benchmarks, the reference error is negligible compared with
the evaluation error, allowing the reported discrepancies to isolate the
accuracy of the near-surface field evaluation. The configurations are introduced once here and referenced by name throughout
the paper; their main parameters are summarized in Table~\ref{tab:configs},
while their reference solutions, when available, are detailed in
~\ref{app:configs}.

\paragraph{Surface discretization}
Every source surface $S$ is partitioned into $N$ curved quadrilateral panels of
characteristic size $h$. Each panel is described by an order-$g$ Lagrange map
$\mathbf{x}_p:[-1,1]^2\to S$ from a reference square --- $g=1$ (bilinear), $g=2$
(quadratic), or $g=3$ (cubic) --- so that the discrete surface reproduces the
true one to $O(h^{g+1})$; we call $g$ the \emph{geometry-representation order}.
The surface integral of a kernel $f$ is then evaluated panel by panel with an
$n_{\mathrm{gl}}\times n_{\mathrm{gl}}$ tensor-product Gauss--Legendre rule,
\begin{equation}\label{eq:disc}
\int_S f\,dS \;\approx\;
\sum_{p=1}^{N}\ \sum_{i,j=1}^{n_{\mathrm{gl}}}
w_i\,w_j\, f\!\big(\mathbf{x}_p(\xi_i,\xi_j)\big)\,J_p(\xi_i,\xi_j),
\end{equation}
where $\{\xi_i,w_i\}_{i=1}^{n_{\mathrm{gl}}}$ are the Gauss--Legendre nodes and
weights on $[-1,1]$, $J_p$ is the surface Jacobian of panel $p$, and
$n_{\mathrm{gl}}$ is the \emph{quadrature order}. These two parameters act on
independent error sources and recur throughout the paper: $n_{\mathrm{gl}}$ sets
the accuracy of the panel quadrature --- spectral in $n_{\mathrm{gl}}$ for an
integrand that is smooth at the evaluation distance --- while $g$ fixes the
geometric fidelity of the surface itself. Their separate roles are made precise
in the error analysis of Section~\ref{sec:structure}.

\paragraph{Planar sheet}
A flat square current-sheet patch of side $L=5$ carrying a uniform surface
current, discretized into $16\times16$ bilinear panels ($g=1$, $n_{\mathrm{gl}}=10$),
evaluated at QBX order $P=12$, with the analytic sheet-current field as
reference. This canonical flat test isolates the QBX behavior from surface
curvature; it exhibits the QBX-versus-Gauss--Legendre crossover and the clear
separation between the QBX truncation error and the geometric
surface-representation error.

\paragraph{Toroidal coil}
A toroidal surface current on a torus of major radius $R_0=1$ and minor radius
$a=0.3$ carrying unit total current, whose field is available in closed form.
Targets approach the surface from the exterior, toward the outer equator
$R_0+a$. It provides the near-surface breakdown demonstration of
Section~\ref{sec:breakdown}; its role in the results section is subsumed by
the planar-sheet and rotating-ellipse studies.

\paragraph{Rotating-ellipse virtual casing (dipole reference)}
A rotating-ellipse surface of major radius $R_0=3$ with an elliptical
cross-section of semi-axes $a=0.6$ and $b=0.3$, enclosing a magnetic dipole
placed at \textcolor{red}{$(R,Z,\varphi)=(3,0,0)$} with unit moment along $\hat{\mathbf z}$. The dipole field,
exact everywhere off the dipole, is cast to the Stratton--Chu surface traces
\eqref{eq:Ksigma}, and the reconstructed field \eqref{eq:VCP} is compared
against the analytic dipole.

\paragraph{Malhotra rotating-ellipse benchmark}
The rotating ellipse of Malhotra et al.~\cite{malhotra2019} (their \textcolor{red}{Eq.~(55)}:
$R_0=2$, semi-axes $a=0.7$ and $b=1.0$, cross-section rotation rate $3/2$), with
the reference field produced by a circular current loop of radius $5$ placed
well outside the domain. Because the loop is a filament, its Biot--Savart field
is a vacuum field (\textcolor{red}{where $\nabla \times \mathbf{B} = 0$}), accurate to sixteen digits at and inside the
surface; its traces \eqref{eq:Ksigma} drive the virtual-casing reconstruction.
This configuration provides a comparison against the
boundary-integral solver of~\cite{malhotra2019} in Section~\ref{sec:results},
on that solver own geometry and reference field.

\paragraph{W7-X last closed flux surface}
The last closed flux surface of the Wendelstein~7-X stellarator --- a fully
three-dimensional, non-axisymmetric geometry. The plasma field is supplied through the equilibrium pipeline
and cast to the Stratton--Chu traces \eqref{eq:Ksigma}, exercising the evaluator
on a realistic device rather than a model surface.

\paragraph{Error metrics}
For a set of target points $\{\mathbf r_i\}_{i=1}^{M}$, let $\mathbf
B(\mathbf r_i)$ be the evaluated field and $\mathbf B_{\mathrm{ref}}(\mathbf
r_i)$ the reference. We report the relative $\ell_2$ error and, for the
comparison with~\cite{malhotra2019}, the relative $L^\infty$ error,
\begin{equation}\label{eq:metrics}
E_{\ell_2} =
\frac{\Big(\sum_i \lVert \mathbf B(\mathbf r_i)-\mathbf B_{\mathrm{ref}}(\mathbf r_i)\rVert^2\Big)^{1/2}}
     {\Big(\sum_i \lVert \mathbf B_{\mathrm{ref}}(\mathbf r_i)\rVert^2\Big)^{1/2}},
\qquad
E_{\infty} =
\frac{\max_i \lVert \mathbf B(\mathbf r_i)-\mathbf B_{\mathrm{ref}}(\mathbf r_i)\rVert}
     {\max_i \lVert \mathbf B_{\mathrm{ref}}(\mathbf r_i)\rVert}.
\end{equation}
Near-surface behavior is reported against the dimensionless target--surface
distance $d/h$, where $d=\min_{\mathbf r'\in S}\lVert\mathbf r-\mathbf r'\rVert$
is the distance from the target to the source surface and $h$ is the local panel
size; we write $d/h_{\mathrm{panel}}$ where the specific panel measure matters.
The limit $d/h\to 0$ is the near-surface regime in which standard quadrature
fails.

\begin{table}[t]
\centering
\small
\color{red}
\caption{Test configurations used in this work.}
\label{tab:configs}
\setlength{\tabcolsep}{5pt}
\begin{tabular}{p{2.4cm}p{3.5cm}p{4.0cm}}
\hline
Case & Reference field & Purpose \\
\hline
Planar sheet &
  Analytic &
  QBX--GL crossover and error sources \\
Toroidal coil &
  Analytic &
  Near-surface coil field \\
Rotating ellipse &
  Analytic dipole &
  Virtual-casing accuracy and parameter convergence \\
Malhotra ellipse &
  High-accuracy current-loop solution~\cite{malhotra2019} &
  Comparison with~\cite{malhotra2019} \\
W7-X LCFS &
  Stratton--Chu reconstruction &
  Realistic 3D stellarator geometry \\
\hline
\end{tabular}
\end{table}

\subsection{Standard quadrature and its near-surface breakdown}
\label{sec:breakdown}

All of the configurations above reduce to integrating a kernel of the form
$1/|\mathbf{r}-\mathbf{r}'|$, and its gradient, over the source surface. Two
standard quadratures serve as our baselines. Composite Gauss--Legendre
quadrature --- the panel rule~\eqref{eq:disc} --- is highly accurate for smooth
integrands and is the natural choice on a panelized surface; the midpoint
(trapezoidal) rule, spectrally accurate for smooth periodic integrands,
underlies the upsampled off-surface scheme of~\cite{malhotra2019}. We do not
derive the underlying one-dimensional rules --- both are
standard~\cite{atkinson1997} --- because what matters here is not their
construction but their common failure mode near the surface.

As the target approaches the surface, the integrand ceases to be smooth on the
scale of the quadrature: the factor $1/|\mathbf{r}-\mathbf{r}'|$ develops a peak
of width comparable to the target--surface distance $d$, centred at the nearest
source point. A fixed rule with nodes spaced $\sim h$ samples this peak with only
$O(d/h)$ points, and once $d\lesssim h$ it no longer resolves the peak at all.
\textcolor{red}{The quadrature error then grows monotonically as $d/h\to0$ and
saturates at a finite level set by the source strength, whatever the order of
the rule: it remains bounded, but all significant digits are lost.}
The difficulty is analytic, inherited from the kernel, and
cannot be removed by refining the mesh or raising the quadrature order. This is the
breakdown that quadrature by expansion is designed to overcome.
Figure~\ref{fig:coil_breakdown} shows it for the toroidal coil of
Section~\ref{sec:setup}: both the midpoint rule and Gauss--Legendre are accurate
far from the winding but lose all significant digits as $d/h\to0$.
\textcolor{red}{We evaluate the toroidal component $B_\varphi$ at an outboard
midplane target, located at $R=(R_0+a)+d$, where $d>0$ is the distance from the
coil surface and $h$ is the panel size. The field is computed from the
surface-current form of Eq.~\eqref{eq:BS} using the two baseline quadratures:
the midpoint rule (\emph{direct}) and the composite panel Gauss--Legendre rule
of Eq.~\eqref{eq:disc}. The reference solution is the closed-form field of the
ideal toroidal solenoid, Eq.~\eqref{eq:torus_field}, which vanishes identically
outside the coil. Thus, for all targets in Fig.~\ref{fig:coil_breakdown}, the
error is simply $|B_\varphi|$. In the normalized units
$\mu_0/(4\pi)=I=1$ used in Appendix~\ref{app:configs}, both quadratures recover
the exterior field accurately at $d/h=10$, with the midpoint rule reaching
machine precision and Gauss--Legendre reaching a $\sim10^{-10}$ floor. In
contrast, at $d/h=10^{-4}$ both yield $|B_\varphi|\approx0.75$, compared with
the interior field $2/R\approx1.54$ at the same radius. Thus, sufficiently
close to the coil, the quadrature error can become comparable to the physical
field itself.}

\begin{figure}[t]
  \centering
  \IfFileExists{figures/fig_coil_near_surface.png}
    {\includegraphics[width=0.6\linewidth]{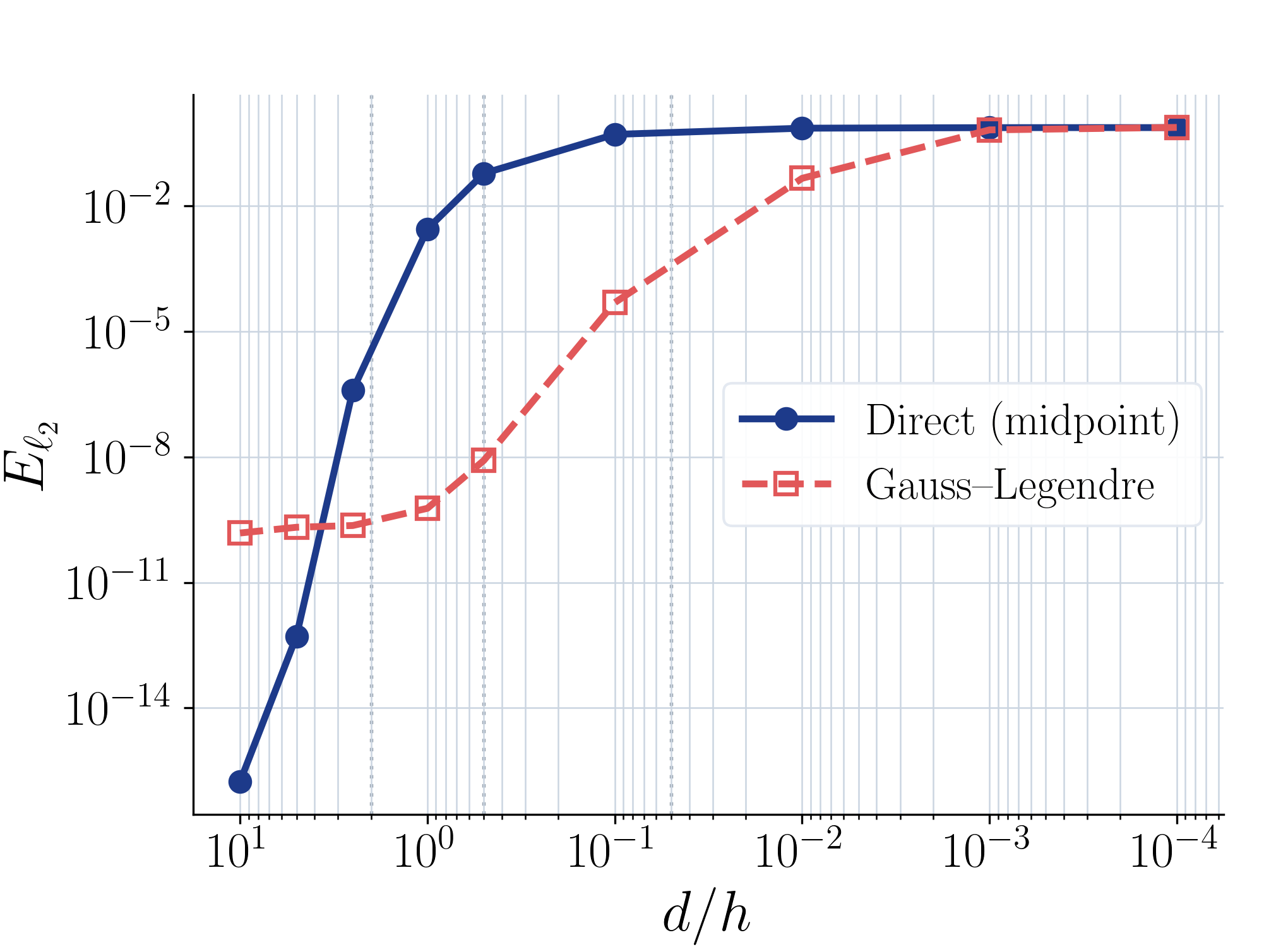}}
    {\fbox{\parbox[c][0.34\linewidth][c]{0.6\linewidth}{\centering\itshape
      [placeholder: export \texttt{plot\_near\_surface(direct, gl)} to
      \texttt{figures/fig\_coil\_near\_surface.png}]}}}
      \caption{\textcolor{red}{Near-surface breakdown of standard quadrature for the toroidal-coil
field (Section~\ref{sec:setup}). The relative $\ell_2$ field error
$E_{\ell_2}$ is shown versus the normalized target--surface distance $d/h$
for exterior targets approaching the coil surface; the distance decreases from
left to right. Both midpoint quadrature (\emph{direct}) and Gauss--Legendre
(GL) are accurate for $d/h\gtrsim1$ but lose accuracy as the target approaches
the surface, illustrating the near-singular breakdown of fixed-order
quadrature.}}
  \label{fig:coil_breakdown}
\end{figure}

\section{Spherical-harmonic quadrature by expansion}
\label{sec:qbx}

Quadrature by expansion (QBX), introduced by Kl\"ockner et
al.~\cite{kbgo2013qbx}, removes the near-surface breakdown of
Section~\ref{sec:breakdown} by replacing the near-singular quadrature with a
local series. The idea, illustrated in Figure~\ref{fig:qbx_schematic}, has three
steps: (i) place an \emph{expansion centre} $\mathbf{c}$ a small distance off
the source surface, where the field is smooth; (ii) build a local expansion of
the field about $\mathbf{c}$ from the source data, using a quadrature that is now
well separated from the surface; and (iii) evaluate the field at a target
$\mathbf{r}$ by summing that expansion. Each centre is valid within its
\emph{ball of convergence} $\mathcal{B}(\mathbf{c},r)$, which may contain the
near- and on-surface targets that direct quadrature cannot reach.

\begin{figure}[t]
  \centering
  \IfFileExists{figures/fig_qbx_schematic.png}
    {\includegraphics[width=0.6\linewidth]{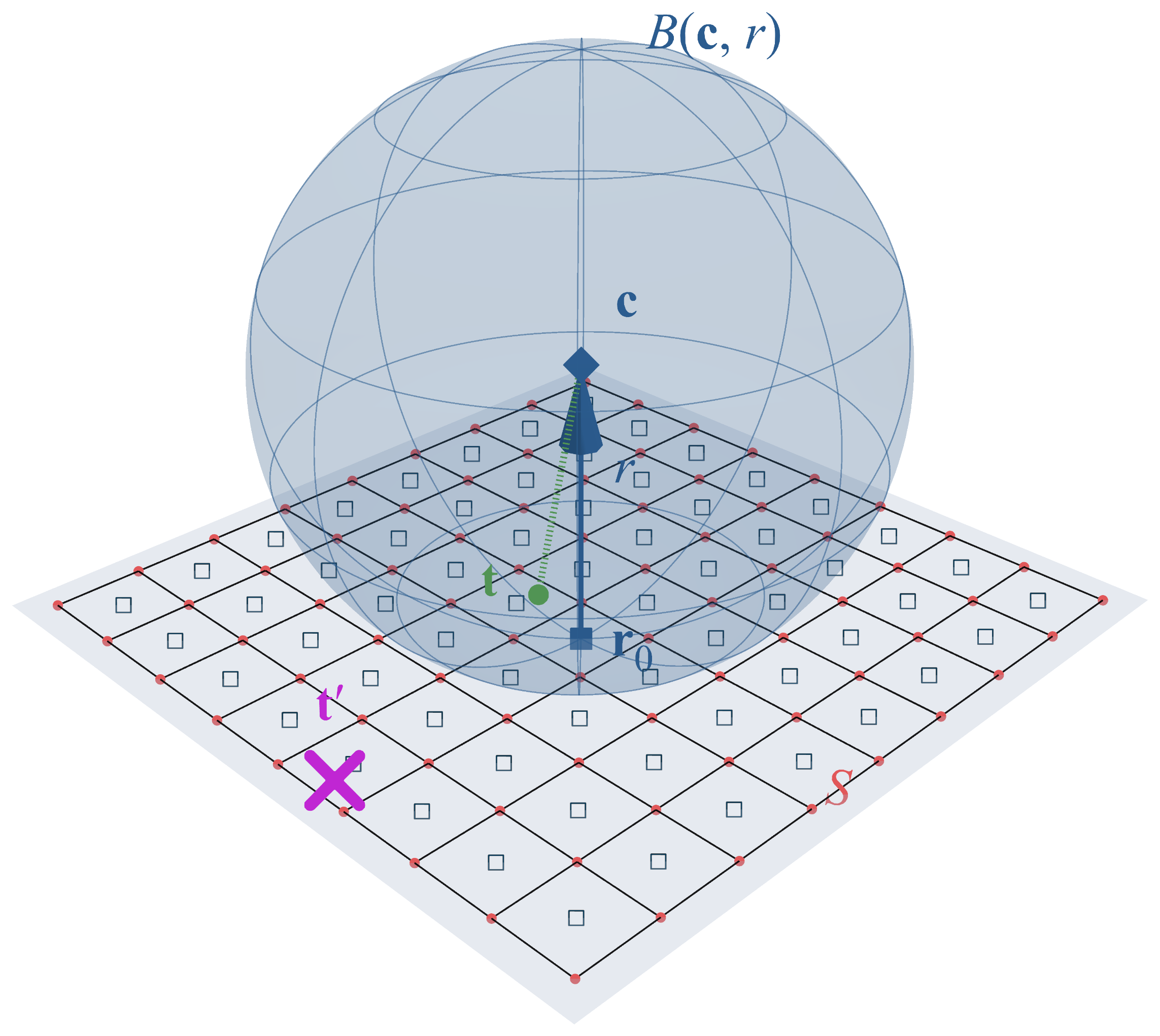}}
    {\fbox{\parbox[c][0.36\linewidth][c]{0.6\linewidth}{\centering\itshape
      [placeholder: QBX schematic --- source surface $S$, expansion centre $c$
      at offset $r$, ball of convergence, near target $t$ with $|t-c|<r$]}}}
  \caption{\textcolor{red}{Schematic illustration of QBX. An expansion center $\mathbf{c}$ (blue)
is positioned a distance $r$ from the source surface $S$ (red). The convergence
ball $\mathcal{B}(\mathbf{c},r)$ is tangent to $S$ at the source-panel centroid
$\mathbf{r}_0$. A target $\mathbf{t}$ inside the ball lies within the convergence
region, whereas $\mathbf{t}'$ outside it does not. Red points denote source
quadrature nodes, open squares panel centroids, and the filled square the
centroid $\mathbf{r}_0$ from which the expansion center is offset along the
surface normal.}}
  \label{fig:qbx_schematic}
\end{figure}

\paragraph{Centre placement.}
A centre is built from the source panel nearest the target. Let that panel have
centroid $\mathbf{r}_0$, tangent vectors $\mathbf{t}_\zeta,\mathbf{t}_\eta$, and
outward unit normal $\mathbf{n}$. The two centres and the radius of the
convergence ball are
\begin{equation}\label{eq:centre}
\mathbf{c}^{\pm} = \mathbf{r}_0 \pm r\,\mathbf{n},
\qquad
r = \sigma_1\!\big([\mathbf{t}_\zeta\ \ \mathbf{t}_\eta]\big) = \tfrac{1}{2}h,
\end{equation}
where $\sigma_1$ is the largest singular value of the $3\times2$ panel Jacobian
$[\mathbf{t}_\zeta\ \ \mathbf{t}_\eta]$, a local measure of the panel size $h$.
The two signs place a centre on each side of the surface, so a target can be
served from whichever side it lies on --- the basis of the two-sided treatment
of Section~\ref{sec:hybrid}. \textcolor{red}{The centre is thus offset by one
panel-size measure, $r=\sigma_1$, or equivalently $r=h/2$. This choice controls
the accuracy with which the coefficients~\eqref{eq:Lnm_num} are integrated. On
the source panel the integrand $S_n^{-m}(\mathbf{r}'-\mathbf{c})$ is smooth, but
its analytic continuation is singular at the centre $\mathbf{c}$, at distance
$r$ from the panel; the resulting Gauss--Legendre error therefore decays
geometrically in the ratio of the panel half-length $h/2$ to that distance. For
the panel that carries the centre, an $n_{\rm gl}$-point rule gives a
$\big(h/4r\big)^{2n_{\rm gl}+1}$ decay, so $r=h/2$ fixes this ratio at
$\tfrac12$ and yields a clean $2^{-(2n_{\rm gl}+1)}$
rate~\cite{kbgo2013qbx,wala2018gigaqbx,wala2019gigaqbx3d}.}
For a target $\mathbf{r}$ we define the convergence
ratio~\cite{kbgo2013qbx}
\begin{equation}\label{eq:qratio}
q = \frac{|\mathbf{r}-\mathbf{c}|}{r},
\end{equation}
\textcolor{red}{so that $q<1$ denotes the interior of the ball, where the series
converges. A target is nevertheless \emph{accepted} if
$q\le 1+\epsilon_{\rm tgt}$, with $\epsilon_{\rm tgt}$ a small tolerance. This
extends the set of targets assigned to a centre, but not the convergence region
of the expansion; the resulting accuracy loss is quantified below.}

\paragraph{From scalar potentials to the magnetic field.}
QBX and its fast variants have so far been developed for \emph{scalar} layer
potentials~\cite{kbgo2013qbx,siegel2018tsqbx,wala2018gigaqbx,wala2019gigaqbx3d}.
The quantities we need are instead the vector potential $\mathbf{A}$ and the
magnetic field $\mathbf{B}$. AMPERe provides two routes, both built on the same
expansion. \emph{Path~1} treats $\mathbf{A}$ as three independent scalar
potentials --- to which the scalar theory applies component by component --- and
recovers the field from $\mathbf{B}=\nabla\times\mathbf{A}$. \emph{Path~2}
expands the magnetic field directly. As shown below, the two coincide
analytically, but Path~2 evaluates the field through the closed-form gradient of
the expansion, which makes it an \emph{exact curl} and therefore exactly
divergence-free at every order (Section~\ref{sec:structure}); a component-wise
\emph{numerical} differentiation of Path~1, by contrast, destroys that
structure. To our knowledge, QBX has not previously been extended to the vector
magnetostatic (Biot--Savart / virtual-casing) kernel with a formulation that
returns the magnetic field directly in a structure-preserving form.

\paragraph{The spherical-harmonic expansion.}
Both paths rest on expanding the Green's function about the centre. Among the
admissible local bases --- Taylor, Fourier--Hankel, spherical harmonics ---
spherical harmonics are preferred for their accuracy and numerical
stability~\cite{wala2019gigaqbx3d}. With $\mathbf{r}$ the target and
$\mathbf{r}'$ a source point, the Laplace addition theorem reads
\begin{equation}\label{eq:addition}
\frac{1}{|\mathbf{r}-\mathbf{r}'|}
= 4\pi\sum_{n=0}^{\infty}\sum_{m=-n}^{n}\frac{1}{2n+1}\,
R_n^m(\mathbf{r}-\mathbf{c})\,S_n^{-m}(\mathbf{r}'-\mathbf{c}),
\qquad |\mathbf{r}-\mathbf{c}|<|\mathbf{r}'-\mathbf{c}| ,
\end{equation}
where the \emph{regular} and \emph{irregular} solid harmonics are
\begin{equation}\label{eq:solid}
\begin{gathered}
R_n^m(\mathbf{r}) = |\mathbf{r}|^{\,n}\,Y_n^m(\hat{\mathbf{r}}),
\qquad
S_n^m(\mathbf{r}) = \frac{Y_n^m(\hat{\mathbf{r}})}{|\mathbf{r}|^{\,n+1}},
\\[4pt]
Y_n^m(\theta,\phi)=\sqrt{\tfrac{2n+1}{4\pi}\,\tfrac{(n-|m|)!}{(n+|m|)!}}\;
\textcolor{red}{\mathcal{P}}_n^{|m|}(\cos\theta)\,e^{im\phi},
\end{gathered}
\end{equation}
with $n\ge0$ and $-n\le m\le n$. \textcolor{red}{Here, $\mathcal{P}$ is Legendre polynomial}. The regular part $R_n^m$ is smooth at the origin
and depends only on the target; the irregular part $S_n^m$ depends only on the
source. This target/source separation is the heart of QBX. Because $\mathbf{c}$
is pushed off the surface, every source sits at distance $\gtrsim r$ from the
centre, so $S_n^{-m}(\mathbf{r}'-\mathbf{c})$ is never evaluated at its
singularity. \textcolor{red}{The condition in Eq.~\eqref{eq:addition} requires the target to
be closer to $\mathbf{c}$ than every source point,
$|\mathbf{r}-\mathbf{c}|<d_{\min}(\mathbf{c})$, where
$d_{\min}(\mathbf{c})=\min_{\mathbf{r}'\in S}|\mathbf{r}'-\mathbf{c}|$.
Since the centre is offset by $r$ from its associated panel,
$d_{\min}\le r$, with equality when no other part of the surface lies closer
to $\mathbf{c}$. This is the case for the geometries considered here, so
$d_{\min}=r$ and the ball $\mathcal{B}(\mathbf{c},r)$ coincides with the
convergence region. For strongly curved or geometrically complex surfaces,
however, another source may lie closer to the centre, giving
$d_{\min}<r$; in that case, the convergence region is smaller than
$\mathcal{B}(\mathbf{c},r)$ and the criterion $q<1$ no longer coincides with
the nominal ball.} Truncating the series at
$n\le P$, with $P$ the \emph{expansion order}, is the QBX approximation; the
discarded tail decays as $q^{P+1}$.

\textcolor{red}{This clarifies the role of the tolerance $\epsilon_{\rm tgt}$ in
Eq.~\eqref{eq:qratio}. QBX evaluates only the finite truncation with
$n\le P$, so accepting targets with $q\le1+\epsilon_{\rm tgt}$ does not
invalidate the finite sum. It increases the truncation contribution by at most
$(1+\epsilon_{\rm tgt})^{P+1}$, which is a factor of $1.55$ for
$\epsilon_{\rm tgt}=0.05$ and $P=8$. However, in the shell
$1<q\le1+\epsilon_{\rm tgt}$ the truncation error grows with $P$, so increasing
the expansion order does not improve accuracy. The tolerance is therefore kept
small; it also prevents on-surface targets, for which $q=1$ ideally, from being
rejected due to round-off.}

\paragraph{The two field expansions}
Substituting~\eqref{eq:addition} into the single-layer vector
potential~\textcolor{red}{\eqref{eq:Asingle}} and into the corresponding
Biot--Savart integral collapses all source dependence into one set of
coefficients per centre, giving
\begin{equation}\label{eq:paths}
\begin{aligned}
\mathbf{A}(\mathbf{r}) &= \mu_0\sum_{n=0}^{P}\sum_{m=-n}^{n}
  R_n^m(\mathbf{r}-\mathbf{c})\,\mathbf{L}_n^m
  && \text{(Path 1)},\\[2pt]
\mathbf{B}(\mathbf{r}) &= \mu_0\sum_{n=0}^{P}\sum_{m=-n}^{n}
  \nabla R_n^m(\mathbf{r}-\mathbf{c})\times\mathbf{L}_n^m
  && \text{(Path 2)},
\end{aligned}
\end{equation}
with the vector expansion coefficients
\begin{equation}\label{eq:Lnm}
\mathbf{L}_n^m = \frac{1}{2n+1}\int_S \mathbf{K}(\mathbf{r}')\,
S_n^{-m}(\mathbf{r}'-\mathbf{c})\,\mathrm{d}S(\mathbf{r}') .
\end{equation}
Path~1 is the scalar single-layer expansion~\cite{wala2018gigaqbx} applied to
the three components of $\mathbf{A}$. Path~2 is the direct field expansion
introduced here: since $\mathbf{L}_n^m$ is a constant vector,
$\nabla\times(R_n^m\,\mathbf{L}_n^m)=\nabla R_n^m\times\mathbf{L}_n^m$, so
Path~2 is precisely the analytic curl of Path~1 and the two are equivalent term
by term. Evaluated this way, $\mathbf{B}$ is an exact curl of $\mathbf{A}$ ---
hence exactly divergence-free for every $P$ (Section~\ref{sec:structure}) --- and
the gradient $\nabla R_n^m$ is available in closed form from the solid-harmonic
recurrences, so the field and its Jacobian are analytic rather than
finite-differenced.

\paragraph{Forming the coefficients.}
All source and geometry information for a centre is compressed into the
$(P+1)^2$ coefficients $\mathbf{L}_n^m$, which are then reused for every target
in $\mathcal{B}(\mathbf{c},r)$ --- built once, evaluated many times. Because
$S_n^{-m}$ is non-singular at every source \textcolor{red}{point}, $\mathbf{L}_n^m$ is formed by the
ordinary panel Gauss--Legendre rule~\eqref{eq:disc},
\begin{equation}\label{eq:Lnm_num}
\mathbf{L}_n^m=\frac{1}{2n+1}\sum_{p}\sum_{i,j=1}^{n_{\rm gl}}
w_i\,w_j\,|J_p(\xi_i,\eta_j)|\,\mathbf{K}_p(\xi_i,\eta_j)\,
S_n^{-m}\!\big(\mathbf{r}'_p(\xi_i,\eta_j)-\mathbf{c}\big).
\end{equation}
Forming the coefficients of every centre by summing over all $N_p$ panels costs
$O\!\big(N_p^2\,n_{\rm gl}^2\,(P+1)^2\big)$. This global form phase is expensive
--- Section~\ref{sec:acceleration} reduces its cost --- but the source/target
decoupling it exploits is exactly what makes QBX compatible with fast multipole
acceleration.

\paragraph{Error sources.}
The QBX field carries three principal error sources: truncation of the spherical-harmonic series (set by the
expansion order $P$), the Gauss--Legendre error in forming the
coefficients~\eqref{eq:Lnm_num} (set by $n_{\rm gl}$), and the geometric error of
the order-$g$ surface representation (set by $g$); a round-off floor bounds them
from below. Because the three sources are independent, the errors are separately
controllable. We assemble them into an a-priori decomposition, alongside the
exact divergence-free property, in Section~\ref{sec:structure}, and exhibit their
separation numerically in Section~\ref{sec:results}.

The construction above assumes that each target is served by a well-placed
nearby centre. The target-driven dispatch, orphan handling, and two-sided
classification that make this robust are the subject of the next section; the
quadrature refinements and fast-algorithm machinery that make it efficient
follow in Section~\ref{sec:acceleration}.

\section{The hybrid near/far dispatcher}
\label{sec:hybrid}

AMPERe is, in its present form, primarily a magnetic-field \emph{evaluator}
for fusion applications, although the underlying machinery is more general. In
the workflows that motivate it --- such as field tabulation on cylindrical
grids or charged-particle orbit integration near the LCFS --- most evaluation
points are well separated from the source surface, where direct Gauss--Legendre
quadrature remains accurate, while only a small subset enters the near-surface
regime requiring QBX. The global (panel-driven) QBX formulation of
Section~\ref{sec:qbx}, however, constructs an expansion centre for every panel
regardless of the target distribution. It therefore incurs the
$O\!\big(N_p^2\,n_{\rm gl}^2\,(P+1)^2\big)$ form-phase cost even when only a
small fraction of targets require expansions, leading to substantial
unnecessary work.

\paragraph{Target-driven dispatch.}
The hybrid dispatcher inverts this logic. It does not precompute any centre;
instead, each target is first \emph{classified}, and a centre is formed
\emph{on the fly} only for the targets that need one. For a target $\mathbf{r}$
we locate the nearest source panel, evaluate the local panel-size measure $h$
there (Section~\ref{sec:setup}), and form the ratio $d/h$ with
$d=|\mathbf{r}-\mathbf{r}_0|$ the distance to that panel's centroid. Writing
$\tau$ for a near/far threshold (default $\tau=1$, the empirical cross-over on
the toroidal-coil benchmark of Section~\ref{sec:results}),
\begin{equation}\label{eq:dispatch}
d/h > \tau:\ \text{Direct Gauss--Legendre},
\qquad\qquad
d/h \le \tau:\ \text{target-driven QBX},
\end{equation}
where a target-driven QBX centre is placed one-shot beneath the target and its
local coefficients~\eqref{eq:Lnm} are accumulated from all panels. The resulting
cost is
\begin{equation}\label{eq:hybridcost}
O\!\Big(\big(N_{\rm far}+N_{\rm near}\,(P+1)^2\big)\,N_p\,n_{\rm gl}^2\Big),
\end{equation}
which, for the small near-surface fractions typical of the target workflows,
is far below the panel-driven $O\!\big(N_p^2\,n_{\rm gl}^2\,(P+1)^2\big)$. The
nearest-panel classification is an $O(N_p)$ scan per target as written; replacing
it with a bounding-volume-hierarchy \textcolor{red}{(BVH)} query --- bit-for-bit identical in result,
$O(\log N_p)$ per target --- is one of the accelerations of
Section~\ref{sec:acceleration}.

\paragraph{Orphan targets and the fallback.}
Placing centres beneath panel centroids does not guarantee that every near
target falls inside some convergence ball. On a structured mesh the balls of the
panels meeting at a shared vertex are each pushed a distance $r$ along their
normals, leaving an uncovered gap directly above the vertex: a target sitting
there is $\sim r\sqrt{2}$ from the nearest centre, outside every ball. Such a
target is an \emph{orphan} (the crossed point in
Figure~\ref{fig:qbx_schematic}). AMPERe detects orphans and, for each, spawns a
one-shot \emph{target-specific} centre: it takes the nearest panel's outward
normal, orients it to the requested side, and places the centre so that the
orphan lies inside the new ball ($q<1$). If even this fails --- the target is too
far from the surface for any ball, or lies on the opposite side from the one
requested --- the dispatcher falls back to Direct Gauss--Legendre on that single
target. The evaluator therefore always returns a value and  \emph{"always returns a finite value".}

Placing an expansion centre directly beneath an uncovered target corresponds
to the \emph{target-specific} QBX strategies of Siegel and Tornberg
~\cite{siegel2018tsqbx} and Wala and Kl\"ockner~\cite{wala2020tse}; the orphan
fallback itself is therefore not a new expansion scheme. The contribution of
AMPERe lies in the surrounding robustness layer required for reliable
large-scale use: automatic orphan detection, side-aware centre placement that
keeps the expansion on the target side even for tilted tangent planes of curved
meshes (where a naive ``flip toward the target'' rule can silently select the
wrong side), a guaranteed-return evaluation strategy, and integration of this
fallback mechanism across all AMPERe evaluation paths.

\paragraph{Two-sided (straddling) targets}
A further complication, generic to current-carrying surfaces, arises when near
targets exist on both sides of the source surface simultaneously, as occurs
when an evaluation grid crosses a surrounding surface and samples points both
inside and outside it. The dispatcher described above assumes a single global
orientation: each QBX centre is placed on either the outward
($+\mathbf n$) or inward ($-\mathbf n$) side of the nearest panel
(the $\mathbf c^\pm$ of \eqref{eq:centre}). Such a choice is valid only for
targets on the corresponding side of the surface; targets on the opposite side
cannot use those centres and revert to the direct quadrature path. Hence, a
single global orientation is insufficient for a straddling target cloud.

AMPERe resolves this through a two-sided dispatch using the same underlying
evaluator. Each target is first assigned a side label; the targets are then
separated into two groups, each processed with the matching QBX centre
orientation $\mathbf{c}^{\pm}$, and the resulting fields are scattered back into
a single output array.

\textcolor{red}{For the toroidal surfaces considered here, the label is obtained
from a point-in-polygon test in the poloidal plane. At the target's cylindrical
angle $\varphi$, the source surface is intersected with the plane
$\varphi=\mathrm{const}$, giving the closed contour
$\{R(\theta,\varphi),Z(\theta,\varphi)\}$, reconstructed by linear
interpolation between the two stored cross-sections bracketing $\varphi$.
The target $(R,Z)$ is classified using the even--odd ray-crossing rule: the
number of contour segments that straddle the target height $Z$ and intersect
the horizontal ray at $R_\times>R$ is counted; an odd count indicates an
interior point and an even count an exterior point. The test requires no
convexity assumption and applies to any simple closed contour, making it
suitable for the strongly shaped, non-convex W7-X cross-sections.}

\textcolor{red}{The dispatch uses only this binary label and is therefore
independent of how it is obtained. The particular poloidal test, however,
assumes that each plane $\varphi=\mathrm{const}$ intersects $S$ in a single
simple closed contour. This holds for the LCFS and winding surfaces considered
here, but not for arbitrary source geometries. When this assumption does not
hold---for example, for a sphere, an open wall, or a coil---the same binary
label can instead be supplied by another inside--outside classifier, such as
the local signed-distance test already implemented in the code, or a
solid-angle/winding-number test for a closed surface. Only the classification
step changes; the subsequent two-sided evaluation is unchanged.}

Together, the target-driven dispatch, orphan fallback, and two-sided
classification provide a robust near/far evaluation framework. The associated
accelerations --- BVH classification, refined near-panel quadrature, and
target-specific near-field evaluation --- are presented in
Section~\ref{sec:acceleration}, with their performance and robustness validated
on the benchmark configurations of Sections~\ref{sec:setup}--\ref{sec:results}.

\section{Structure preservation and error decomposition}
\label{sec:structure}

Two properties of the QBX field must be kept apart. One is \emph{accuracy} --- how
close $\mathbf{B}_P$ lies to the true field, as discussed in
Section~\ref{sec:qbx}.  \textcolor{red}{In this section, the subscript instead
denotes the QBX expansion order.} The other is \emph{structure preservation}: whether
$\mathbf{B}_P$ is, irrespective of its accuracy, a divergence- and curl- free field.
It is important to note, however, that the requirement for the field to be curl-free
arises from our focus on vacuum regions, where electric currents are absent.
The Path-2 expansion of \eqref{eq:paths} is exactly solenoidal at every order
(Proposition~\ref{prop:divfree}) yet only asymptotically irrotational
(Remark~\ref{rem:curl}), and its accuracy admits \textcolor{red}{an \emph{a priori}} decomposition
(Proposition~\ref{prop:errdec}).

\begin{proposition}[Exact solenoidality at every order]
\label{prop:divfree}
For every \textcolor{red}{expansion} order $P\ge0$ and arbitrary coefficients
$\{\mathbf{L}_n^m\}$, the Path-2 field \eqref{eq:paths} is pointwise
divergence-free away from the centre:
$\nabla\cdot\mathbf{B}_P(\mathbf{r})\equiv 0$ for $\mathbf{r}\ne\mathbf{c}$.
\end{proposition}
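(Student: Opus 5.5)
The plan is to use the identity $\nabla\cdot(\nabla f\times\mathbf{a})=0$ for any twice continuously differentiable scalar $f$ and any constant vector $\mathbf{a}$, applied term by term to the finite sum in \eqref{eq:paths}. Because $P$ is finite, divergence commutes with the summation. No convergence argument is needed, and the coefficients $\mathbf{L}_n^m$ may be arbitrary (even complex) constant vectors. This is why the statement holds for every $P$ and independently of the quadrature used in \eqref{eq:Lnm_num}.

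\textbf{Step 1: regularity.} For each fixed $n,m$, the function $f(\mathbf{r})=R_n^m(\mathbf{r}-\mathbf{c})=|\mathbf{r}-\mathbf{c}|^n Y_n^m$ is smooth wherever it is evaluated. Writing $\rho=\mathbf{r}-\mathbf{c}$ in Cartesian form, $\rho^n \mathcal{P}_n^{|m|}(\cos\theta)e^{im\phi}$ is a homogeneous polynomial of degree $n$ in $(x,y,z)$. This follows from the standard expression of the associated Legendre function together with $(x\pm iy)^{|m|}=\rho^{|m|}\sin^{|m|}\theta\, e^{\pm i|m|\phi}$. Hence $f\in C^\infty(\mathbb{R}^3)$. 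The restriction $\mathbf{r}\neq\mathbf{c}$ in the statement then only guards against the spherical-coordinate formulas used for $\nabla R_n^m$, whose angular factors are undefined at $\rho=0$. I would note that the polynomial form makes even this restriction unnecessary, while keeping the statement as given. This is the only step needing care, and I expect it to be the main (minor) obstacle: one must make sure that mixed second partials of $R_n^m$ commute, which the polynomial representation settles.

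\textbf{Step 2: the term-wise identity.} For a constant vector $\mathbf{a}$ and $f\in C^2$, write
$\nabla\cdot(\nabla f\times\mathbf{a})=\mathbf{a}\cdot(\nabla\times\nabla f)-\nabla f\cdot(\nabla\times\mathbf{a})$.
The first term vanishes by Schwarz's theorem, since $\partial_i\partial_j f=\partial_j\partial_i f$. The second vanishes because $\mathbf{a}$ is constant. Equivalently, in index notation, $\partial_k(\epsilon_{kij}\partial_i f\, a_j)=\epsilon_{kij}a_j\partial_k\partial_i f=0$ by antisymmetry of $\epsilon$ against the symmetric Hessian. For complex-valued $f$ and $\mathbf{a}$, apply this to real and imaginary parts, or directly, since the argument is linear.

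\textbf{Step 3: conclusion.} Summing over $0\le n\le P$ and $|m|\le n$, and multiplying by $\mu_0$, gives $\nabla\cdot\mathbf{B}_P\equiv0$ pointwise. As an alternative phrasing, matching the text after \eqref{eq:paths}, one may write $\mathbf{B}_P=\nabla\times\mathbf{A}_P$ with $\mathbf{A}_P$ the Path-1 truncation, which is $C^2$ by Step 1, and invoke $\nabla\cdot\nabla\times=0$. I would also remark that harmonicity of $R_n^m$ is not used here; it enters only the curl discussion of Remark~\ref{rem:curl}.
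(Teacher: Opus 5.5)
Your proof is correct and is essentially the paper's argument. The paper rewrites each term as $\nabla\times(R_n^m\mathbf{L}_n^m)$, so that $\mathbf{B}_P=\nabla\times\mathbf{A}_P$, and then invokes $\nabla\cdot\nabla\times=0$; this is exactly your alternative phrasing in Step~3. Your main line uses the dual identity $\nabla\times\nabla=0$ instead, and both reduce to the symmetry of second derivatives applied term by term to a finite sum with constant coefficient vectors. Your Step~1, showing that $R_n^m(\mathbf{r}-\mathbf{c})$ is a homogeneous polynomial, supplies the regularity the paper leaves implicit, and correctly shows that the restriction $\mathbf{r}\neq\mathbf{c}$ is not actually needed.
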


\begin{proof}
Each coefficient $\mathbf{L}_n^m$ is a constant vector, so
$\nabla R_n^m\times\mathbf{L}_n^m=\nabla\times(R_n^m\,\mathbf{L}_n^m)$. Hence
$\mathbf{B}_P$ is an exact curl, $\mathbf{B}_P=\nabla\times\mathbf{A}_P$ with the
Path-1 potential $\mathbf{A}_P=\mu_0\sum_{n=0}^{P}\sum_{m=-n}^{n}
R_n^m(\mathbf{r}-\mathbf{c})\,\mathbf{L}_n^m$, and
$\nabla\cdot(\nabla\times\mathbf{A}_P)\equiv0$ identically.
\end{proof}

The statement holds \emph{before any convergence}: $\mathbf{B}_P$ is solenoidal
even where its value is still far from the true field, so solenoidality is a
property of the representation, not of its accuracy.

\begin{remark}[The curl asymmetry]
\label{rem:curl}
The true vacuum field of a surface current is also curl-free off the surface,
but $\mathbf{B}_P$ is only asymptotically so. With $\mathbf{B}_P=\nabla\times
\mathbf{A}_P$, the identity $\nabla\times(\nabla\times\mathbf{A}_P)=
\nabla(\nabla\cdot\mathbf{A}_P)-\Delta\mathbf{A}_P$, and $\Delta R_n^m=0$ (so
$\Delta\mathbf{A}_P=0$),
\begin{equation}\label{eq:curl}
\nabla\times\mathbf{B}_P=\nabla(\nabla\cdot\mathbf{A}_P),
\qquad
\nabla\cdot\mathbf{A}_P=\mu_0\sum_{n=0}^{P}\sum_{m=-n}^{n}
\mathbf{L}_n^m\cdot\nabla R_n^m\neq0\ \text{ in general.}
\end{equation}
Thus $\mathbf{B}_P$ is exactly solenoidal but not, in general, irrotational.
The residue is sharper than a truncation bound, however:
$\nabla\cdot\mathbf{A}_P$ is a sum of homogeneous polynomials of
\emph{distinct} degrees, so if the exact $\nabla\cdot\mathbf{A}$ vanishes ---
the case whenever the source current is conserved,
$\nabla_s\cdot\mathbf{K}=0$, as it is analytically for every Stratton--Chu
and REGCOIL trace --- each degree must vanish separately, and $\mathbf{B}_P$
with \emph{exact} coefficients is exactly irrotational at every order $P$.
The measured curl of a computed expansion is therefore the imprint of the
\emph{coefficient} errors --- source sampling, panel quadrature, surface
representation --- and is essentially independent of $P$. The asymmetry is a
useful diagnostic: $\nabla\cdot\mathbf{B}_P\equiv0$ tests the implementation,
while $\nabla\times\mathbf{B}_P$ measures the fidelity of the computed source
data, a quantity to which the divergence is structurally blind
(Section~\ref{sec:res_divfree}).
\end{remark}

\begin{proposition}[A-priori error decomposition]
\label{prop:errdec}
For a target $\mathbf{t}$ served by a centre $\mathbf{c}$ with convergence ratio
$q=q(\mathbf{t})\le1$, the Path-2 error splits into three independently
controllable contributions,
\begin{equation}\label{eq:errdec}
\|\mathbf{B}_P(\mathbf{t})-\mathbf{B}(\mathbf{t})\|
\;\lesssim\;
\underbrace{C_{\rm t}\,q^{P+1}}_{\text{(i) truncation}}
\;+\;
\underbrace{C_{\rm q}(P)\,\rho^{-2n_{\rm gl}}}_{\text{(ii) coefficient quadrature}}
\;+\;
\underbrace{C_{\rm g}\,h^{g+1}}_{\text{(iii) surface representation}} ,
\end{equation}
with $\rho>1$ a geometric convergence factor defined below.
\end{proposition}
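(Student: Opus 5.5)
The plan is to introduce intermediate fields and split the error by the triangle inequality. Let $S_g$ be the order-$g$ discrete surface and $\mathbf{K}_g$ the density carried on it. Define three fields:
\begin{itemize}
\item $\mathbf{B}$, the exact field of $\mathbf{K}$ on $S$;
\item $\widetilde{\mathbf{B}}$, the exact field of $\mathbf{K}_g$ on $S_g$;
\item $\widetilde{\mathbf{B}}_P$, the Path-2 truncation \eqref{eq:paths} built with the exact coefficients $\widetilde{\mathbf{L}}_n^m$, i.e.\ \eqref{eq:Lnm} integrated over $S_g$.
\end{itemize}
Then
\[
\|\mathbf{B}_P-\mathbf{B}\|\le\|\mathbf{B}_P-\widetilde{\mathbf{B}}_P\|+\|\widetilde{\mathbf{B}}_P-\widetilde{\mathbf{B}}\|+\|\widetilde{\mathbf{B}}-\mathbf{B}\|,
\]
and each term is bounded separately. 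Since Path~2 is the analytic curl of Path~1 term by term, every bound can be derived from the scalar single-layer expansion \eqref{eq:addition} applied component-wise, at the cost of one extra derivative.

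\textbf{Truncation (i).} By \eqref{eq:addition}, $\widetilde{\mathbf{B}}-\widetilde{\mathbf{B}}_P$ is the tail $n>P$ of $\mu_0\sum\nabla R_n^m\times\widetilde{\mathbf{L}}_n^m$. Use the standard bound $\sum_m|R_n^m(\mathbf{x})S_n^{-m}(\mathbf{y})|\lesssim (2n+1)|\mathbf{x}|^n/|\mathbf{y}|^{n+1}$, which follows from the addition theorem for Legendre polynomials. Differentiating costs one factor $n/|\mathbf{t}-\mathbf{c}|$. Summing the geometric tail with $|\mathbf{t}-\mathbf{c}|=qr$ and $|\mathbf{r}'-\mathbf{c}|\ge d_{\min}=r$ gives $C_{\rm t}q^{P+1}$. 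Here $C_{\rm t}$ absorbs $\|\mathbf{K}\|_\infty$, the surface area, and a polynomial prefactor in $P$ and $(1-q)^{-1}$.

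The case $q=1$ is the delicate part. There the geometric series fails to converge absolutely, so I would invoke the smoothness-based QBX truncation estimates of Epstein--Greengard--Kl\"ockner~\cite{egk2013}. These bound the tail by the regularity of the density and the local surface, and I would state $C_{\rm t}$ as depending on that regularity. The $\lesssim$ in the statement is meant to hide exactly these $P$- and regularity-dependent factors.

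\textbf{Coefficient quadrature (ii).} The difference $\mathbf{B}_P-\widetilde{\mathbf{B}}_P$ equals $\mu_0\sum_{n\le P}\nabla R_n^m\times(\mathbf{L}_n^m-\widetilde{\mathbf{L}}_n^m)$, which is linear in the coefficient errors. On each panel the integrand $\mathbf{K}_p\,|J_p|\,S_n^{-m}(\mathbf{x}_p(\xi,\eta)-\mathbf{c})$ extends analytically in $(\xi,\eta)$. Its only singularity comes from the centre, which sits at distance $\ge r=h/2$ from the panel. The tensor-product Gauss--Legendre error for functions analytic in a Bernstein ellipse with parameter $\rho$ is $O(\rho^{-2n_{\rm gl}})$. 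I would define $\rho>1$ as the minimum over panels of the largest admissible ellipse parameter; for the panel carrying the centre this is fixed by the ratio $h/(4r)=1/2$ discussed after \eqref{eq:centre}.

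The constant must account for the fact that the ellipse norm of $S_n^{-m}$ grows like $r^{-(n+1)}$ times a factor in $n$. After multiplying by $|\nabla R_n^m(\mathbf{t}-\mathbf{c})|\lesssim n(qr)^{n-1}$ and summing over $n\le P$, this yields a $P$-dependent constant $C_{\rm q}(P)$. I expect this step to be the main obstacle. A uniform statement needs a careful lower bound on $\rho$ that holds for all panels at once, including neighbours that nearly touch $\mathbf{c}$. One must also check that the polynomial-in-$P$ growth does not cancel the $q^n$ decay, and for $q\le1$ it does not. Since the statement only claims $\lesssim$ with $C_{\rm q}(P)$, I would accept an explicit but crude constant here.

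\textbf{Surface representation (iii).} The order-$g$ Lagrange map matches $S$ to $O(h^{g+1})$ in position, and its Jacobian and density samples are perturbed by comparable amounts. For a target at positive distance from the surface, $\widetilde{\mathbf{B}}-\mathbf{B}$ is a perturbation of a smooth integral and is therefore $O(h^{g+1})$ by Taylor expansion of the kernel. Near the surface this argument breaks down, because the kernel is singular. Instead I would compare the layer potentials using the fact that the jump and principal-value parts depend only on local geometry and density, both of which are perturbed by $O(h^{g+1})$. The result is $C_{\rm g}h^{g+1}$.

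Because the three bounds depend respectively only on $P$ (through $q$), $n_{\rm gl}$ (through $\rho$), and $g$, the contributions are independently controllable, which establishes \eqref{eq:errdec}.
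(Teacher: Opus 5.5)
Your three-way split is the one the paper uses, and the paper derives nothing beyond it. It cites Epstein--Greengard--Kl\"ockner and af Klinteberg--Tornberg for (i), cites a Bernstein-ellipse Gauss--Legendre bound for (ii), and simply asserts (iii). Making the bookkeeping explicit through $\widetilde{\mathbf B}$ and $\widetilde{\mathbf B}_P$ is fine. The gap is in the near-surface half of (iii), where you go beyond the citations.

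An order-$g$ Lagrange map reproduces positions to $O(h^{g+1})$, but tangents, normals and Jacobians only to $O(h^{g})$. The displacement $\boldsymbol\delta=\mathbf x_g-\mathbf x$ has amplitude $h^{g+1}$ and oscillates on the panel scale. Near a sheet the field depends pointwise on the normal, since the one-sided limit contains $\tfrac{\mu_0}{2}\mathbf K\times\hat{\mathbf n}$, so that lost order is not recovered. Concretely, displace a flat sheet carrying uniform $\mathbf K$ parallel to the crests by $\epsilon\cos kx$. To first order, the field at height $z$ changes by $\tfrac{\mu_0}{2}|\mathbf K|\,\epsilon k\,e^{-kz}$. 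For $\epsilon\sim h^{g+1}$ and $k\sim 1/h$ this is $O(h^{g})$ throughout the QBX range $z\lesssim h$.

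So your argument gives at best $C h^{g}$ near the surface. An $h^{g+1}$ bound needs extra structure, e.g.\ Stratton--Chu traces recomputed on $S_g$ with its own normals, so that the extinction identity holds on the discrete surface. Even in the far field, Taylor-expanding the kernel is not enough. The $O(h^{g})$ Jacobian error reaches $h^{g+1}$ only after you integrate by parts onto $\boldsymbol\delta$, which is continuous across panel edges.

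There is also a side issue. $\widetilde{\mathbf B}$ jumps across $S_g$, not across $S$. A target in the $O(h^{g+1})$ band between the two surfaces therefore has $\widetilde{\mathbf B}-\mathbf B\approx\pm\mu_0\mathbf K\times\hat{\mathbf n}$, which is $O(1)$. This is not academic, since the paper's sweeps reach $d/h=10^{-4}$. You must assume that $\mathbf t$ lies on the centre's side of both surfaces, or compare against the one-sided continuation.

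In (ii), your claim that the $P$-growth is polynomial and cannot overcome $q^n$ is false. On any Bernstein ellipse $E_\rho$ with $\rho>1$, the complexified distance $|(\mathbf x_p-\mathbf c)\cdot(\mathbf x_p-\mathbf c)|^{1/2}$ drops to $\beta r$ with $\beta=\beta(\rho)<1$. As $\rho$ tends to the largest admissible value, $\beta\to0$, so you cannot use that value as your definition of $\rho$. Hence $\sup_{E_\rho}|S_n^{-m}|$ grows like $(\beta r)^{-(n+1)}$. Against $|\nabla R_n^m|\lesssim n(qr)^{n-1}$, this leaves a factor $(q/\beta)^n$, which is geometric in $P$ for $q$ near $1$. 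This is the familiar fact that QBX oversampling must grow with $P$.

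The stated bound survives only because $C_{\rm q}(P)$ is allowed to be arbitrary. Your closing claim that (ii) depends on $n_{\rm gl}$ alone is wrong, however. The terms are controllable only sequentially, with $n_{\rm gl}$ or the subdivision factor $k$ chosen after $P$.

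There is a smaller slip in (i): the derivative costs one power of $q$. The tail is bounded by a multiple of $r^{-2}\sum_{n>P}nq^{n-1}\le(P+1)q^{P}/\big((1-q)^2r^2\big)$. So $C_{\rm t}$ must also absorb $q^{-1}$ and $r^{-2}\sim h^{-2}$.
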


We do not re-derive these rates but assemble established estimates matched to
AMPERe's discretization. Term~(i), the local-expansion truncation, converges
geometrically inside the ball at rate $q^{P+1}$~\cite{egk2013,kt2017}. Term~(iii),
the displacement of the order-$g$ discrete surface from the true one, is
$O(h^{g+1})$ and independent of $P$. Term~(ii) is the Gauss--Legendre error in forming the
coefficients~\eqref{eq:Lnm_num}: on each panel the integrand
$S_n^{-m}(\cdot-\mathbf{c})$ is analytic, its nearest singularity being the
centre $\mathbf{c}$ itself, so $n_{\rm gl}$-point Gauss--Legendre converges
geometrically as $\rho^{-2n_{\rm gl}}$, where $\rho>1$ is the Bernstein-ellipse
parameter of the reference-interval quadrature~\cite{trefethen2008}; the
corresponding error estimate for QBX coefficient quadrature is that of af
Klinteberg and Tornberg~\cite{kt2017}.

\begin{remark}[Structure preservation is independent of accuracy]
\label{rem:orthogonal}
Propositions~\ref{prop:divfree} and~\ref{prop:errdec} describe independent
properties of the representation: the discrete field satisfies
$\nabla\cdot\mathbf{B}_P\equiv0$ at every truncation order, irrespective of the
magnitude of the approximation error in \eqref{eq:errdec}. Thus, even when the
field approximation is not yet converged, it remains an exactly solenoidal
field. This distinction is important for applications such as field-line
tracing and guiding-centre integration, where spurious $\nabla\cdot\mathbf B$
introduces artificial magnetic sources that violate the underlying
flux-conservation structure and can contaminate long-time particle trajectories. The
spherical-harmonic curl representation therefore preserves a structural
property that is generally lost when the field is reconstructed component by
component through interpolation or finite differences.
\end{remark}

Realizing this structure-preserving representation for the vector magnetostatic
kernel, rather than a scalar potential kernel, is one of the central \textcolor{red}{contributions}
of this work.

\section{Acceleration and portability}
\label{sec:acceleration}

AMPERe achieves performance through two complementary ingredients:
algorithmic reduction of the near-surface evaluation cost and a
performance-portable implementation. The former reorganizes the QBX
calculation so that the same field can be obtained with substantially fewer
operations; the latter is provided by a single header-only implementation
based on the Kokkos framework~\cite{Kokkos_2014}, which executes unchanged on
CPUs, GPUs, and other supported architectures.

Beyond the hybrid strategy of Section~\ref{sec:hybrid}, AMPERe provides two
additional acceleration mechanisms. The \emph{local-correction} formulation
preserves the source/target separation of the standard QBX construction:
only a localized subset of the near interactions is corrected explicitly,
while the remaining contribution retains the structure required by fast
algorithms such as the fast multipole method (FMM) and boundary-integral
solvers. The \emph{target-specific} formulation takes the opposite approach:
it evaluates the same truncated expansion by contracting the coefficients
directly for each target, eliminating the $(P+1)^2$ coefficient set in favour
of $(P+1)$ partial sums. This preserves the field to round-off accuracy but
sacrifices the source/target separability exploited by FMM-based methods.
The two approaches are therefore complementary and are available through the
same interface described in Section~\ref{sec:hybrid}.
\textcolor{red}{Table~\ref{tab:terminology} fixes the naming used for these
four ingredients throughout the paper and in the figure legends: the first two
are \emph{drivers} --- alternative ways to organize the same QBX evaluation ---
while the last two are \emph{accelerations} applied within the target-driven
driver.}

\begin{table}[t]
\centering
\small
\color{red}
\caption{Terminology used throughout the paper. The two drivers are
alternative organizations of the same QBX evaluation; the two accelerations
operate within the target-driven driver.}
\label{tab:terminology}
\setlength{\tabcolsep}{5pt}
\begin{tabular}{llp{5.6cm}l}
\hline
Role & Name & Mechanism & Introduced \\
\hline
Driver & panel-driven QBX &
  one centre per panel, all expansions formed upfront &
  \S\ref{sec:qbx} \\
Driver & target-driven QBX &
  one centre per near target, formed on demand; near/far dispatch &
  \S\ref{sec:hybrid} \\
Acceleration & local correction &
  near/far split of the panel sum: QBX correction on the near set only &
  \S\ref{sec:acceleration} \\
Acceleration & target-specific evaluation (TSE) &
  reordered summation of the same expansion; no coefficient array &
  \S\ref{sec:accel_tse} \\
\hline
\end{tabular}
\end{table}

This section introduces the underlying algorithms; their performance and
accuracy are assessed in Section~\ref{sec:results}. Algorithm~\ref{alg:ampere}
summarizes the complete AMPERe pipeline, including geometric preprocessing,
the two-sided target-driven dispatcher of Section~\ref{sec:hybrid}, and the
four interchangeable evaluation strategies. Each step is annotated with its
mathematical operation, software realization, and associated correctness
guarantee, while Table~\ref{tab:complexity} summarizes the resulting operation
counts.

\begin{algorithm}[!tp]
\caption{The AMPERe evaluation pipeline: the magnetostatic field
$\mathbf{B}$ of known surface densities $(\mathbf{K},\sigma)$ on an
arbitrary three-dimensional surface
(Sections~\ref{sec:qbx}--\ref{sec:acceleration}). Every phase is a uniform
Kokkos kernel, portable CPU/GPU without code change
(\S\ref{sec:accel_kokkos}); the marginal comments name the component or
guarantee attached to each step. The surface-charge term of~\eqref{eq:VCP}
is evaluated by the same phases through the gradient kernels acting on
$\sigma$.}
\label{alg:ampere}
\footnotesize
\begin{algorithmic}[1]
\Require order-$g$ panel surface ($N_p$ panels, $n_{\rm gl}^2$
  Gauss--Legendre nodes each,~\eqref{eq:disc}); densities
  $\mathbf{K},\sigma$~\eqref{eq:Ksigma}; targets $T$; expansion order $P$;
  threshold $\tau$; slack $\epsilon_{\rm tgt}$; strategy
  $\mathcal{S}\in\{\textsc{pd},\textsc{td},\textsc{lc},\textsc{tse}\}$
  (panel-driven, target-driven, local-correction, target-specific);
  near-set factor $\alpha$; subdivision factor $k$
\Ensure $\mathbf{B}(\mathbf{t})$ for \emph{every} target
  (guaranteed return, \S\ref{sec:hybrid}); each QBX-served value is an
  exact curl, $\nabla\!\cdot\!\mathbf{B}\equiv0$ at machine precision for
  all $P$ (Prop.~\ref{prop:divfree}), with analytic $\nabla\mathbf{B}$
\Statex\textbf{Phase A --- geometric setup (once).}
\State panel index: centroids $\mathbf{r}_{0,p}$, normals $\mathbf{n}_p$,
  sizes $h_p=\sigma_1([\mathbf{t}_\zeta\ \mathbf{t}_\eta])$~\eqref{eq:centre}
  \Comment{precomputed, \S\ref{sec:accel_exact}}
\State build BVH over $\{\mathbf{r}_{0,p}\}$
  \Comment{ArborX; $O(N_p)$ build, $O(\log N_p)$/query, bit-identical}
\State \textbf{if} $\mathcal{S}=\textsc{pd}$: place
  $\mathbf{c}_p^{\pm}=\mathbf{r}_{0,p}\pm\tfrac{h_p}{2}\mathbf{n}_p$, form
  all $\mathbf{L}_n^m$~\eqref{eq:Lnm_num}
  \Comment{form once, reuse per ball}
\Statex\textbf{Phase B --- two-sided classification (\S\ref{sec:hybrid}).}
\State split $T$ into outer $T^{+}$/inner $T^{-}$ (signed poloidal
  point-in-polygon test)
  \Comment{straddling targets}
\State \textbf{for} each side $s\in\{+,-\}$ with $T^{s}\neq\emptyset$:
  run Phases C--E on $T^{s}$
\Statex\textbf{Phase C --- near/far dispatch~\eqref{eq:dispatch}.}
\State batched BVH query: nearest panel $p(\mathbf{t})$,
  $d=|\mathbf{t}-\mathbf{r}_{0,p}|$
\State $F\gets\{\mathbf{t}:d/h_p>\tau\}$;\quad
  $N\gets\{\mathbf{t}:d/h_p\le\tau\}$
  \Comment{segregated streams: SIMD/GPU-uniform}
\Statex\textbf{Phase D --- far stream.}
\State \textbf{for all} $\mathbf{t}\in F$:
  $\mathbf{B}(\mathbf{t})\gets\mathrm{DGL}_{\rm all}(\mathbf{t})$
  \Comment{smooth rule~\eqref{eq:disc} is accurate here}
\Statex\textbf{Phase E --- near stream.}
\For{$\mathbf{t}\in N$}
  \State $\mathbf{c}\gets$ nearest precomputed $\mathbf{c}_p^{\,s}$
    (\textsc{pd}) or one-shot
    $\mathbf{r}_{0}+s\tfrac{h}{2}\mathbf{n}$;\quad
    $q\gets|\mathbf{t}-\mathbf{c}|/(h/2)$
  \State \textbf{if} $q>1+\epsilon_{\rm tgt}$: re-place $\mathbf{c}$
    beneath $\mathbf{t}$, side-aware on side $s$
    \Comment{orphan rescue, \S\ref{sec:hybrid}}
  \State \textbf{if} still $q>1+\epsilon_{\rm tgt}$:
    $\mathbf{B}(\mathbf{t})\gets\mathrm{DGL}_{\rm all}(\mathbf{t})$,
    \textbf{next} $\mathbf{t}$
    \Comment{fallback: a value,  "always returns a finite value"}
  \State subdivide panels nearest $\mathbf{c}$ into $k\times k$ sub-cells
    \Comment{controls term (ii) of~\eqref{eq:errdec}}
  \State \textbf{case} $\textsc{pd},\textsc{td}$: reuse/form
    $\mathbf{L}_n^m$~\eqref{eq:Lnm_num};
    $\mathbf{B}(\mathbf{t})\gets\mu_0\!\sum_{n\le P,m}\!
      \nabla R_n^m(\mathbf{t}-\mathbf{c})\times\mathbf{L}_n^m$
    \Comment{separable: FMM/BIE substrate}
  \State \textbf{case} $\textsc{lc}$:
    $\mathbf{B}(\mathbf{t})\gets\mathrm{DGL}_{\rm all}(\mathbf{t})
      +\sum_{p\in\mathcal{N}(\mathbf{t})}
      \big[\mathrm{QBX}_p-\mathrm{DGL}_p\big]$~\eqref{eq:lc},
    $\mathcal{N}(\mathbf{t})$ by BVH radius query
    $\alpha h$~\eqref{eq:nearset}
    \Comment{recovers \textsc{td} as $\alpha\to\infty$}
  \State \textbf{case} $\textsc{tse}$:
    $\mathbf{B}(\mathbf{t})\gets\frac{\mu_0}{4\pi}\sum_{q}
      \nabla_{\mathbf{t}}\,g_P(\mathbf{t},\mathbf{r}'_q)\times
      \widetilde{\mathbf{K}}_q$~\eqref{eq:tse_fields}
    \Comment{Legendre recurrence; round-off $=$ \textsc{td}}
\EndFor
\Statex\textbf{Phase F --- merge.}
\State scatter the $T^{+}$ and $T^{-}$ results into one output array;
  \textbf{return} $\mathbf{B}$ (and analytic $\nabla\mathbf{B}$ on request)
\end{algorithmic}
\end{algorithm}

\subsection{Exact algorithmic optimizations and BVH classification}
\label{sec:accel_exact}

Three exact transformations reduce the evaluation cost without changing the
computed field. First, all geometric quantities required by the dispatcher ---
the panel centroids and the panel-size measure $h$ from \eqref{eq:centre} ---
are precomputed once and stored with the panel data. The per-target stage
therefore avoids repeated geometric evaluations and only performs the final
distance-based classification.

Second, the dispatch is \emph{segregated}: targets are partitioned into far-
and near-field streams, which are processed independently by uniform Direct
Gauss--Legendre and target-driven QBX kernels, respectively. This avoids
mixed execution paths and can improve vectorization and GPU utilization.

Third, the nearest-panel search is accelerated. A linear search over all panel
centroids requires $O(N_p)$ work per target, resulting in an overall
$O(N_tN_p)$ cost. Instead, a bounding-volume hierarchy constructed once
over the panel centroids using the ArborX library~\cite{ArborX_2020} reduces
the cost to
\begin{equation}\label{eq:bvhcost}
\underbrace{O(N_p)}_{\text{one-time construction}}
\;+\;
\underbrace{O(N_t\log N_p)}_{\text{batched nearest-panel queries}} .
\end{equation}
The BVH returns only the nearest-panel index. The corresponding distance and
the ratio $d/h$ required by the dispatch criterion \eqref{eq:dispatch} are
then evaluated explicitly.

\subsection{Local-correction evaluation}
\label{sec:accel_lc}

For a near target, the near-singular behavior of the kernel is localized to
only a small subset of panels surrounding the target; contributions from the
remaining panels are smooth and can be evaluated accurately by ordinary
Gauss--Legendre quadrature. Nevertheless, the target-driven QBX branch of
Section~\ref{sec:hybrid} applies the expansion machinery to the complete panel
set, accumulating QBX coefficients from all $N_p$ panels and therefore paying
the $(P+1)^2$ contraction cost even for well-separated interactions.

The local-correction formulation exploits this separation. The panel set is
split into a near region $\mathcal{N}(\mathbf{x})$ containing the panels that
require QBX treatment and a complementary far region
$\overline{\mathcal{N}}(\mathbf{x})$ evaluated directly by Gauss--Legendre
quadrature. Using the decomposition
\begin{equation}\label{eq:lc}
\mathbf{B}(\mathbf{x})
=\underbrace{\mathrm{DGL}_{\rm all}(\mathbf{x})}_{\text{smooth rule, all panels}}
\;+\;\sum_{p\in\mathcal{N}(\mathbf{x})}
\Big[\mathrm{QBX}_p(\mathbf{x})-\mathrm{DGL}_p(\mathbf{x})\Big]
\end{equation}
where $\mathrm{DGL}_p$ denotes the direct Gauss--Legendre contribution of panel
$p$ from \eqref{eq:disc}, and $\mathrm{QBX}_p$ its contribution to the local
expansion \eqref{eq:paths}. The direct contributions of the near panels cancel
explicitly and are replaced by their QBX-regularized values; the far panels
remain evaluated by the cheaper direct rule. Thus, the expansion cost scales
with the number of genuinely near interactions rather than with the total
surface size. The near set is selected by a radius query against the same BVH,

\begin{equation}\label{eq:nearset}
\mathcal{N}(\mathbf{x})=\{\,p:\ \lVert\mathbf{c}_p-\mathbf{x}\rVert<R_{\rm split}\,\},
\qquad R_{\rm split}=\alpha\,h ,
\end{equation}
where $\mathbf{c}_p$ denotes the panel centroid and $\alpha$ is a
dimensionless radius factor controlling the size of the local correction
region. The only additional error introduced with respect to the QBX
decomposition of \eqref{eq:errdec} is the direct-quadrature error from the
nearest excluded panel. Since this panel lies at a distance
$\gtrsim R_{\rm split}$ from the target, its contribution is smooth and the
associated quadrature error decreases geometrically as $\alpha$ is increased.
In the limit $\alpha\rightarrow\infty$, the local-correction formulation
recovers the full-QBX result, while the computational advantage decreases.
The cost per near target is reduced from the hybrid baseline to
\begin{equation}\label{eq:lccost}
\underbrace{O\big(N_p\,n_{\rm gl}^2\big)}_{\text{$N$-body floor}}
\;+\;
\underbrace{O\big(\overline{N}_{\mathcal N}\,n_{\rm gl}^2\,(P+1)^2\big)}_{\text{sparse correction}},
\qquad \overline{N}_{\mathcal N}=O(1),
\end{equation}
The $(P+1)^2$ expansion cost is therefore no longer multiplied by the total
number of panels: a near target requires QBX only for the small local
correction region, while the remaining panels contribute through the same
direct smooth evaluation used for well-separated targets. The remaining
$O(N_p n_{\rm gl}^2)$ direct sweep is the classical $N$-body bottleneck; its
removal requires replacing the far-field contribution
$\mathrm{DGL}_{\rm all}$ by a fast summation method. The decomposition in
\eqref{eq:lc} is therefore naturally compatible with established fast
frameworks, including the FMM~\cite{greengard1997fmm} and QBX--FMM
couplings~\cite{rachh2017,wala2018gigaqbx}, since these methods can accelerate
the smooth far-field term without modifying the local QBX correction.

\begin{figure}[t]
  \centering
  \IfFileExists{figures/fig_lc_spheres.png}
    {\includegraphics[width=0.62\linewidth]{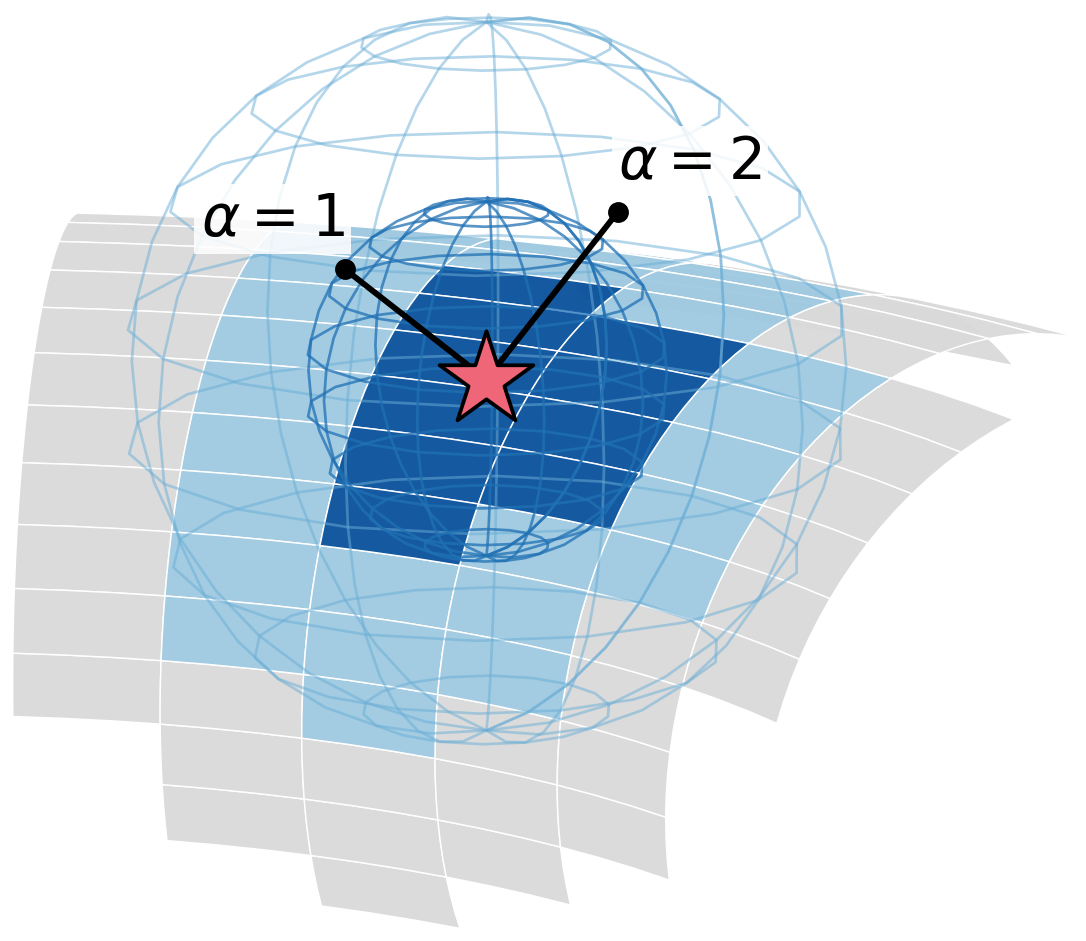}}
    {\fbox{\parbox[c][0.4\linewidth][c]{0.7\linewidth}{\centering\itshape
      [placeholder: local-correction near-set geometry ---
      torus, near target, nested $R_{\rm split}=\alpha h$ spheres,
      centroids split near/far; export to
      \texttt{figures/fig\_lc\_spheres.png}]}}}
\caption{\textcolor{red}{Geometry of the local-correction split \eqref{eq:lc} on the
toroidal-coil surface introduced in Section~\ref{sec:setup}. The tiles represent
the source-surface panels, viewed from slightly off the surface normal. A
near-surface target $\mathbf{x}$ (red star) is associated with two spheres of
radius $R_{\rm split}=\alpha h$, shown for $\alpha=1$ and $\alpha=2$ and
indicated by the corresponding radius leaders. A panel is assigned to the near
set $\mathcal{N}(\mathbf{x})$ if its centroid lies within the sphere. The
dark-blue panels are selected for $\alpha=1$, the light-blue panels are added
when $\alpha$ is increased to $2$, and the grey panels lie outside both
spheres. Panels in $\mathcal{N}(\mathbf{x})$ are evaluated with the QBX
correction in~\eqref{eq:lc}, whereas all remaining panels are evaluated using
the smooth-panel quadrature alone. Thus, increasing $\alpha$ enlarges the
region treated by the local correction while increasing its computational
cost. The resulting accuracy--cost trade-off is quantified in
Fig.~\ref{fig:w7x_accel}.}}
  \label{fig:lc_spheres}
\end{figure}

\paragraph{Near-panel subdivision.}
A separate accuracy-driven refinement targets the coefficient quadrature.
When a source panel is close to the expansion centre, its Bernstein
parameter \eqref{eq:errdec} is minimized and the panel dominates term (ii) of
\eqref{eq:errdec}. AMPERe therefore subdivides such panels into $(k\times k)$
sub-cells, reducing the effective panel size to $(h/k)$ yielding geometric
improvement of the Gauss--Legendre error, while increasing quadrature
cost only locally by $(k^2)$. The same refinement is used for both coefficient
and target-specific paths.

\subsection{Target-specific evaluation}
\label{sec:accel_tse}
The second method accelerates the near branch by changing the \emph{order of
summation}, not the underlying expansion. In the coefficient path of
Section~\ref{sec:qbx}, each of the $(N_s=N_p n_{\rm gl}^2)$ source nodes
contributes to $((P+1)^2)$ coefficients, requiring
$(O(N_s(P+1)^2))$ work per centre. For target-driven dispatch, each target has
a private centre, so this coefficient array is constructed, contracted, and
discarded without reuse. The \emph{target-specific expansion} (TSE) eliminates
this intermediate array: using the Legendre addition theorem, the
$(m)$-summation collapses into a single Legendre polynomial, converting the
truncated addition theorem \eqref{eq:addition} into a one-dimensional kernel
for a source $(\mathbf{r}')$ and target $(\mathbf{r})$ sharing centre
$(\mathbf{c})$,

\begin{equation}\label{eq:tse}
4\pi\sum_{n=0}^{P}\sum_{m=-n}^{n}\frac{1}{2n+1}\,
R_n^m(\mathbf{r}-\mathbf{c})\,S_n^{-m}(\mathbf{r}'-\mathbf{c})
\;=\;
\sum_{n=0}^{P}\frac{u^{\,n}}{s^{\,n+1}}\,P_n(\cos\gamma)
\;\equiv\;
g_P(\mathbf{r},\mathbf{r}';\mathbf{c}) ,
\end{equation}
with
\begin{equation}\label{eq:tse_vars}
u=|\mathbf{r}-\mathbf{c}|,\qquad
s=|\mathbf{r}'-\mathbf{c}|,\qquad
\cos\gamma=\frac{(\mathbf{r}-\mathbf{c})\cdot(\mathbf{r}'-\mathbf{c})}{u\,s},
\end{equation}
and $P_n$ the Legendre polynomials, evaluated by their stable three-term
recurrence. \textcolor{red}{This step destroys the factorisation underlying
\eqref{eq:paths}--\eqref{eq:Lnm}. In \eqref{eq:solid}, $R_n^m$ depends only on
the target and $S_n^{-m}$ only on the source, whereas $g_P$ couples them
through $\cos\gamma$. Hence, the target-independent coefficients
$\mathbf{L}_n^m$ of~\eqref{eq:Lnm} have no counterpart here; the consequences
are discussed below.} The identity \eqref{eq:tse} is \emph{exact at every truncation
order} $P$, so TSE is not an approximation of the QBX
field but the identical field summed differently; the two implementations
agree to round-off, and the structure-preservation results of
Section~\ref{sec:structure} carry over identically. Every quantity the evaluator
returns reduces to $g_P$ and its target gradient: with
$\widetilde{\mathbf{K}}_q = w_q J_q \mathbf{K}(\mathbf{r}'_q)$ the weighted
source samples of \eqref{eq:disc},
\begin{equation}\label{eq:tse_fields}
\mathbf{A}(\mathbf{r})=\frac{\mu_0}{4\pi}\sum_{q}
\widetilde{\mathbf{K}}_q\,g_P(\mathbf{r},\mathbf{r}'_q\textcolor{red}{;\mathbf{c}}),
\qquad
\mathbf{B}(\mathbf{r})=\frac{\mu_0}{4\pi}\sum_{q}
\nabla_{\mathbf{r}}\,g_P(\mathbf{r},\mathbf{r}'_q\textcolor{red}{;\mathbf{c}})
\times\widetilde{\mathbf{K}}_q,
\end{equation}
\textcolor{red}{where $q$ runs over the source quadrature nodes
of~\eqref{eq:disc}, while the target $\mathbf{r}$ and its centre $\mathbf{c}$
are fixed throughout the sum; only the source argument carries the summation
index.} The surface-charge ($\sigma$) term follows likewise through
$\nabla_{\mathbf{r}}\,g_P$. A source now contributes to a target in $O(P+1)$
operations instead of $O\!\big((P+1)^2\big)$ --- a factor $P+1$ fewer per
source--target pair.

The price of TSE is structural rather than numerical. In the target-specific
construction, the effective coefficient
$P_n(\cos\gamma)/s^{\,n+1}$ depends explicitly on the target through the angle
$\gamma$. Formation and evaluation are therefore fused: each target receives its
own expansion, and the resulting representation cannot be translated or reused
for a group of targets. Unlike the coefficient formulation, it does not provide
the separable source representation required by a conventional FMM far field,
where a single expansion must represent the influence of many targets.

This distinction motivates retaining both paths behind the same AMPERe
interface. The coefficient path provides the separable foundation on which
FMM acceleration and boundary-integral solvers can be built, whereas TSE
provides the lowest-cost direct near-field evaluator. Target-specific
expansions for three-dimensional layer potentials were introduced by Siegel
and Tornberg~\cite{siegel2018tsqbx}; the addition-theorem formulation
\eqref{eq:tse} implemented here follows Eq.~(18) of Wala and
Kl\"ockner~\cite{wala2019gigaqbx3d}, who developed its use within the global
QBX framework~\cite{wala2020tse}. Our contribution is to extend this approach
to the vector magnetostatic kernel \eqref{eq:tse_fields} and demonstrate that
the target-specific contraction reproduces the coefficient formulation to
round-off accuracy. Figure~\ref{fig:tse_schematic} contrasts the two data
flows.

\begin{figure}[t]
  \centering
  \IfFileExists{figures/fig_tse_schematic.png}
    {\includegraphics[width=1.0\linewidth]{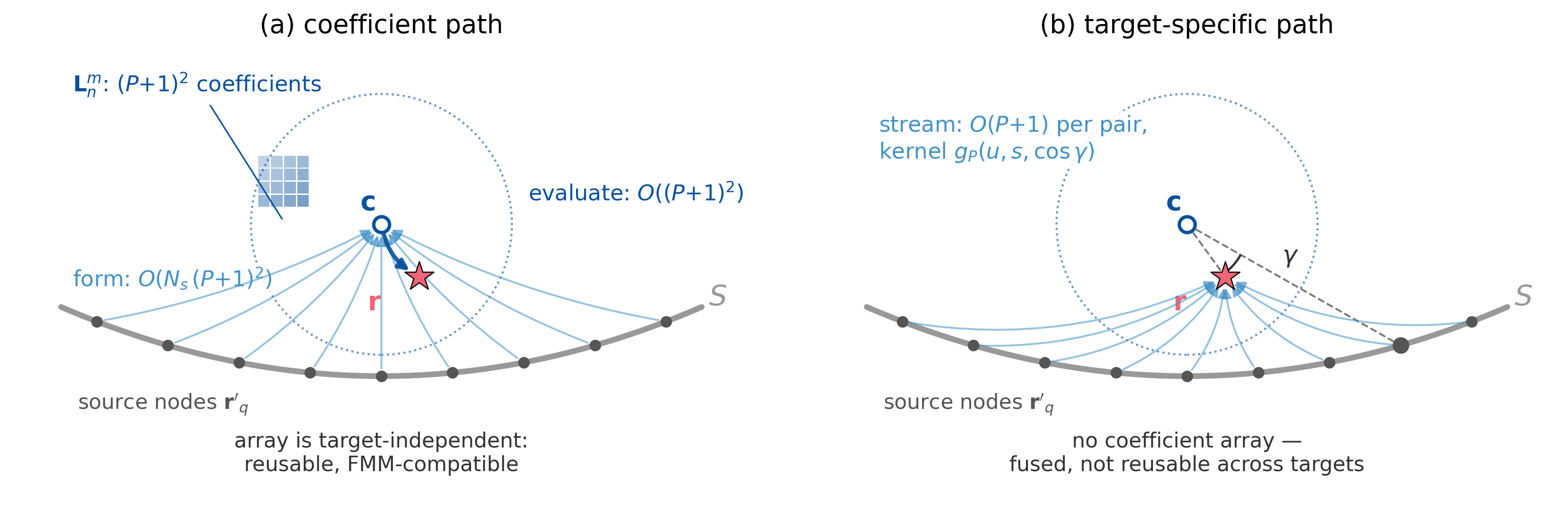}}
    {\fbox{\parbox[c][0.3\linewidth][c]{0.8\linewidth}{\centering\itshape
      [placeholder: TSE schematic --- coefficient path (sources
      $\to$ $(P{+}1)^2$ array $\mathbf{L}_n^m$ at $\mathbf{c}$ $\to$ target)
      vs.\ target-specific path (each source streams to the target through
      $g_P$, $O(P{+}1)$ per pair); export to
      \texttt{figures/fig\_tse\_schematic.png}]}}}
  \caption{Two summation orders of the same truncated expansion.
\emph{Left} (coefficient path, Section~\ref{sec:qbx}): sources accumulate into
the reusable coefficient array \(\mathbf{L}_n^m\), followed by target
contraction. \emph{Right} (target-specific path): sources couple directly to
the target through the one-dimensional kernel \(g_P\) of \eqref{eq:tse},
avoiding the intermediate array at the cost of losing reuse.}
  \label{fig:tse_schematic}
\end{figure}

\subsection{Portability and correctness guarantees}
\label{sec:accel_kokkos}

The implementation is a single header-only C++ code base built on
Kokkos~\cite{Kokkos_2014}, compiling unchanged across CPU, GPU, and other
supported backends. ArborX provides the underlying BVH construction and
queries within the same performance-portable framework~\cite{ArborX_2020}.
The acceleration strategies are therefore algorithmic restructurings rather
than architecture-specific optimizations.

Each transformation is accompanied by a precise correctness guarantee:
geometric preprocessing, indexing, and dispatch preserve the baseline result;
the local-correction formulation converges to the full-QBX field as
$\alpha\rightarrow\infty$; and the target-specific expansion is algebraically
equivalent to the coefficient formulation, with deviations limited to
round-off through \eqref{eq:tse}. Table~\ref{tab:complexity} summarizes the
resulting operation counts.

\begin{table}[t]
\centering
\small
\caption{Operation counts of the evaluation strategies, for $N_p$ panels,
$n_{\rm gl}^2$ quadrature nodes per panel, $N_t$ targets ($N_{\rm near}$ of
them near), expansion order $P$, and mean near-set size
$\overline{N}_{\!\mathcal N}=O(1)$. ``Exact vs.\ baseline'' refers to the
target-driven hybrid dispatcher (Section~\ref{sec:hybrid}).}
\label{tab:complexity}
\setlength{\tabcolsep}{4pt}
\begin{tabular}{p{3.1cm}p{2.6cm}p{3.4cm}p{2.6cm}}
\hline
Strategy & Classification & Cost per near target & Exact vs.\ baseline \\
\hline
Panel-driven QBX (\S\ref{sec:qbx}) &
  --- &
  form $O\!\big(N_p^2 n_{\rm gl}^2 (P{+}1)^2\big)$ once; $O\!\big((P{+}1)^2\big)$ per target &
  same expansion, panel-placed centres \\
Hybrid, linear scan (\S\ref{sec:hybrid}) &
  $O(N_p)$ per target &
  $O\!\big(N_p n_{\rm gl}^2 (P{+}1)^2\big)$ &
  baseline \\
$+$ BVH classification (\S\ref{sec:accel_exact}) &
  $O(N_p)$ build $+$ $O(\log N_p)$ per target &
  unchanged &
  bit-identical \\
$+$ local correction (\S\ref{sec:accel_lc}) &
  same, $+$ radius query &
  $O\!\big(N_p n_{\rm gl}^2 + \overline{N}_{\!\mathcal N} n_{\rm gl}^2 (P{+}1)^2\big)$ &
  tunable ($\alpha$); bitwise as $\alpha\to\infty$ \\
$+$ TSE near field (\S\ref{sec:accel_tse}) &
  same &
  $O\!\big(N_p n_{\rm gl}^2 (P{+}1)\big)$ &
  round-off identical \\
\hline
\end{tabular}
\end{table}

\section{Numerical results}
\label{sec:results}

The numerical study is organized by increasing geometric complexity to isolate
successive sources of numerical error. We begin with the planar sheet of
Section~\ref{sec:setup}, for which the surface is exactly flat. In this
configuration, the kernel singularity is the sole source of numerical
difficulty, free from the effects of surface curvature and the
$O(h^{g+1})$ geometric discretization error of \eqref{eq:errdec}.
Section~\ref{sec:res_sheet_fail} first characterizes the breakdown of the two
smooth quadrature schemes in this setting. Section~\ref{sec:res_sheet_qbx}
then demonstrates that the panel-driven QBX formulation of
Section~\ref{sec:qbx}, together with the orphan-target fallback strategy of
Section~\ref{sec:hybrid}, recovers accurate near-surface field evaluations.
The study then proceeds to curved geometries using the rotating-ellipse
virtual-casing configurations, where geometric discretization and QBX
truncation errors coexist, before concluding with a realistic W7-X
stellarator configuration.

\subsection{Planar sheet: the failure of smooth quadrature}
\label{sec:res_sheet_fail}

The source is the uniform square current sheet introduced in
Section~\ref{sec:setup}, occupying $(x,y)\in[-L,L]^2$ in the plane $z=0$
with $L=5$, and carrying the constant surface current
$\mathbf{K}=K_0\hat{\mathbf{x}}$, where $K_0=1$ in normalized units
($\mu_0/4\pi=1$). For this configuration,
$B_x\equiv0$, the tangential component $B_y$ approaches the infinite-sheet
jump $|B_y|\rightarrow2\pi$ at the surface, and the normal component $B_z$
vanishes only on the symmetry axis $x=y=0$. Consequently, the off-axis
evaluation line shown in Fig.~\ref{fig:ps_failure} exercises both nonzero
field components. The analytical solution is given by
\eqref{eq:sheet_field} in ~\ref{app:fields}, so all errors reported
below are computed directly with respect to the exact field.

The sheet is discretized into $N\times N$ panels of size
$h=2L/N$. Two smooth quadrature schemes are considered: the midpoint rule,
which uses one source point per panel and achieves $O(h^2)$ far-field
accuracy, and the panel Gauss--Legendre rule \eqref{eq:disc}, which employs
$n_{\rm gl}^2$ quadrature nodes per panel and provides spectral far-field
accuracy. Fields are evaluated along the on-axis line $(0,0,z)$ and the
off-axis line $(0.5L,\,0.3L,\,z)$ at fifteen target heights spanning
$z/h_{\rm ref}\in[10^{-4},15]$, where
$h_{\rm ref}=0.078125$ corresponds to the reference discretization
$N_{\rm ref}=128$. Errors are measured using the per-target
$E_{\ell_2}$ norm defined in \eqref{eq:metrics} with $M=1$.

Figure~\ref{fig:ps_failure} presents the results for the reference
discretization ($N=128$; Gauss--Legendre quadrature with
$n_{\rm gl}=10$, $g=3$). In the far field, both quadrature schemes attain
their expected accuracy: at $z/h_{\rm ref}=15$, the midpoint rule has an
error of $1.6\times10^{-5}$, whereas the Gauss--Legendre rule achieves
$8.9\times10^{-15}$. As the target approaches the sheet, however, both
schemes lose accuracy, with the deterioration starting at $z/h_{\rm ref}\approx2$ for the midpoint rule and
$z/h_{\rm ref}\approx1$ for the Gauss--Legendre rule. For
$z/h_{\rm ref}\lesssim10^{-2}$, the errors reach $O(1)$ despite the refined
discretization. The right panel confirms this behavior: both quadratures
accurately reproduce the analytical $B_y$ and $B_z$ fields away from the
sheet but deviate substantially as the target enters the near-singular
region, illustrating the breakdown mechanism discussed in
Section~\ref{sec:breakdown}.

\begin{figure}[t]
  \centering
  \includegraphics[width=0.495\linewidth]{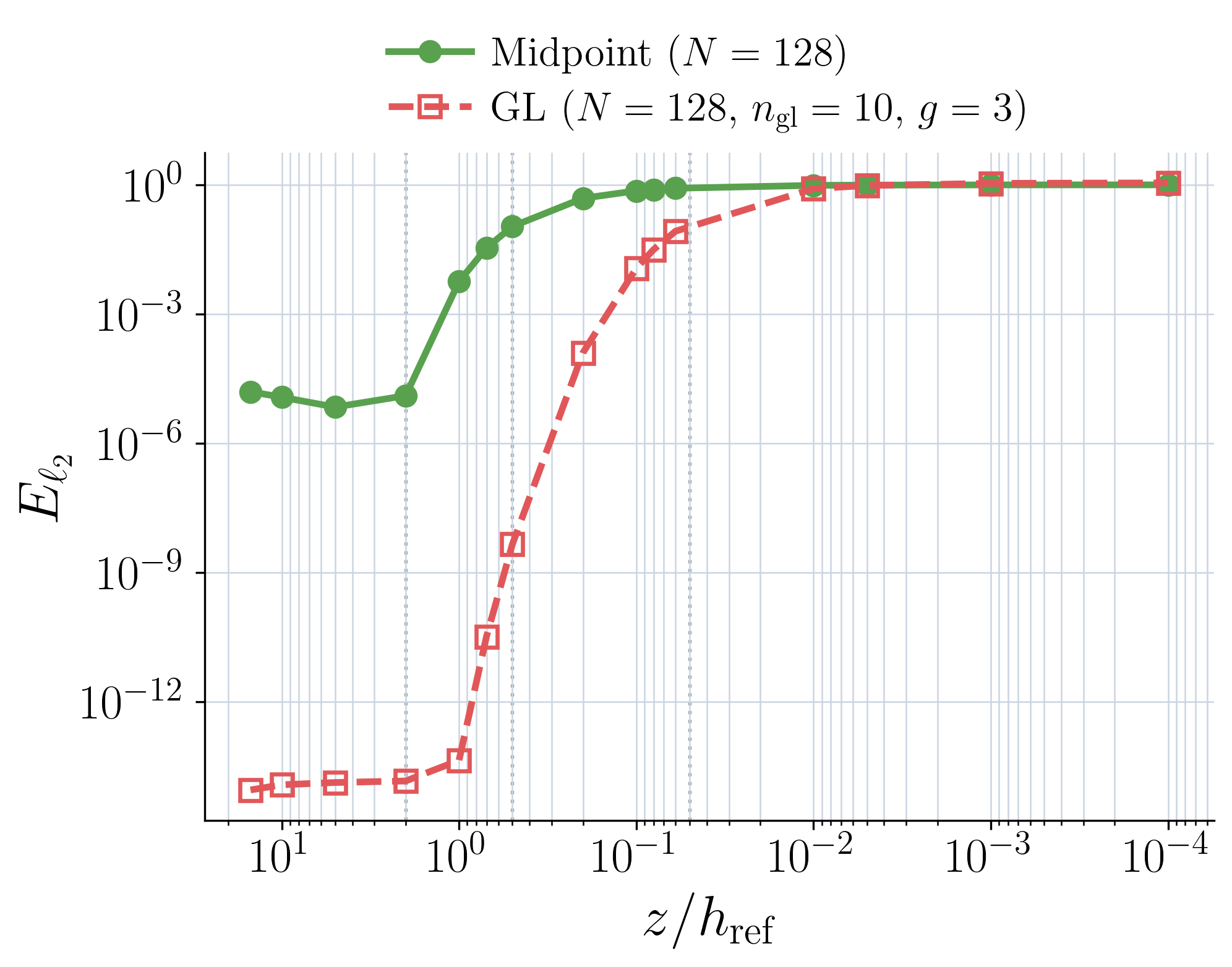}
  \hfill
  \includegraphics[width=0.495\linewidth]{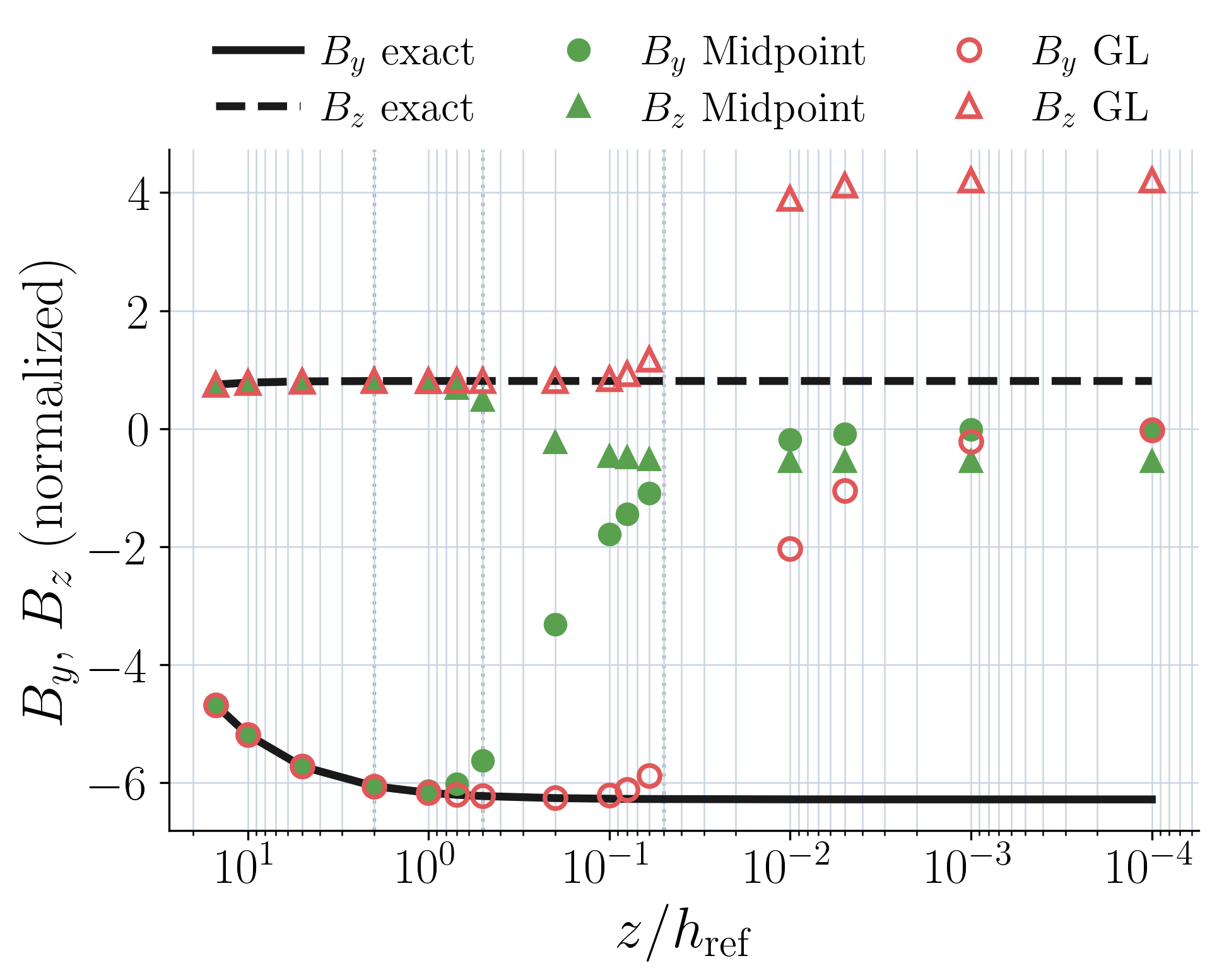}
  \caption{Failure of smooth quadrature for the planar sheet (off-axis column,
$N=128$). \emph{Left:} relative error $E_{\ell_2}$ versus normalized target
distance $z/h_{\rm ref}$. \emph{Right:} computed field components compared with
the analytical solution \eqref{eq:sheet_field}.}
  \label{fig:ps_failure}
\end{figure}

Neither refinement strategy restores near-surface accuracy, and this
conclusion remains unchanged when the methods are compared at equal
computational cost. We measure the cost by the number of kernel evaluations
per target,
\begin{equation}\label{eq:cost}
\mathrm{cost}_{\rm mid}=N^{2},
\qquad
\mathrm{cost}_{\rm GL}=N^{2}n_{\rm gl}^{2},
\end{equation}
where each evaluation corresponds to one source--target contribution
$\mathbf{K}(\mathbf{r}')\times(\mathbf{r}-\mathbf{r}')/
|\mathbf{r}-\mathbf{r}'|^{3}$. Mesh refinement
($N\in\{16,32,64,128,256\}$) produces the expected improvement in the
far field. Near the surface, however, the errors stagnate because resolving
the sharply peaked kernel requires a source resolution determined by the
target distance rather than by the panel size $h$. Increasing the
Gauss--Legendre order ($n_{\rm gl}\in\{1,\ldots,10\}$) similarly improves
far-field accuracy but provides little benefit in the near field. At the
representative distance $z/h_{\rm ref}=0.1$, error-versus-cost comparisons
for both quadrature families show that increasing the computational effort
within smooth quadrature does not efficiently recover near-surface accuracy.
The limitation is therefore intrinsic to the smooth-quadrature formulation,
as discussed in Section~\ref{sec:breakdown}, rather than a consequence of
insufficient quadrature order.

The final test isolates surface-representation effects
(Figure~\ref{fig:ps_geom}). For the flat sheet, the bilinear, biquadratic,
and bicubic mappings ($g=1,2,3$) exactly represent the geometry, leading to
error curves that coincide up to round-off (maximum difference
$1.3\times10^{-12}$ for $N=128$, $n_{\rm gl}=8$). Therefore, the measured
error originates solely from the kernel singularity, while geometric
discretization errors described by \eqref{eq:errdec} arise only for curved
surfaces and are examined in the following sections.

\begin{figure}[t]
  \centering
  \includegraphics[width=0.85\linewidth]{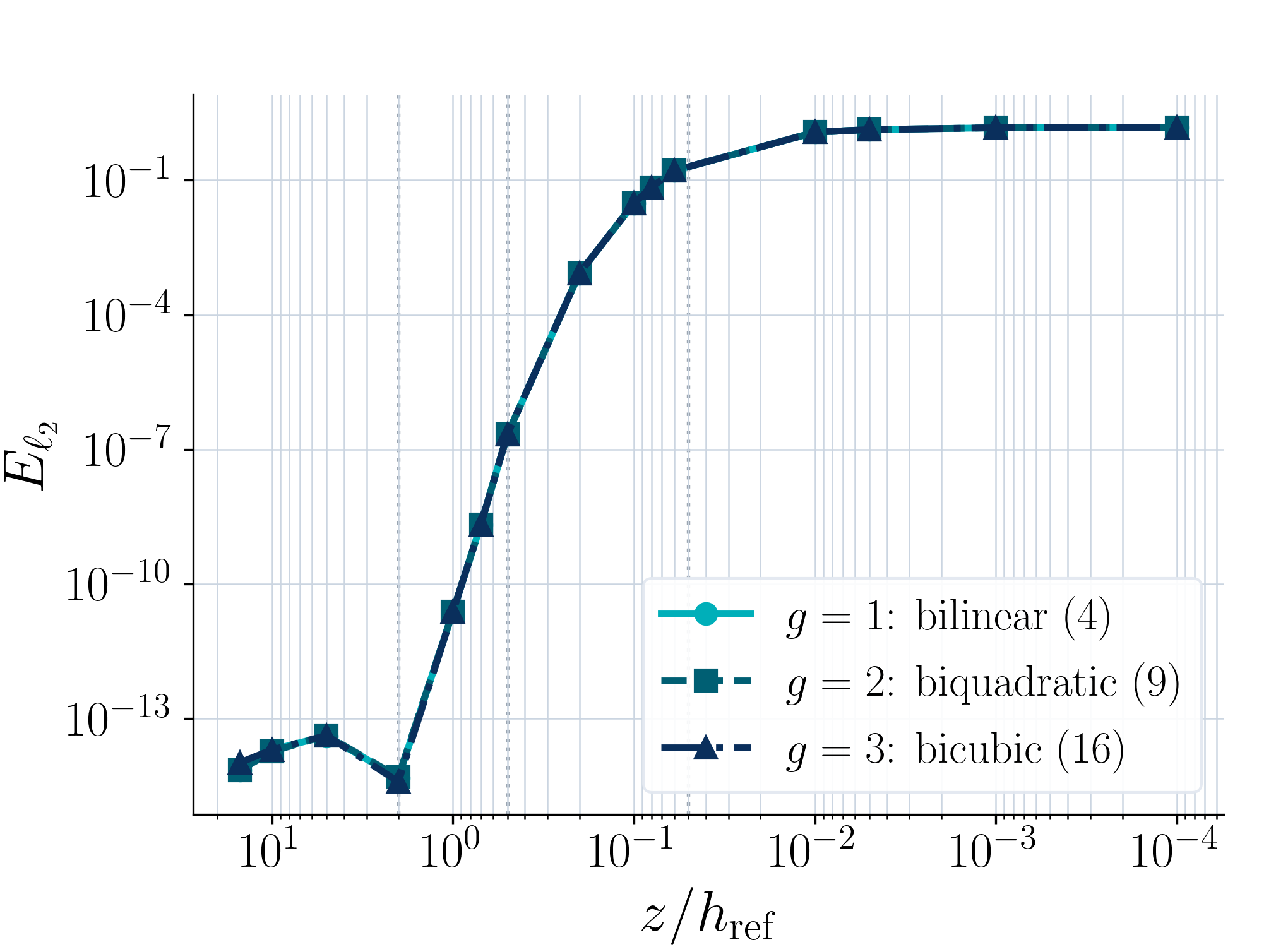}
  \caption{Surface-representation-order independence for the flat sheet:
relative error versus $z/h_{\rm ref}$ along the off-axis line for
$g=1,2,3$ at $N=128$ and $n_{\rm gl}=8$; all curves coincide.
\textcolor{red}{The number in parentheses in the legend is the count of
Lagrange nodes carried by each panel, $(g+1)^2$ --- four for the bilinear map,
nine for the biquadratic and sixteen for the bicubic. It measures the cost of
the geometry representation, which on this flat surface buys no accuracy.}}
  \label{fig:ps_geom}
\end{figure}

\subsection{Planar sheet: panel-driven QBX and the orphan fallback}
\label{sec:res_sheet_qbx}

We first evaluate the panel-driven QBX formulation of
Section~\ref{sec:qbx} on the planar sheet using a deliberately coarse
discretization to stress the near-singular evaluation. The surface is
discretized with $16\times16$ bilinear panels ($g=1$), giving
$h=0.625$, and the QBX parameters are
$r=h/2=0.3125$, $P=12$, $n_{\rm gl}=10$, with expansion centers placed on
the $+z$ side. Near-panel refinement is activated using
$\epsilon_{\rm tgt}=0.05$ and a $4\times4$ subdivision below the threshold
$1.5$ (Section~\ref{sec:accel_lc}). Because the panel count is even, the
origin coincides with a four-panel vertex, producing an orphan region. This
configuration is used to assess the fallback strategy described in
Section~\ref{sec:hybrid}.

Three evaluations are performed: a $241$-point line scan along $y=0$ with
$x\in[-1.5h,1.5h]$ at
$z_0\in\{0.5,1,1.5\}r$; a $161\times161$ coverage map at $z_0=r$; and an
on-axis scan with $z\in[0.02h,2.5h]$.

Because the panel-centred expansion balls are displaced by $r$ from the
surface, their union does not cover the regions above panel vertices. Targets
in these uncovered regions are classified as orphans
(Fig.~\ref{fig:qbx_handoff}). Along the line scan at $z_0=r$, $68\%$ of the
targets are orphans. Figure~\ref{fig:qbx_orphan} shows the QBX accuracy with
the orphan fallback enabled. The orphan targets achieve a lower error (median
$4.7\times10^{-15}$) than the covered targets (median
$1.6\times10^{-11}$), because the fallback places the expansion center directly
below the target, eliminating the horizontal center offset present for
panel-centered expansions. The resulting convergence ratios satisfy
$q\in[1.000,1.044]$, with the remaining error dominated by the
$q^{P+1}$ truncation tail. In this configuration, target-dependent
center placement matters more than geometric coverage. Compared
with direct Gauss--Legendre quadrature on the same scan
($2\times10^{-10}$--$2\times10^{-8}$), QBX improves the accuracy by three to
six orders of magnitude.

\begin{figure}[t]
  \centering
  \includegraphics[width=0.525\linewidth]{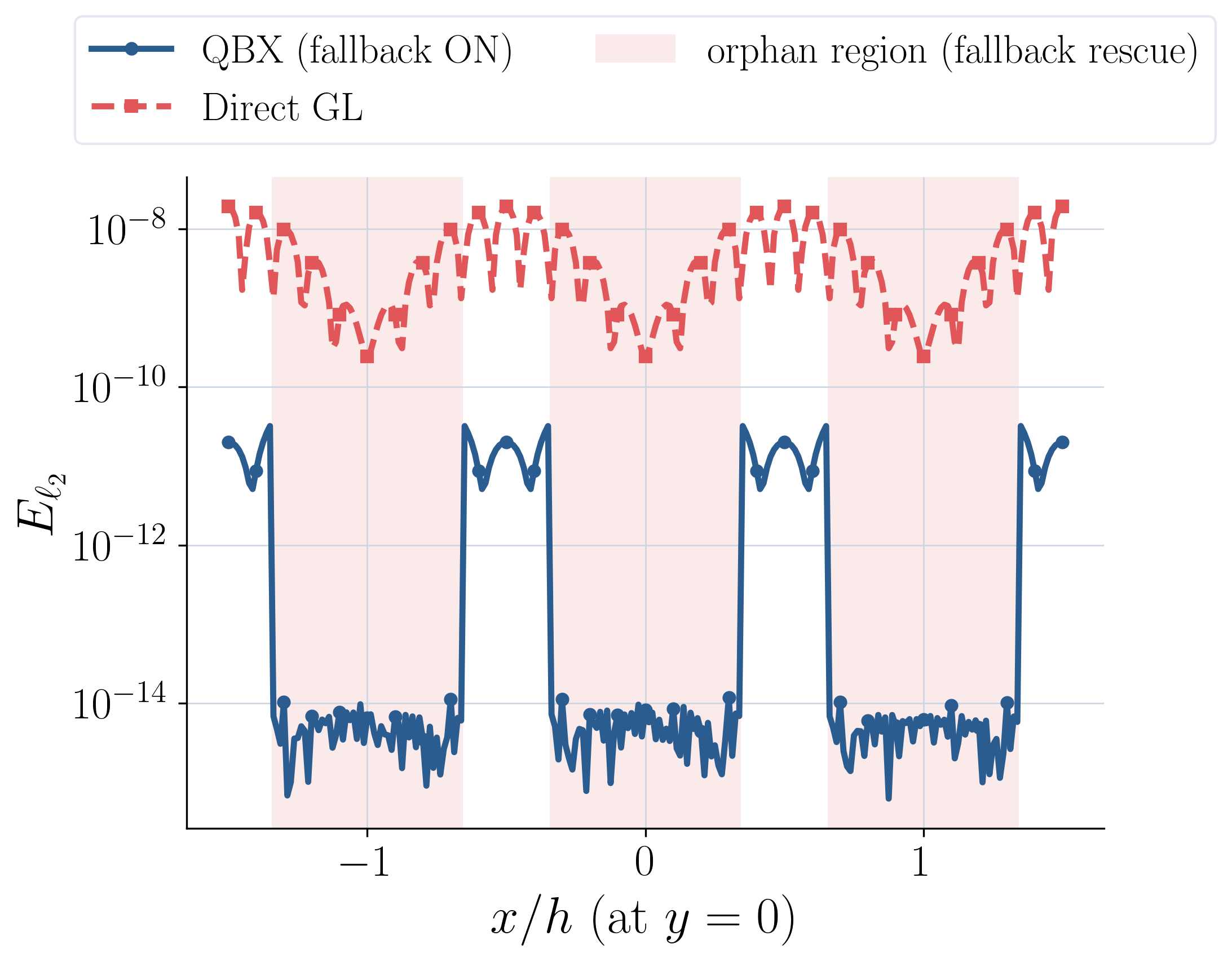}
  \hfill
  \includegraphics[width=0.465\linewidth]{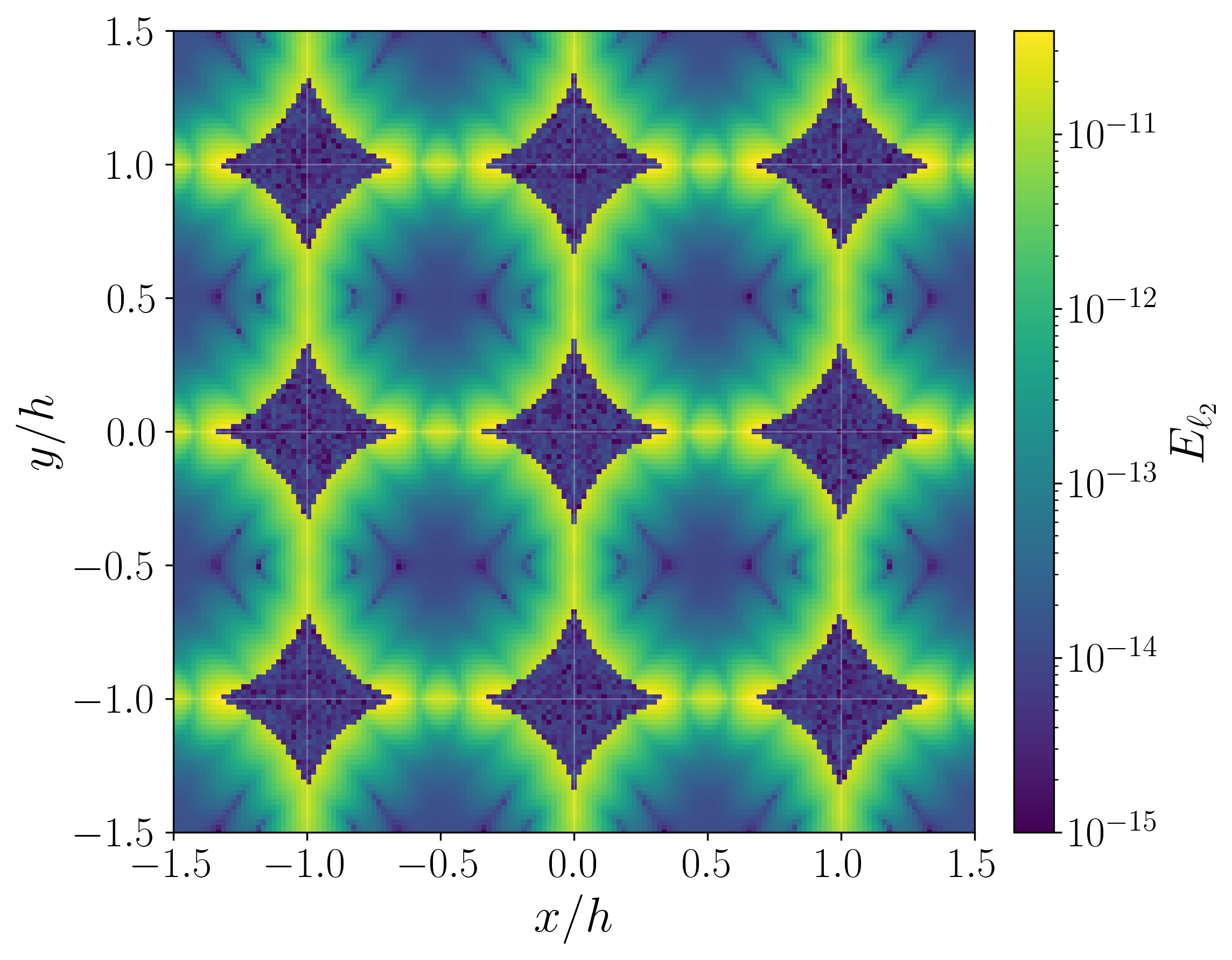}
  \caption{Accuracy with the orphan fallback enabled
($16\times16$ bilinear panels, $P=12$, $z_0=r$). \emph{Left:} relative error
along the line scan $y=0$ comparing QBX and direct Gauss--Legendre quadrature;
shaded regions indicate orphan intervals. \emph{Right:} relative error map over
the $161\times161$ target grid, shown on a logarithmic scale.}
  \label{fig:qbx_orphan}
\end{figure}

Figure~\ref{fig:qbx_handoff} illustrates the coverage geometry and the hybrid
handoff strategy. The left panel shows the coverage at $z_0=r$: covered
targets are contained within panel-centred expansion balls, whereas orphan
regions appear above panel vertices. The right panel reports the error along
the central orphan ray $(0,0,z)$. In the near field, QBX provides a substantial
accuracy improvement: at $z/h=0.02$, direct GL has an error of
$1.1\times10^{-1}$, while the rescued QBX evaluation achieves
$6.7\times10^{-12}$. The GL error decreases with increasing distance and
becomes smaller than the QBX error near $z/h\approx0.93$, whereas QBX remains
accurate within its expansion region. This crossover range
($z/h\approx0.9$--$1.0$) supports the choice of the dispatcher threshold
$\tau=1$ in \eqref{eq:dispatch}, assigning QBX to the near field and GL to
the far field.

\begin{figure}[t]
  \centering
  \includegraphics[width=0.465\linewidth]{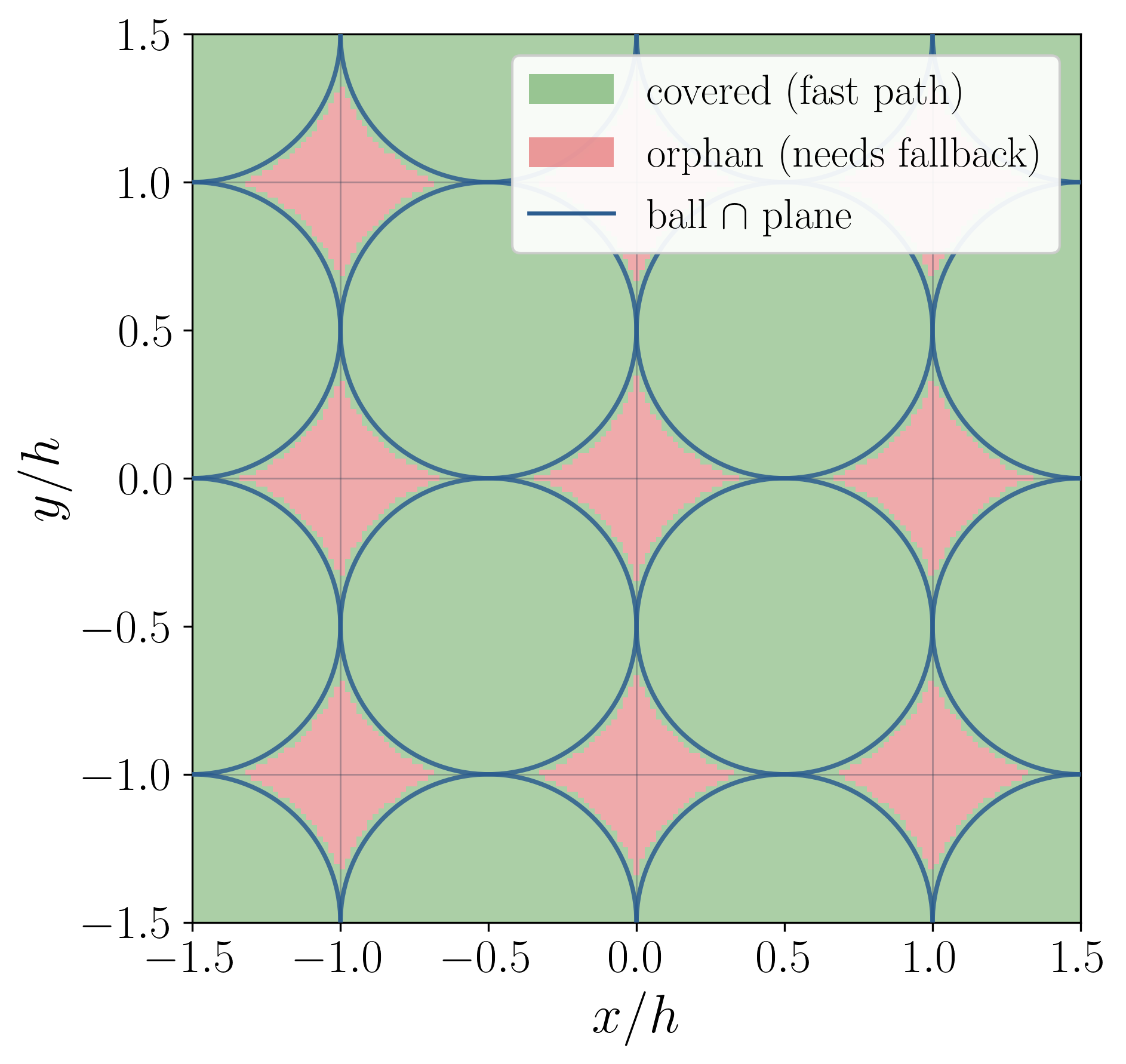}
  \hfill
  \includegraphics[width=0.525\linewidth]{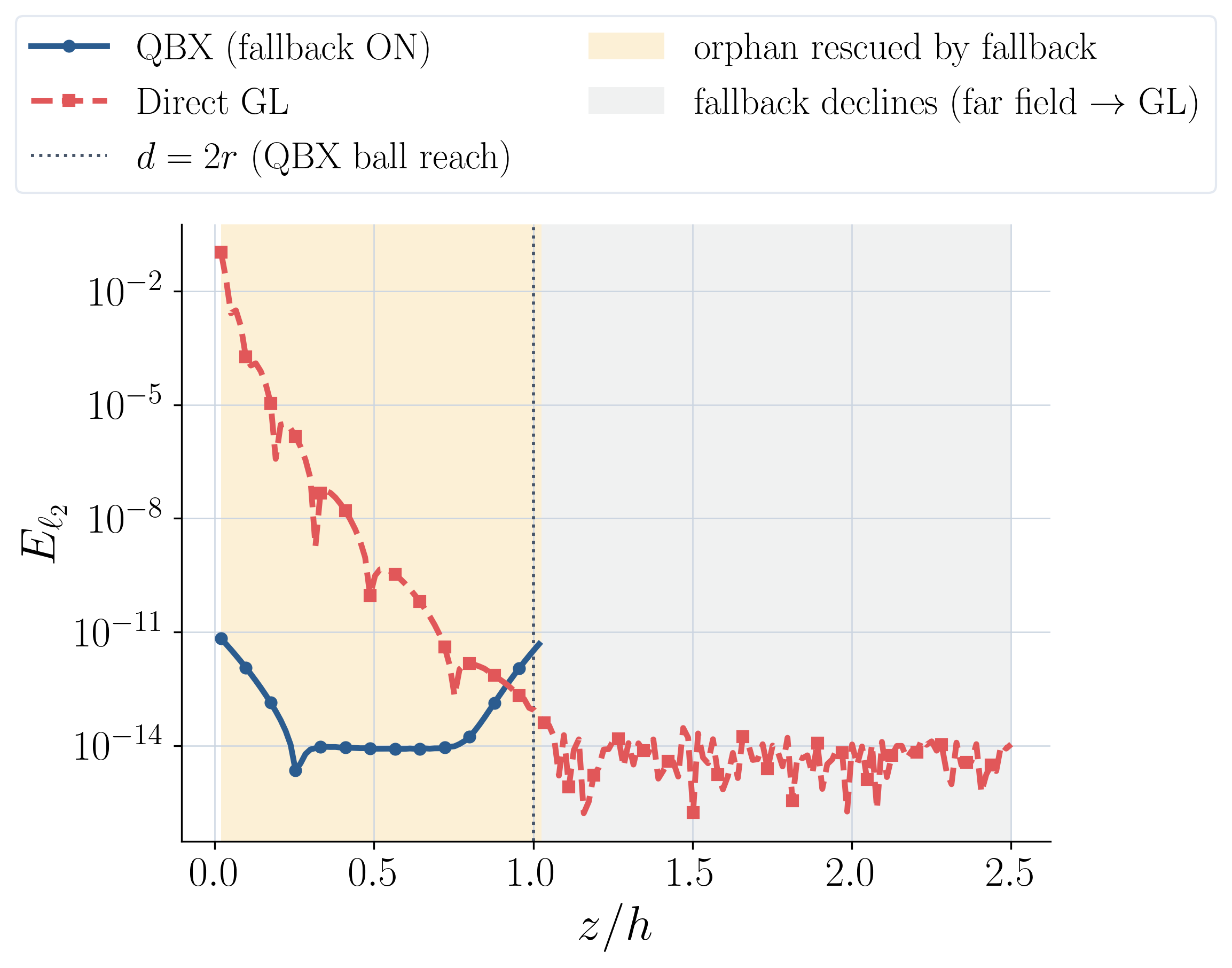}
  \caption{QBX coverage and orphan-target fallback. \emph{Left:} coverage of
the plane $z_0=r$ showing covered and orphan regions. \emph{Right:} relative
error along the orphan ray $(0,0,z)$ for rescued QBX and direct Gauss--Legendre
quadrature; the dotted line indicates the QBX reach $z=2r$.}
  \label{fig:qbx_handoff}
\end{figure}

\subsection{Rotating-ellipse virtual casing: a curved surface with an exact
reference field}
\label{sec:res_vcp_dipole}

We next assess the method on a curved virtual-casing surface using an exact
reference solution. The geometry is the rotating ellipse
\eqref{eq:ellipse_map} with $R_0=3$, $a=0.6$, $b=0.3$, and $\kappa=3/2$,
enclosing a unit point dipole $\mathbf{m}=\hat{\mathbf{z}}$ located at
$(3,0,0)$ (Fig.~\ref{fig:vcp_geom}). The analytical dipole field
\eqref{eq:dipole_field} is converted, through the virtual-casing principle
(Section~\ref{sec:problem}), into the Stratton--Chu representation
\eqref{eq:traces_app}. The extinction identity \eqref{eq:extinction} ensures
that the representation recovers the exact field in the limit of exact
surface integration, providing a reference for measuring numerical errors.
Because $S$ is not a flux surface, both terms in \eqref{eq:VCP} contribute.
Compared with the planar sheet, this configuration introduces surface
curvature and the associated geometric representation error
$O(h^{g+1})$ from \eqref{eq:errdec}, while preserving exact analytical
surface data. It therefore provides a controlled test of virtual-casing
evaluations in the near-surface regime, where target distances can be small
relative to the panel size.

The reference discretization uses $N\times N$ panels in the
$(\theta,\zeta)$ parameter space with $N=128$, resulting in $16{,}384$
cubic panels ($g=3$). The panel Gauss--Legendre rule uses
$n_{\rm gl}=10$, corresponding to $1{,}638{,}400$ source nodes. The hybrid
QBX evaluation employs expansion order $P=8$ with dispatcher threshold
$\tau=1$ in \eqref{eq:dispatch}. For comparison, the midpoint rule uses the
same $128\times128$ sampling grid with point sources located directly on the
exact surface. Owing to the doubly periodic geometry, this corresponds to the
periodic trapezoidal rule and retains spectral far-field accuracy without
introducing geometric representation errors. Targets are sampled along the
outward normal from the outboard anchor $(3.6,0,0)$ using $64$ logarithmically
spaced distances in the range
$d/h_{\rm panel}\in[10^{-4},10^{2}]$, where
$h_{\rm panel}=0.177$ denotes the local panel size at the anchor (see Fig.~\ref{fig:vcp_geom}).

Figure~\ref{fig:vcp_dipole} (left) compares the three methods. In the far
field, the dispatcher selects GL, and the hybrid QBX result coincides with the
GL evaluation for $d/h_{\rm panel}\ge1$. The combined method reaches a
distance-independent error floor of $8.5\times10^{-8}$, while the midpoint
rule converges to approximately $10^{-8}$. As the target approaches the
surface, the smooth quadratures lose accuracy: the GL error increases to
$1.2\times10^{-4}$ at $d/h_{\rm panel}=0.09$ and
$2.8\times10^{-2}$ at $d/h_{\rm panel}=0.01$. At
$d/h_{\rm panel}=10^{-4}$, midpoint and GL errors reach $0.58$ and $0.49$,
respectively, whereas QBX remains stable with an error of
$5.0\times10^{-5}$. Therefore, in the closest-to-surface regime, QBX improves
the accuracy by approximately four orders of magnitude.

The near-surface QBX error is limited by the geometric floor of
\eqref{eq:errdec} rather than by QBX truncation. This conclusion is supported
by three independent parameter studies along the same ray. Varying the
geometric order at fixed $N=128$ (Fig.~\ref{fig:vcp_dipole}, right) reduces
the near-surface error from $2.1\times10^{-3}$ for $g=1$ to
$4.7\times10^{-4}$ for $g=2$ and $5.0\times10^{-5}$ for $g=3$, with the
far-field error floors following the same trend. At fixed $g=3$, mesh
refinement further reduces the near-surface error from
$1.8\times10^{-2}$ to $3.7\times10^{-6}$ as $N$ increases from $32$ to
$256$. In contrast, increasing the QBX expansion order
$P\in\{6,8,10,12\}$ changes the error only marginally, from
$5.4\times10^{-5}$ to $4.3\times10^{-5}$, and does not affect the GL-dominated
region. Therefore, the $q^{P+1}$ truncation error is already below the
geometric discretization floor at $P=6$, and further near-surface improvement
requires increased geometric resolution rather than higher QBX order.

Finally, the accuracy is insensitive to the choice of ray anchor
(Fig.~\ref{fig:vcp_anchors}). Repeating the sweep from the inboard
$A_1=(2.4,0,0)$ and top $A_2=(3,0,0.6)$ anchors yields near-surface QBX
errors of $3.1\times10^{-5}$ and $1.4\times10^{-4}$, respectively, comparable
to the outboard value of $5.0\times10^{-5}$ despite variations in local
curvature and panel size. In contrast, the GL error remains approximately
$0.49$ for all three rays. The far-field increase observed on the inboard ray
is geometric: the ray passes through the central opening and intersects the
opposite surface, where the exact dipole field vanishes and the relative error
becomes ill-conditioned.

\begin{figure}[t]
  \centering
  \includegraphics[width=1.0\linewidth]{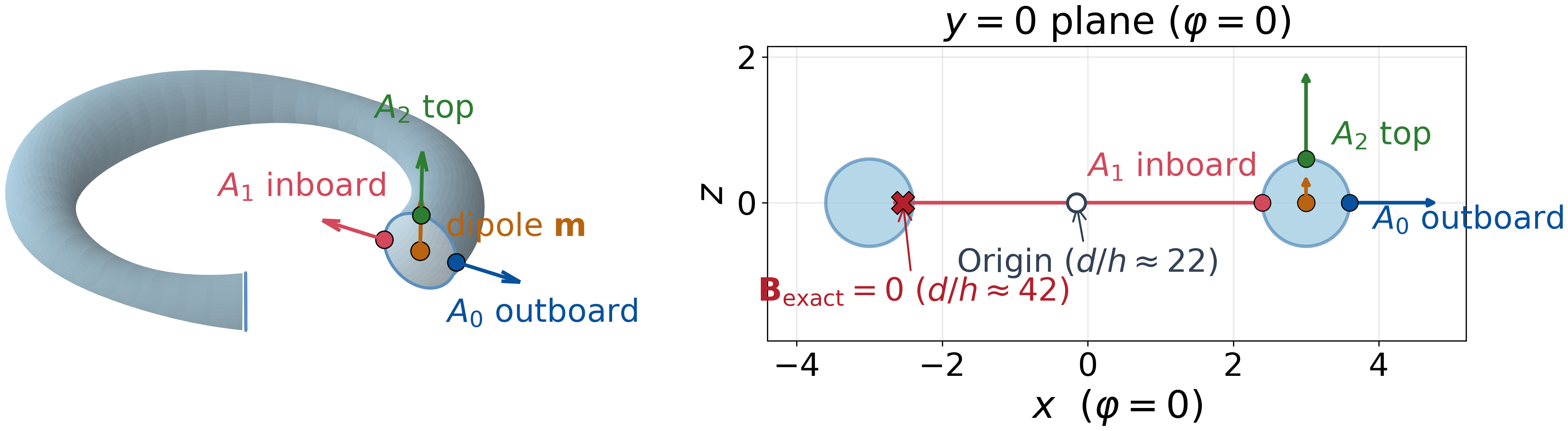}
\caption{\textcolor{red}{Geometry of the rotating-ellipse virtual-casing test.
\emph{Left:} Source surface together with the dipole $\mathbf{m}=(0,0,1)$ located at
$(R_0,0,0)$ and the three sweep test points $A_0$ (outboard), $A_1$ (inboard),
and $A_2$ (top), all lying in the $\varphi=0$ cross-section. Arrows indicate
the outward directions along which target points are swept.
\emph{Right:} The $y=0$ plane, which intersects the surface at
$\varphi=0$ (right lobe) and $\varphi=\pi$ (left lobe). Targets are swept from each
anchor outward to $d/h_{\rm panel}=100$.}}
  \label{fig:vcp_geom}
\end{figure}

\begin{figure}[t]
  \centering
  \includegraphics[width=0.495\linewidth]{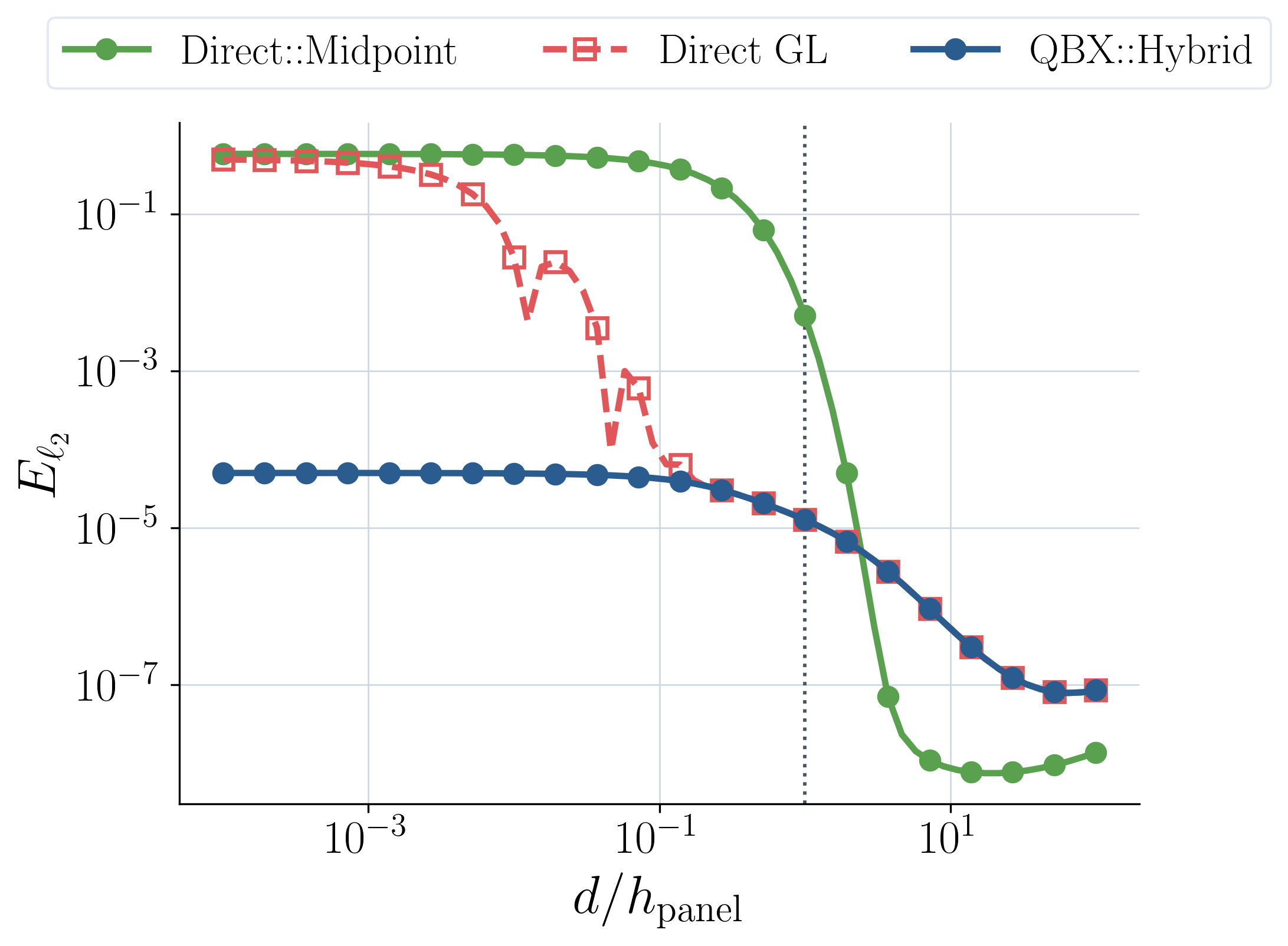}
  \hfill
  \includegraphics[width=0.495\linewidth]{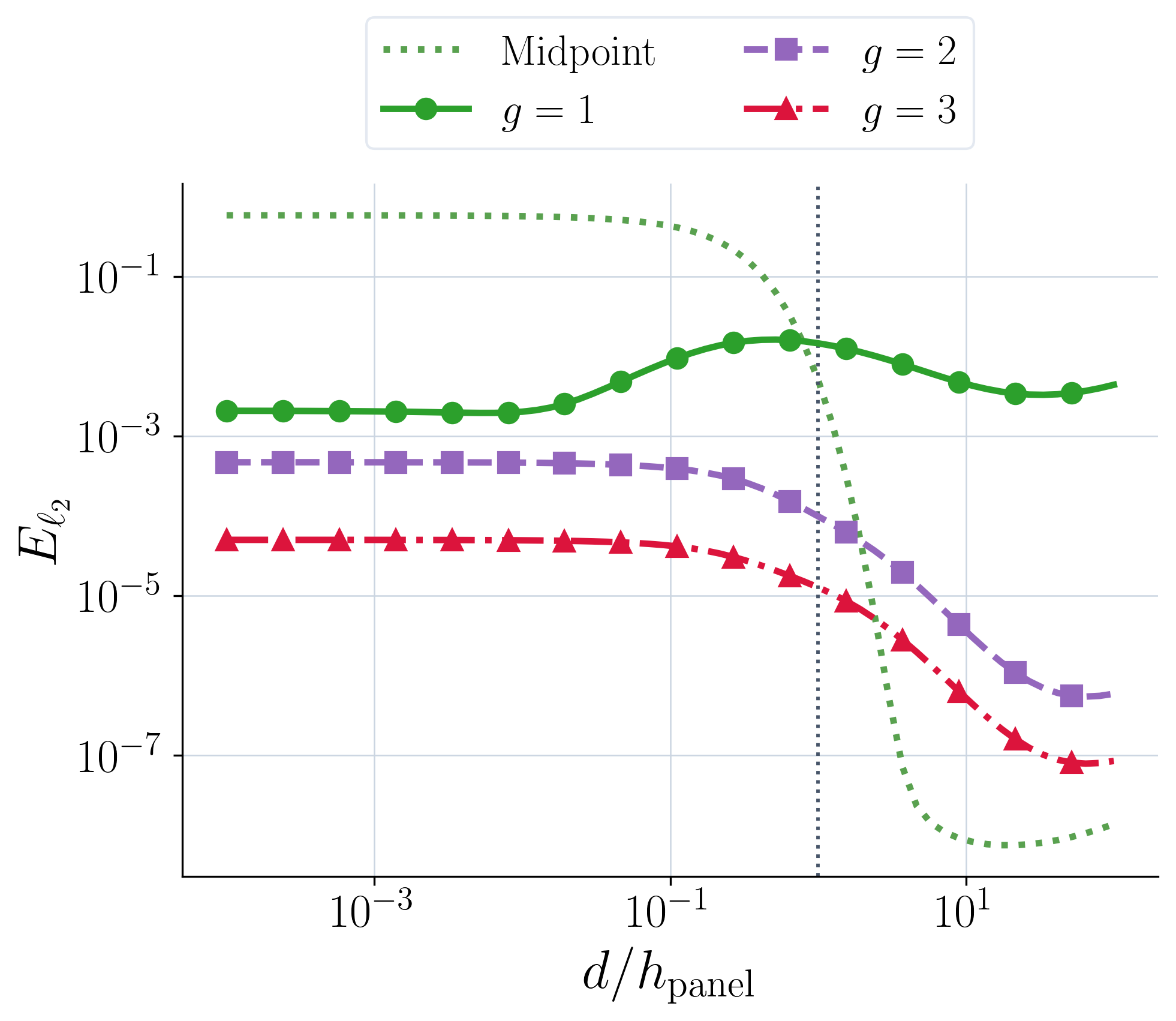}
\caption{Virtual casing on the rotating ellipse (outboard ray, $N=128$,
$n_{\rm gl}=10$). \emph{Left:} relative error $E_{\ell_2}$ versus
$d/h_{\rm panel}$ for $g=3$, $P=8$; the dotted line indicates the dispatcher
threshold $\tau=1$. \emph{Right:} hybrid QBX error for $g=1,2,3$ compared with
midpoint evaluation on the exact surface; the near-surface plateau is set by
the geometric discretization error $O(h^{g+1})$.}
  \label{fig:vcp_dipole}
\end{figure}

\begin{figure}[t]
  \centering
  \includegraphics[width=0.62\linewidth]{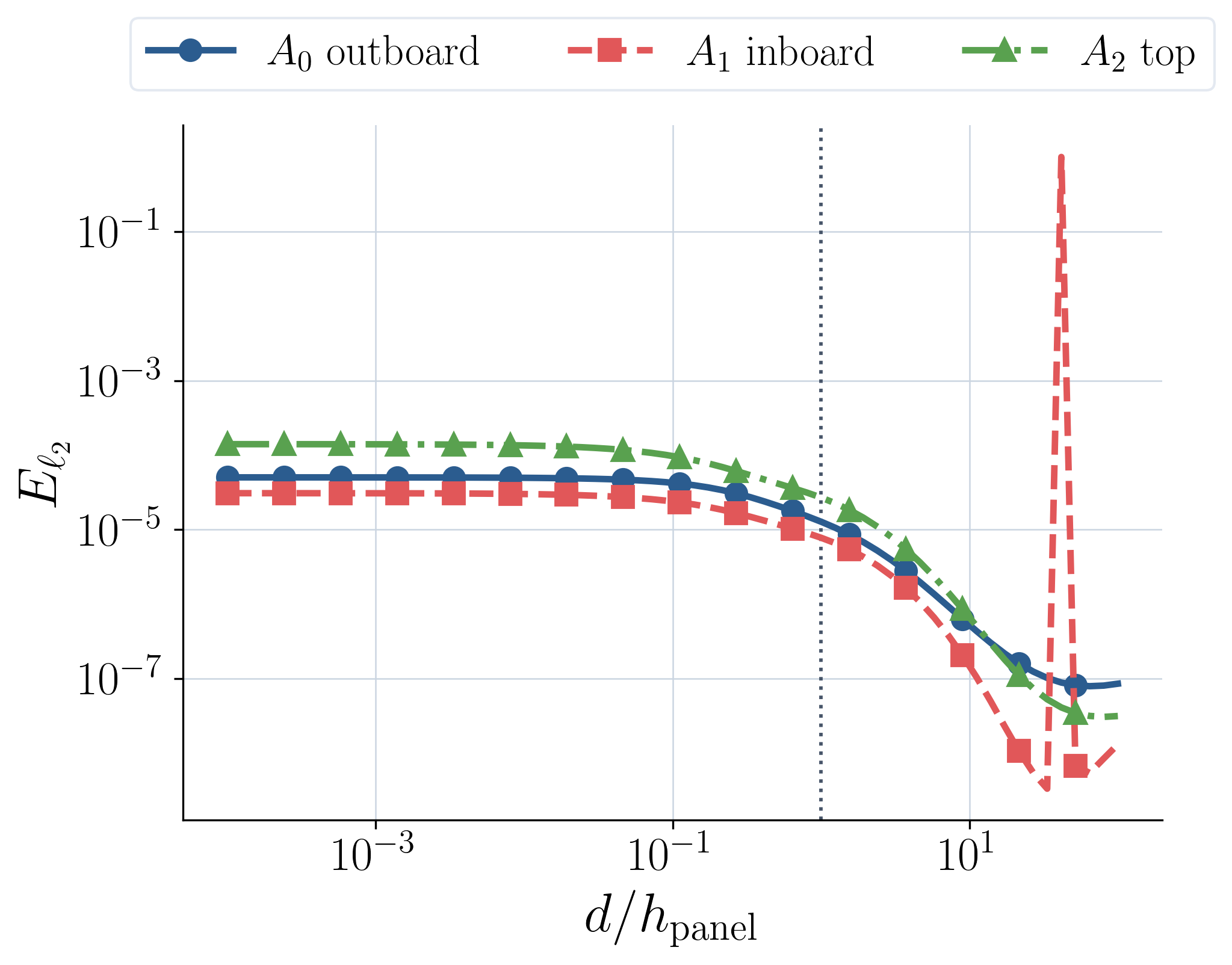}
\caption{QBX--Hybrid error along the three anchor rays of
Fig.~\ref{fig:vcp_geom} ($N=128$, $g=3$, $P=8$). Near-surface errors are
comparable for all anchors; the inboard ray exhibits a far-field spike at the
wall intersection where $\mathbf B_{\rm exact}=0$.}
  \label{fig:vcp_anchors}
\end{figure}

\subsection{QBX versus upsampled trapezoidal quadrature}
\label{sec:res_malhotra}

We adopt the benchmark introduced by Malhotra et al.~\cite{malhotra2019},
based on the rotating-ellipse geometry \eqref{eq:ellipse_map} with
$R_0=2$, $a=0.7$, $b=1.0$, and $\kappa=3/2$, together with the analytical
field generated by a circular current loop of radius $5$
\eqref{eq:loop_field}. The loop field is converted into Stratton--Chu
surface traces and reconstructed inside $\Omega$. By the extinction identity
\eqref{eq:extinction}, the resulting virtual-casing field satisfies
$\mathbf B_{\rm VCP}=\mathbf B_0$, corresponding to the vacuum
($\lambda=0$) configuration considered in~\cite{malhotra2019}.

We compare two off-surface evaluation methods using identical surface data:
the hybrid QBX method with $64\times64$ cubic panels ($g=3$),
$n_{\rm gl}=10$, and $P=8$, and the global periodic trapezoidal quadrature of
~\cite{malhotra2019} on an upsampled uniform grid. For the latter, the surface
densities are analytically resampled to remove interpolation errors. The
comparison therefore isolates the off-surface quadrature accuracy rather than
errors associated with the boundary-integral discretization.

In the far field, the periodic trapezoidal rule provides higher accuracy
(Fig.~\ref{fig:mal_conv}, left): for an eight-point interior target cloud, its
error decreases spectrally from $6.3\times10^{-5}$ to
$1.1\times10^{-14}$ as the grid is refined from $64^2$ to $256^2$ nodes. In
contrast, the panel-based evaluation is limited by the $O(h^4)$ geometric
discretization floor. The behavior changes in the near-surface regime
(Fig.~\ref{fig:mal_headtohead}, left). At equal surface sampling
($65{,}536$ nodes), the two methods intersect at approximately
$d/h=0.3$: the trapezoidal error increases from $1.8\times10^{-7}$ at
$d/h=1$ to $0.50$ at \textcolor{red}{$d/h=10^{-2}$}, whereas QBX remains in the range
$2$--$5\times10^{-6}$. The cost comparison in
Fig.~\ref{fig:mal_headtohead} (right) reflects the different scaling of the
two approaches. Maintaining trapezoidal accuracy requires increasing the
surface resolution according to $N_s\sim(h/d)^2$
~\cite{malhotra2019,malhotra2020}, while QBX maintains a fixed
$65{,}536$-node discretization independent of the target distance. Even with
adaptive upsampling limited to $1024^2$ nodes, the trapezoidal approach
eventually reaches $O(1)$ error.

The remaining QBX truncation error follows the predicted convergence behavior
(Fig.~\ref{fig:mal_conv}, right). Self-convergence with respect to a high
reference order ($P_{\rm ref}=18$) isolates the truncation contribution by
removing the common geometric and quadrature errors. At $d/h=0.5$, the error
decreases as $q^{P+1}$ with fitted $q\approx0.46$, reaching
$3.5\times10^{-14}$ at $P=16$. At the target distance
$d/h=0.1$ used in the main simulations, the convergence factor is much smaller
($q\approx0.02$), and convergence is achieved by approximately $P=4$. The
total error is then limited by the geometric discretization floor
($7\times10^{-9}$ on a $256\times256$-panel mesh), consistent with the
$P$-insensitivity observed in Section~\ref{sec:res_vcp_dipole}.

These results demonstrate the complementary behavior of global upsampling and
hybrid QBX for off-surface field evaluation. Global upsampling, as used in
~\cite{malhotra2019,malhotra2020}, provides spectral accuracy in the far field
but requires a surface resolution that increases as $(h/d)^2$ in the
near-surface regime. The hybrid QBX formulation avoids this
distance-dependent refinement by maintaining a fixed discretization while
achieving $2$--$5\times10^{-6}$ accuracy over the tested near-surface range
with $65{,}536$ surface nodes. Therefore, near-surface evaluations can be
performed without the progressive surface upsampling required by global
trapezoidal quadrature.

\begin{figure}[t]
  \centering
  \includegraphics[width=0.495\linewidth]{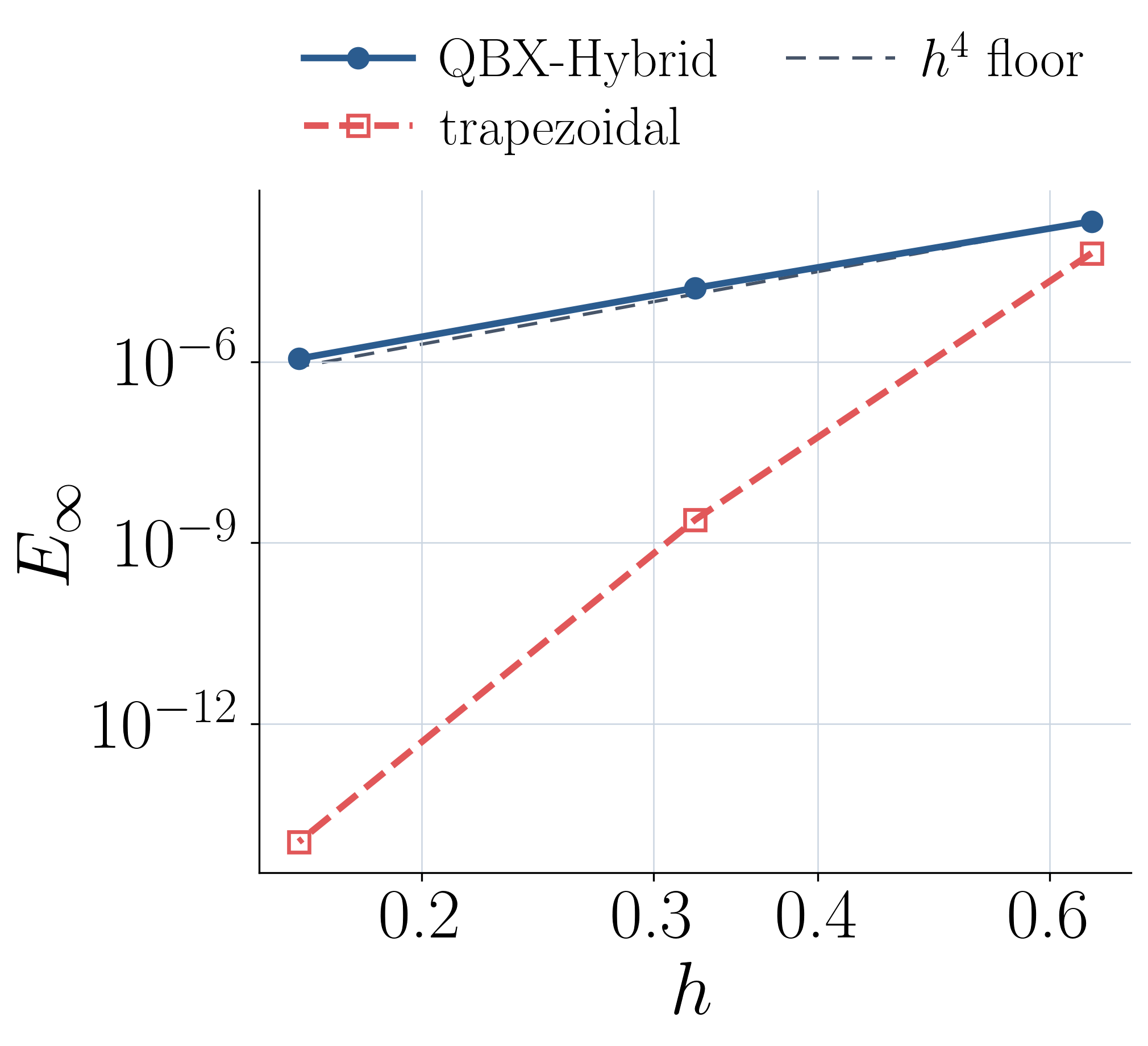}
  \hfill
  \includegraphics[width=0.495\linewidth]{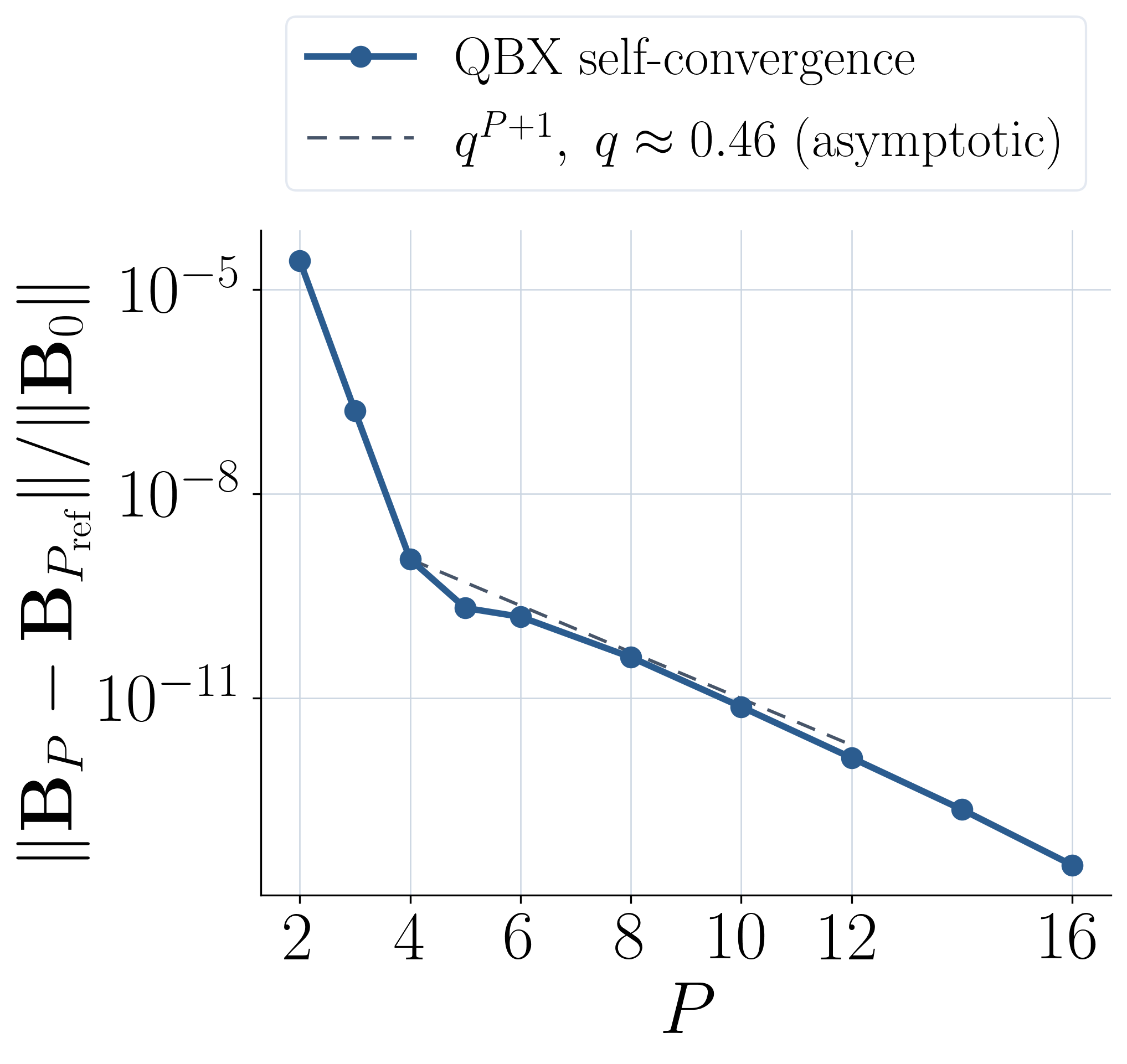}
\caption{Convergence on the Malhotra et al.~\cite{malhotra2020} benchmark.
\emph{Left:} far-field convergence versus panel size $h$; the
\textcolor{red}{upsampled} trapezoidal quadrature is spectral while the panel
method is limited by the \textcolor{red}{$h^4$} geometric floor.
\emph{Right:} QBX truncation convergence at $d/h=0.5$ with
$P_{\rm ref}=18$. \textcolor{red}{Here and in the following figures, curves
labelled \emph{QBX-Hybrid} denote the target-driven dispatcher of
\S\ref{sec:hybrid} (Table~\ref{tab:terminology}).}}
  \label{fig:mal_conv}
\end{figure}

\begin{figure}[t]
  \centering
  \includegraphics[width=0.495\linewidth]{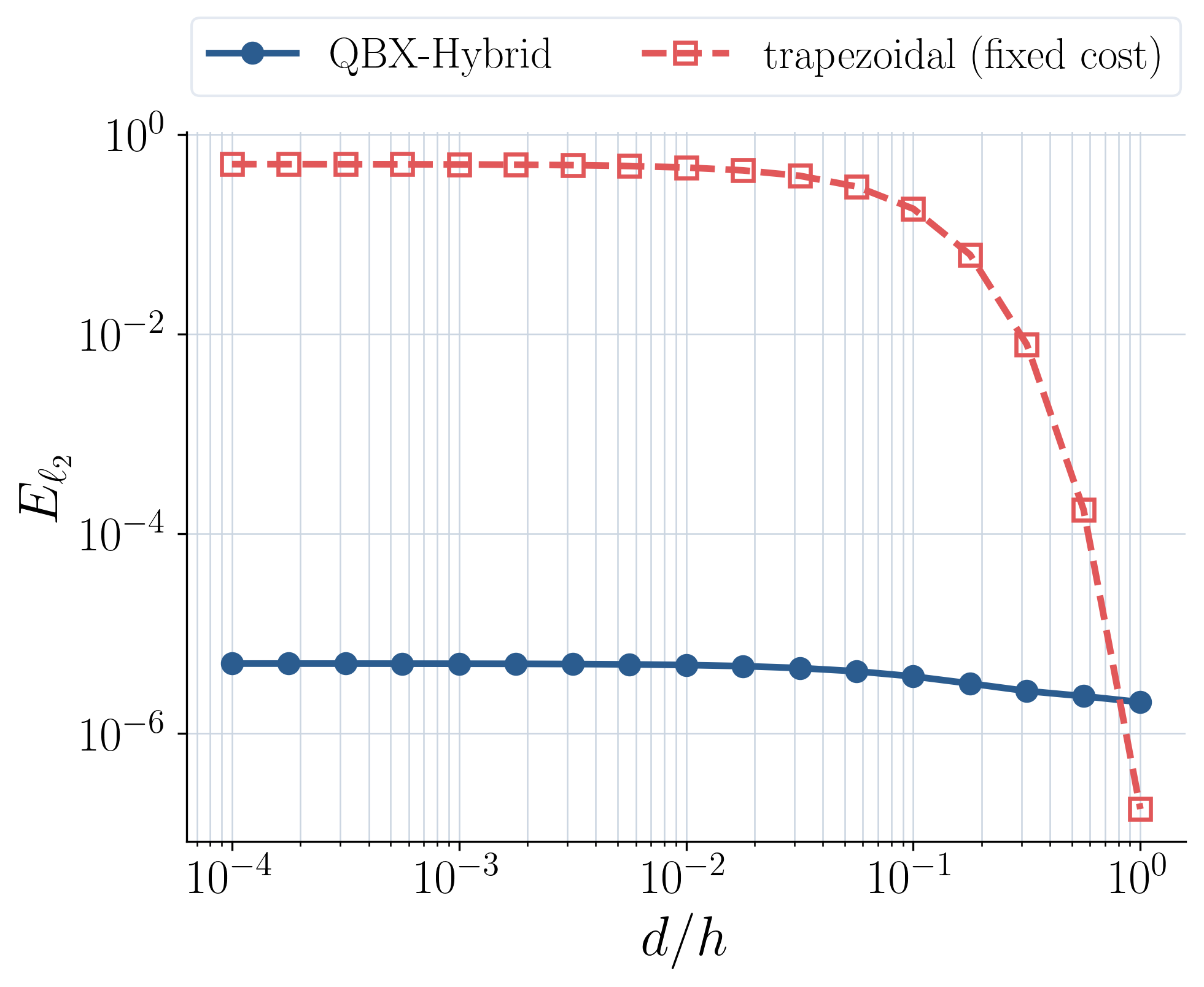}
  \hfill
  \includegraphics[width=0.495\linewidth]{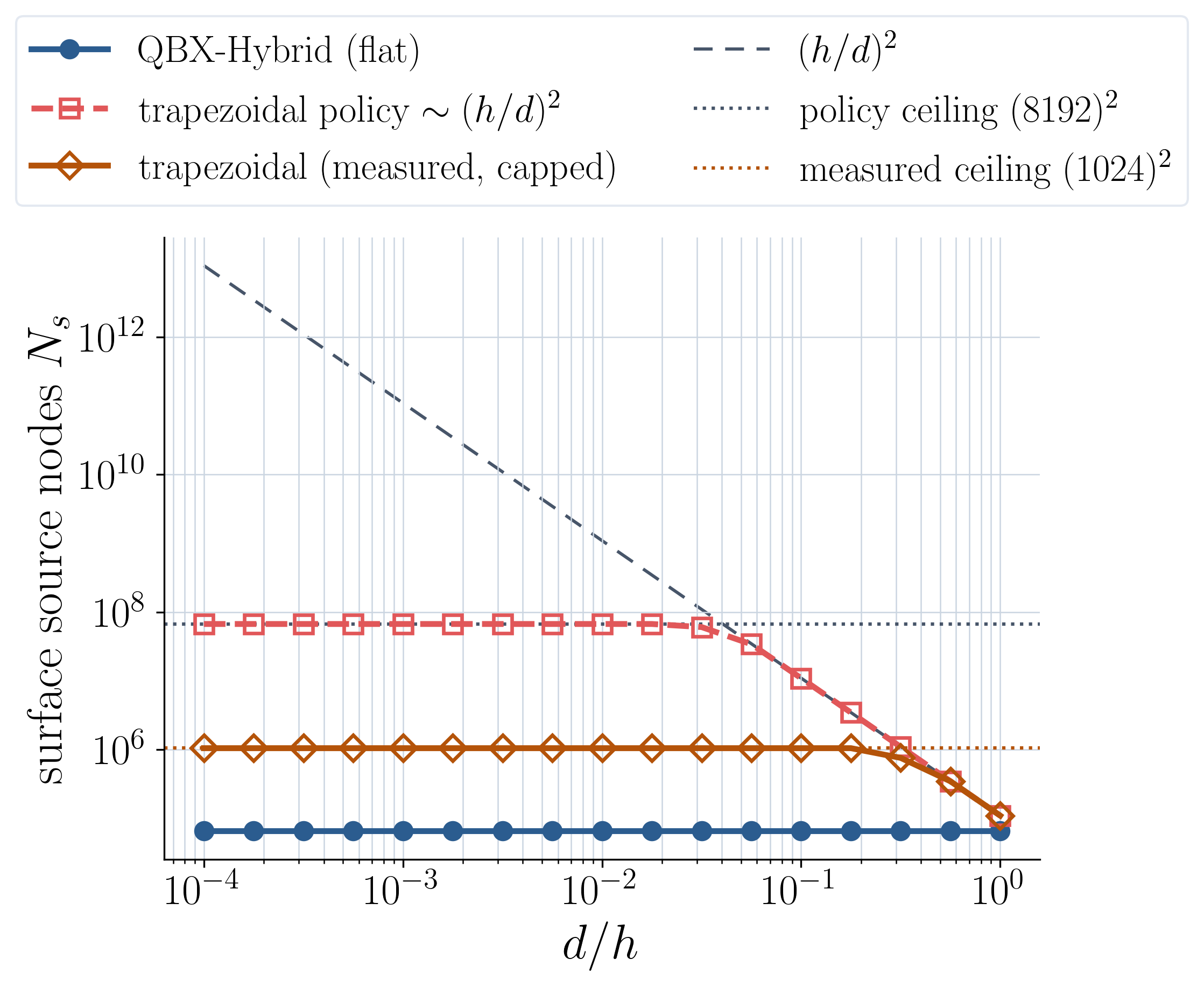}
  \caption{Malhotra benchmark in the near-surface regime
$d/h\in[10^{-4},1]$ (outboard anchor $(2.7,0,0)$, $h=0.161$).
\emph{Left:} relative error for hybrid QBX and upsampled trapezoidal
quadrature at matched sampling ($65{,}536$ nodes). \emph{Right:} surface-node
count versus distance, showing the distance-independent QBX cost.}
  \label{fig:mal_headtohead}
\end{figure}

\subsection{Comparison of evaluation strategies}
\label{sec:res_paths}

We compare now the different strategies available for near-surface field
evaluation: the panel-driven QBX formulation of Section~\ref{sec:qbx}, the
target-driven hybrid dispatcher of Section~\ref{sec:hybrid}, and the two
acceleration mechanisms introduced in Section~\ref{sec:acceleration}: local
correction and target-specific expansion (TSE). We first verify the
equivalence of the two evaluation paths and then demonstrate that the two
accelerations act independently, allowing their performance benefits to be
combined.

The panel-driven and target-driven approaches agree in their common
application region. For the toroidal-coil test (exact exterior field zero),
both methods produce identical errors of $3.1\times10^{-6}$ at shared
near-surface targets. Their computational costs, however, differ
substantially: the panel-driven approach requires an $O(N_p^2)$ preprocessing
stage, whereas the target-driven formulation scales linearly with $N_p$ for a
given target set. Consequently, the measured speed-up increases approximately
as $N_p^{0.95}$, from $4.5\times$ at $961$ panels to $66\times$ at
$16{,}129$ panels (Fig.~\ref{fig:paths_driver}). The target-driven hybrid
formulation is therefore adopted as the default evaluation problems.

\begin{figure}[t]
  \centering
  \textcolor{red}{\includegraphics[width=0.495\linewidth]{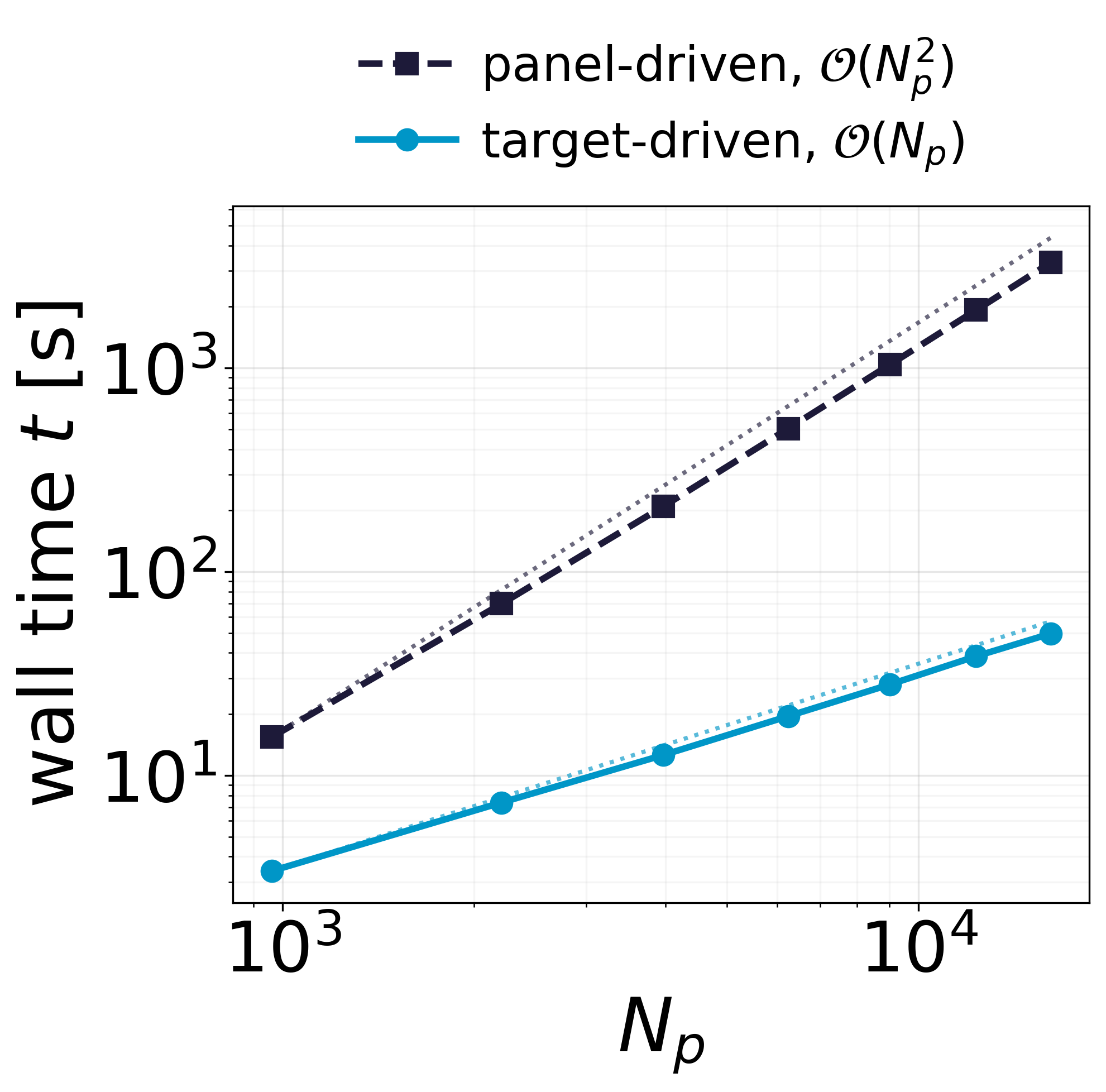}}
  \hfill
  \textcolor{red}{\includegraphics[width=0.495\linewidth]{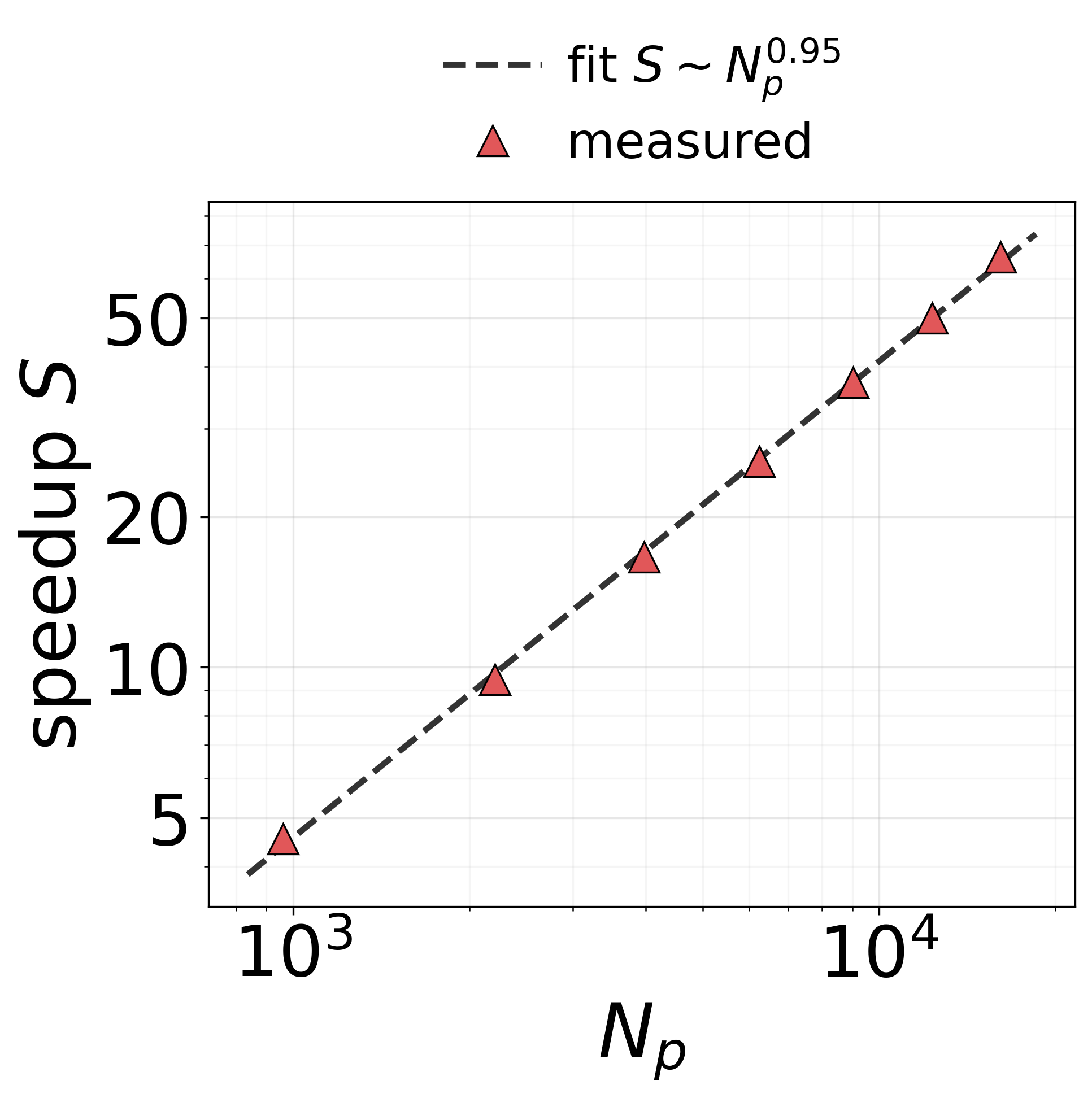}}
  \caption{\textcolor{red}{The two drivers of
Table~\ref{tab:terminology} on the toroidal coil: the \emph{panel-driven}
formulation of \S\ref{sec:qbx} against the \emph{target-driven} dispatcher of
\S\ref{sec:hybrid}.
\emph{Left:} wall time $t$ against the panel count $N_p$; the
dotted guides are the expected sclaings: $\mathcal{O}(N_p^{2})$ and $\mathcal{O}(N_p)$. \emph{Right:} the resulting speed-up
$S=t_{\rm panel}/t_{\rm target}$, which grows from $4.5\times$ at $N_p=961$ to
$66\times$ at $N_p=16{,}129$ and follows the empirical fit
$S\sim N_p^{0.95}$.}}
  \label{fig:paths_driver}
\end{figure}

The two accelerations are verified on the Malhotra et al.~\cite{malhotra2019,malhotra2020}
configuration of Section~\ref{sec:res_malhotra}. The target-specific
contraction is algebraically equivalent to the coefficient-based evaluation:
using identical expansion centers, the maximum relative difference remains
below $8\times10^{-14}$ for all tested fields, and the convergence curves in
Fig.~\ref{fig:paths_equiv} (left) are indistinguishable. In contrast, the
local-correction split introduces a controlled $\alpha$-dependent deviation
from the full-QBX reference. This deviation decreases from
$1.9\times10^{-6}$ at $\alpha=1.5$ to $1.2\times10^{-12}$ at
$\alpha=8$ (Fig.~\ref{fig:paths_cost}, left), with the same trend observed
for both coefficient-based and TSE evaluations.

\begin{table}[t]
\centering
\small
\caption{The $2\times2$ acceleration factorization on the Malhotra et al.~\cite{malhotra2020}
benchmark ($\alpha=3$). Parameters: $48\times48$ cubic panels,
$n_{\rm gl}=8$, $P=8$, and $256$ targets at $d/h=0.3$. Wall-times are averaged
over three OpenMP runs; errors are $E_\infty$ relative to the full-coefficient
baseline.}
\label{tab:paths2x2}
\setlength{\tabcolsep}{6pt}
\begin{tabular}{lrrr}
\hline
Evaluation path & wall-time [ms] & speed-up & deviation \\
\hline
full coefficient (baseline) & $2507$ & $1.0$ & --- \\
full TSE                    & $426$  & $5.9$ & $3.8\times10^{-14}$ \\
local correction, coefficient & $464$ & $5.4$ & $4.5\times10^{-9}$ \\
local correction, TSE       & $368$  & $6.8$ & $4.5\times10^{-9}$ \\
\hline
\end{tabular}
\end{table}

Table~\ref{tab:paths2x2} combines the two accelerations at fixed
$(P,\alpha)$. Each acceleration alone provides a $5$--$6\times$ speed-up,
while their combination reaches $6.8\times$ with an error three orders of
magnitude below the $4\times10^{-6}$ discretization error floor.
Figure~\ref{fig:paths_equiv} (right) illustrates the complementary nature of
the two optimizations: TSE removes the $(P+1)^2$ coefficient evaluation cost,
whereas local correction reduces the number of corrected interactions. The
advantage of combining both mechanisms increases with mesh size
(Fig.~\ref{fig:paths_cost}, right), with split-TSE outperforming full-TSE
beyond approximately $10^3$ panels.

The same accuracy--cost trade-off is observed for the toroidal geometry of
Section~\ref{sec:accel_lc}: $\alpha=3$ provides a $3.6\times$ speed-up with a
deviation of $5.8\times10^{-8}$, while stricter accuracy requirements reduce
the speed-up to $1.6\times$. Since absolute timings depend on the hardware
configuration (Section~\ref{sec:res_gpu}), the reported speed-ups and error
levels represent the transferable performance characteristics.

Overall, these results show that the hybrid QBX formulation retains
near-surface accuracy while the target-specific contraction and local
correction significantly reduce its computational cost. The resulting
evaluation strategy avoids both the near-surface accuracy degradation of
smooth quadrature and the distance-dependent refinement required by
upsampled virtual-casing approaches.

\begin{figure}[t]
  \centering
  \includegraphics[width=0.495\linewidth]{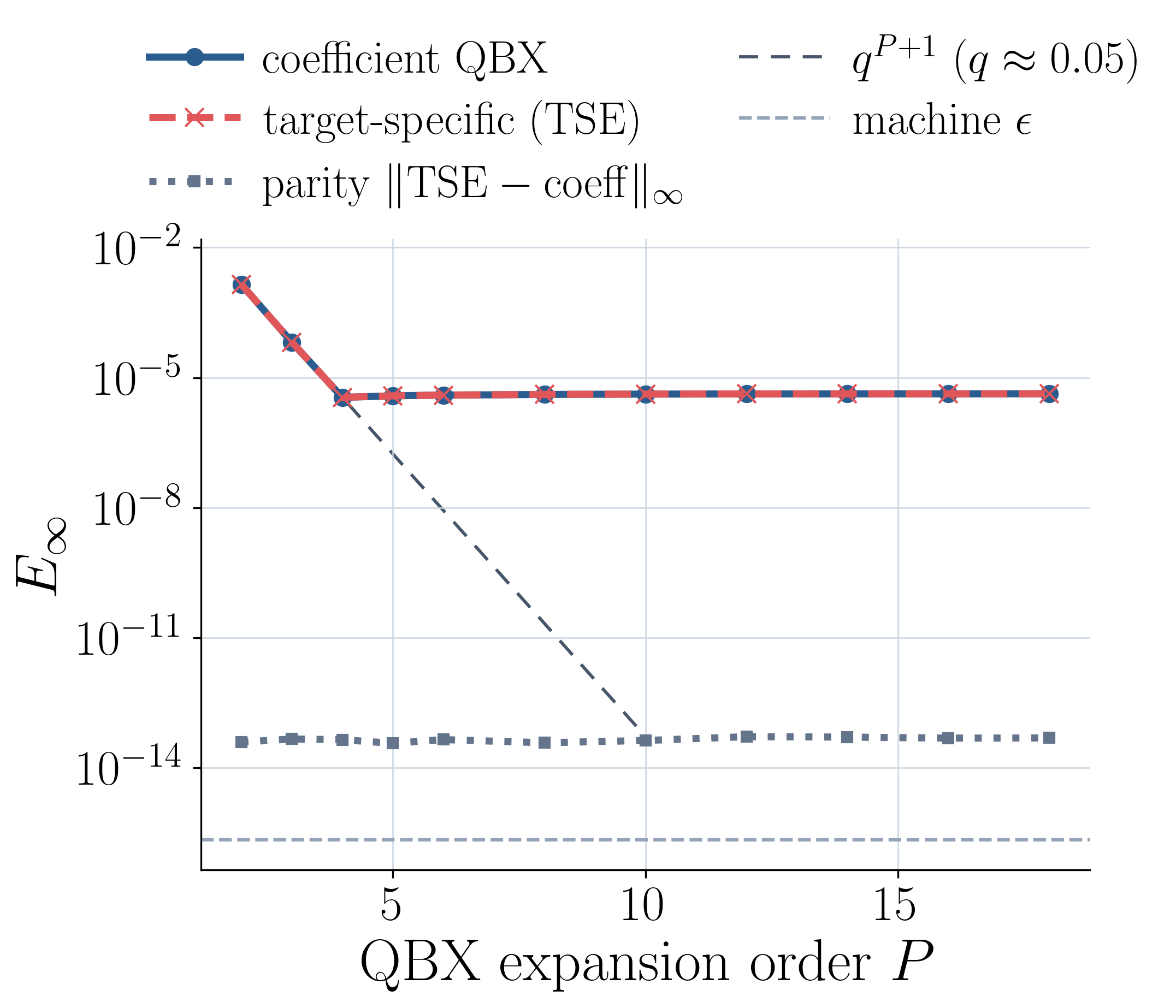}
  \hfill
  \includegraphics[width=0.495\linewidth]{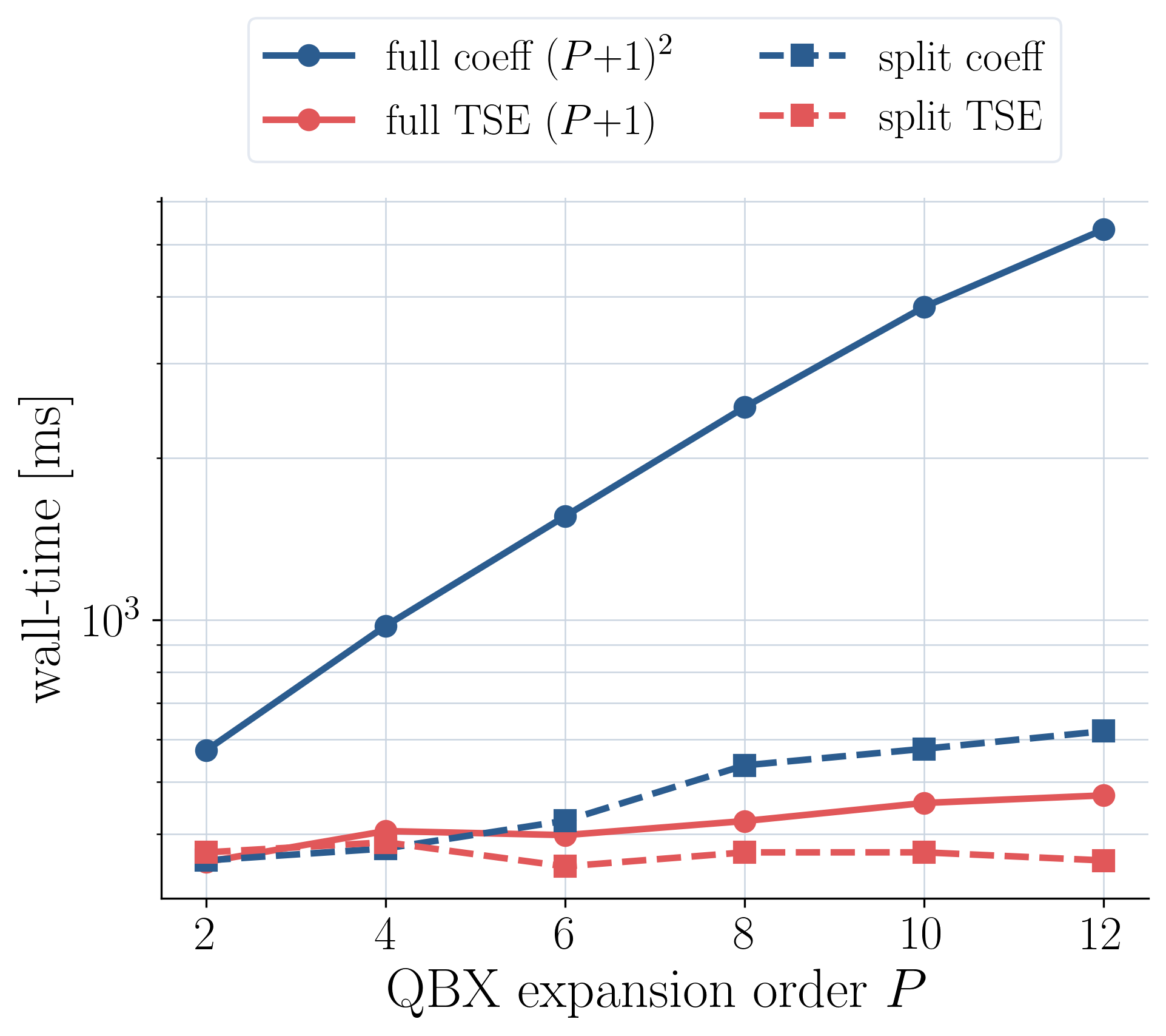}
\caption{Target-specific and coefficient QBX on the Malhotra benchmark.
\emph{Left:} $E_\infty$ relative to $\mathbf B_0$ versus $P$; both evaluations
agree to round-off and have identical error floors. \emph{Right:} wall-time
versus $P$ for the four paths of Table~\ref{tab:paths2x2}, showing the reduced
$P$-scaling of TSE compared with coefficient evaluation.}
  \label{fig:paths_equiv}
\end{figure}

\begin{figure}[t]
  \centering
  \includegraphics[width=0.495\linewidth]{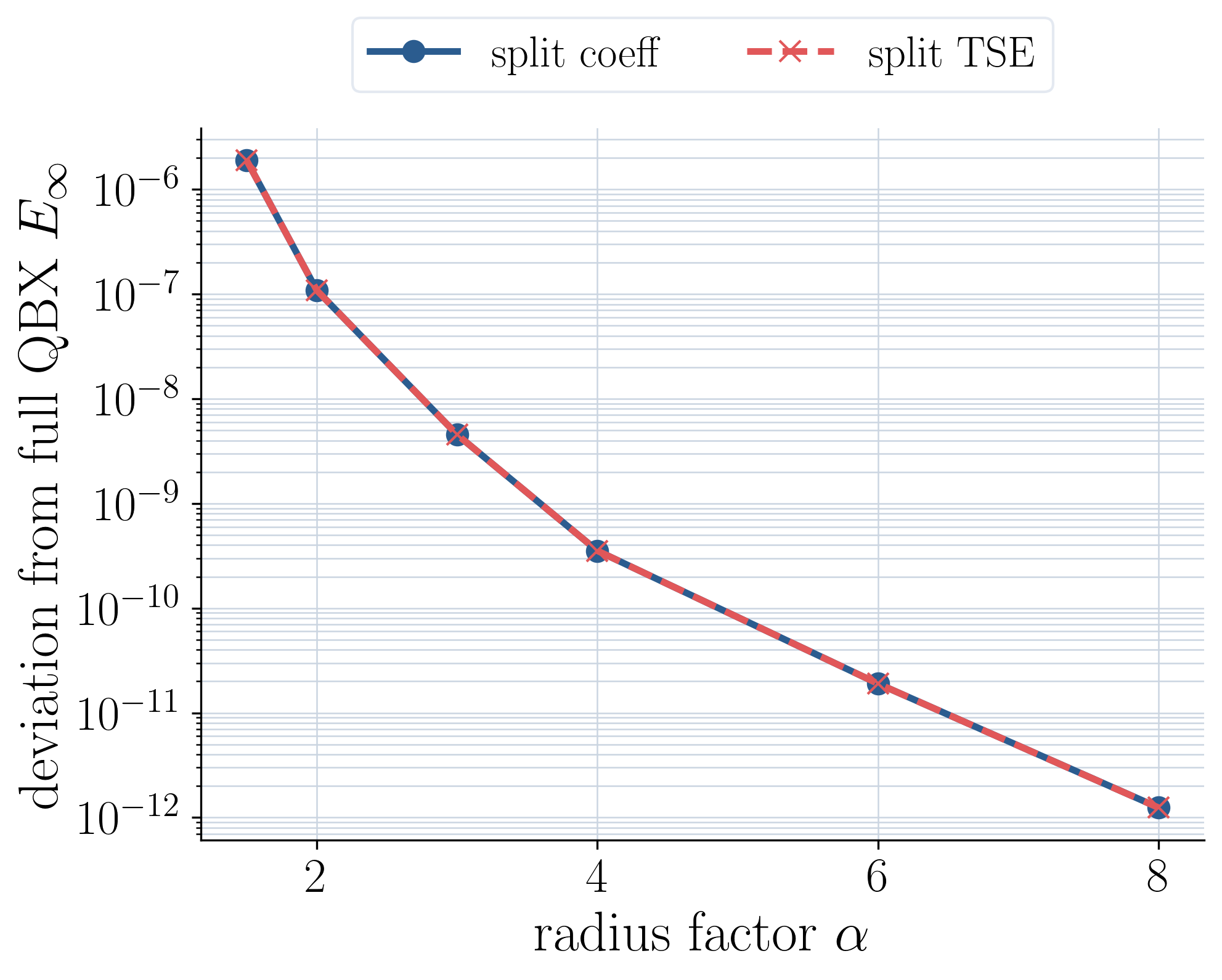}
  \hfill
  \includegraphics[width=0.495\linewidth]{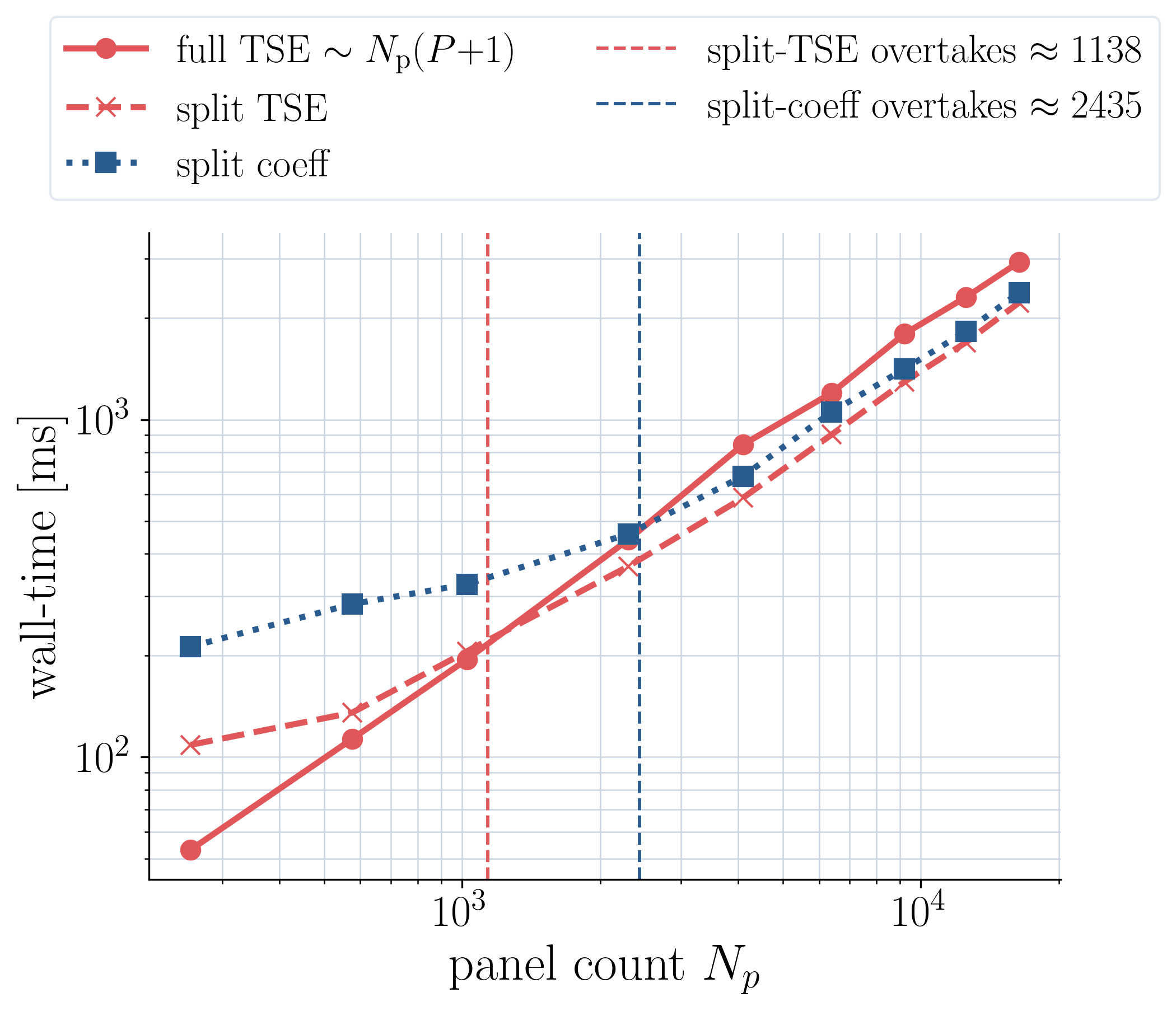}
  \caption{The local-correction split on the Malhotra benchmark.
\emph{Left:} deviation from the full-QBX baseline versus $\alpha$ for
coefficient and TSE paths. \emph{Right:} wall-time versus number of panels at
fixed target count.}
  \label{fig:paths_cost}
\end{figure}

\subsection{REGCOIL sheet currents and two-sided QBX}
\label{sec:res_w7x}
We finally consider a realistic application to the Wendelstein~7-X stellarator
~\cite{klinger2019}. The VMEC vacuum equilibrium
~\cite{hirshman1983vmec} is converted by CLEBSCH (\ref{app:CLEBSCH_Kokkos})
into the self-contained HDF5 representation required by AMPERe. The
sheet-current source is generated internally using the Kokkos implementation
of REGCOIL~\cite{landreman2017regcoil} (see ~\ref{app:REGCOIL_Kokkos}), with the regularization parameter
$\lambda$ varied from the NESCOIL~\cite{merkel1987} limit to strongly
regularized configurations.

Figure~\ref{fig:w7x_regcoil} verifies the generated current distribution
against the VMEC reference field. Increasing the current-potential resolution
$M=m_{\rm pol}=n_{\rm tor}$ reduces the relative RMS error until saturation
near $M=16$, reaching $4\times10^{-5}$ for $RB_\phi$ and
$(1.9,2.4)\times10^{-4}$ for $(B_Z,B_R)$. At $M=24$, the remaining residual
is spatially featureless, with the apparent increase at the interior field
null resulting from the ill-conditioning of relative errors. The AMPERe
REGCOIL implementation agrees with the reference Fortran solver within
$0.001\%$ RMS as detailed in \ref{app:REGCOIL_Kokkos}. The complete workflow, including equilibrium conversion, source
construction, panelization, and field evaluation, is executed within AMPERe framework.

The evaluation study uses the REGCOIL solution at the L-curve knee on a
conformal winding surface offset by $0.30$~m from the LCFS
($m_{\rm pol}=n_{\rm tor}=24$, $n_{\rm fp}=5$). The sheet current
$\mathbf K=\hat{\mathbf n}\times\nabla\Phi$ is discretized using
$40\times200=8000$ cubic panels generated from a $120\times600$ parameter
grid. Targets consist of a $32\times32\times10$ cylindrical grid containing
$10{,}240$ points that straddle the winding surface
(left panel of Fig.~\ref{fig:w7x_conformal}), including the region between the
winding surface and the LCFS.

Since targets are located on both sides of the current sheet, the one-sided
assumption of Section~\ref{sec:hybrid} is extended by assigning each target
to the appropriate QBX expansion side based on its location relative to the
surface. This two-sided treatment reduces the gap-region error from $0.103$
for a single global side choice to $8.5\times10^{-4}$
(Fig.~\ref{fig:w7x_jump}, right). Along the outboard normal, both sides exhibit the
same near-surface behavior: at $d/h=10^{-4}$, direct GL gives errors of
$9.13$ and $0.48$, whereas hybrid QBX with $P=8$ achieves
$1.5\times10^{-3}$ and $1.0\times10^{-4}$ (Fig.~\ref{fig:w7x_conformal}, right). Increasing the expansion order
reduces the QBX error to $3\times10^{-9}$ at $P=16$.
The apparent error discontinuity at $d/h=1$ results from the dispatcher
switch: beyond this distance both methods use the same GL evaluation and
therefore differ only by round-off. The QBX-active region is therefore
$d/h<1$.
\textcolor{red}{The shape of the error curves within this region follows
directly from the geometry of the QBX expansion. The convergence ball has
radius $h/2$, so that a target at $d/h=0.5$ coincides with the expansion
centre, where the truncation error reaches machine precision. As the target
approaches the boundary of the convergence ball, $d/h\to1$, the convergence
of the $P=8$ expansion deteriorates and its error increases. The GL error
shows the opposite trend, decreasing with distance from the surface and
becoming competitive with QBX away from the near-surface region. The two
curves therefore cross at $d/h\approx0.6$, beyond which GL is more accurate
up to the threshold at $d/h=1$. This crossover does not compromise the
accuracy of the hybrid strategy: the $P=8$ QBX error remains below $10^{-5}$
throughout the interval, whereas lowering the threshold would expose
near-surface targets to GL, for which the error can become $O(1)$. The small
increase in the GL error immediately below $d/h=1$ is instead associated
with the reference solution: the $P=16$ expansion is itself least accurate
near the boundary of the convergence ball, and the relative error there
therefore reflects, in part, the error of the reference. Overall, the
observed features---the minimum at the expansion centre, the deterioration
of the QBX expansion toward the boundary of its convergence ball, and the
vanishing contribution of QBX beyond the switching threshold---are
consistent with the expected convergence behavior and provide a useful
consistency check of the implementation.}

Finally, the BVH-based local correction reproduces the full hybrid result with
a maximum deviation of $1.8\times10^{-12}$ while reducing the
$10{,}240$-target evaluation time from $239$~s to $61$~s
(Fig.~\ref{fig:w7x_accel}, right). The deviation from the unaccelerated hybrid
evaluation decreases spectrally with the radius factor $\alpha$, with larger
$P$ allowing tighter agreement, until reaching the assembly round-off floor (Fig.~\ref{fig:w7x_accel}, left).
The accelerated cost remains nearly independent of $\alpha$ and stays below
the corresponding unaccelerated evaluations.

Because the source is a genuine current sheet, it provides a physical
validation beyond self-convergence: the two-sided field must satisfy
$\hat{\mathbf n}\times(\mathbf B_{\rm out}-\mathbf B_{\rm in})=\mu_0\mathbf K$
with a continuous normal component. The evaluation reproduces the tangential
jump to $1.3\%$ of $|\mu_0\mathbf K|$ and normal continuity to
$9.3\times10^{-7}$ (Fig.~\ref{fig:w7x_jump}, left), recovering the coil-design
sheet current from its own field.

The same framework is applied to the virtual-casing configuration of
Table~\ref{tab:configs}. The total equilibrium field on the W7-X LCFS is
converted into Stratton--Chu surface traces using $20{,}480$ quadratic panels
generated from the $128\times640$ parameter grid. Since the LCFS is a
stellarator flux surface, $\sigma=\hat{\mathbf n}\cdot\mathbf B$ vanishes to
$10^{-19}$, and the virtual-casing field is therefore determined entirely by
the $\mathbf K$ contribution.

Along the outboard normal (Fig.~\ref{fig:w7x_vcp}), direct GL loses accuracy
in the near-surface regime, reaching a relative error of $2\times10^{3}$ at
$d/h=10^{-4}$. In contrast, the hybrid evaluator remains stable, with errors
of $8.9\times10^{-3}$ at $P=8$ and $1.3\times10^{-3}$ at $P=12$. The reference
solution is obtained from a higher-order QBX evaluation on the same panels,
so the reported error measures the QBX truncation and discretization effects.
For $d/h>1$, both evaluations switch to the GL branch and therefore differ
only at round-off level.

Moreover, the reconstructed field preserves the equilibrium symmetries to near
machine precision (Fig.~\ref{fig:w7x_sym}). The five-fold field-period
residual is $6.6\times10^{-10}$ rms
($6.6\times10^{-9}$ at the worst exterior point), while the stellarator-
symmetry residual is $8.1\times10^{-11}$ rms
($2.1\times10^{-9}$ maximum). The divergence error remains at the far-field
machine-precision level. These symmetries are not explicitly imposed in the
evaluator: the surface discretization, BVH construction, and QBX expansions
do not encode $n_{\rm fp}$ or stellarator symmetry. Their recovery at the
$10^{-9}$ level therefore provides an end-to-end verification of the complete
pipeline, from surface traces through quadrature evaluation and acceleration.

The reconstructed field magnitude also illustrates the physical scale of the
virtual-casing contribution (Fig.~\ref{fig:w7x_vcp}, right). For this vacuum
equilibrium, the virtual-casing field is associated with the residual
toroidal current \textcolor{red}{(${\approx}194$~A)}, whose characteristic scale
$\mu_0 I/(2\pi a)\approx10^{-4}$~T is consistent with the reconstructed
surface field magnitude of approximately $6\times10^{-4}$~T. Along the
evaluation sweep, the magnitude decreases monotonically to
$5\times10^{-7}$~T. This scale is small compared with the $2.5$~T W7-X vacuum
field, whose experimental verification reached better than one part in
$10^{5}$ accuracy~\cite{pedersen2016}. The numerical challenge is therefore
the accurate reconstruction of a small residual field: where the true field
is $10^{-4}$~T, direct GL produces spurious values exceeding $1$~T near the
surface, whereas the hybrid evaluator follows the reference smoothly across
the near-surface region. We expect that, in a finite-$\beta$ equilibrium, the same method can be
applied directly to recover the plasma field with the accuracy established
above; an investigation left for future work.

\begin{figure}[t]
  \centering
  \includegraphics[width=0.53\linewidth]{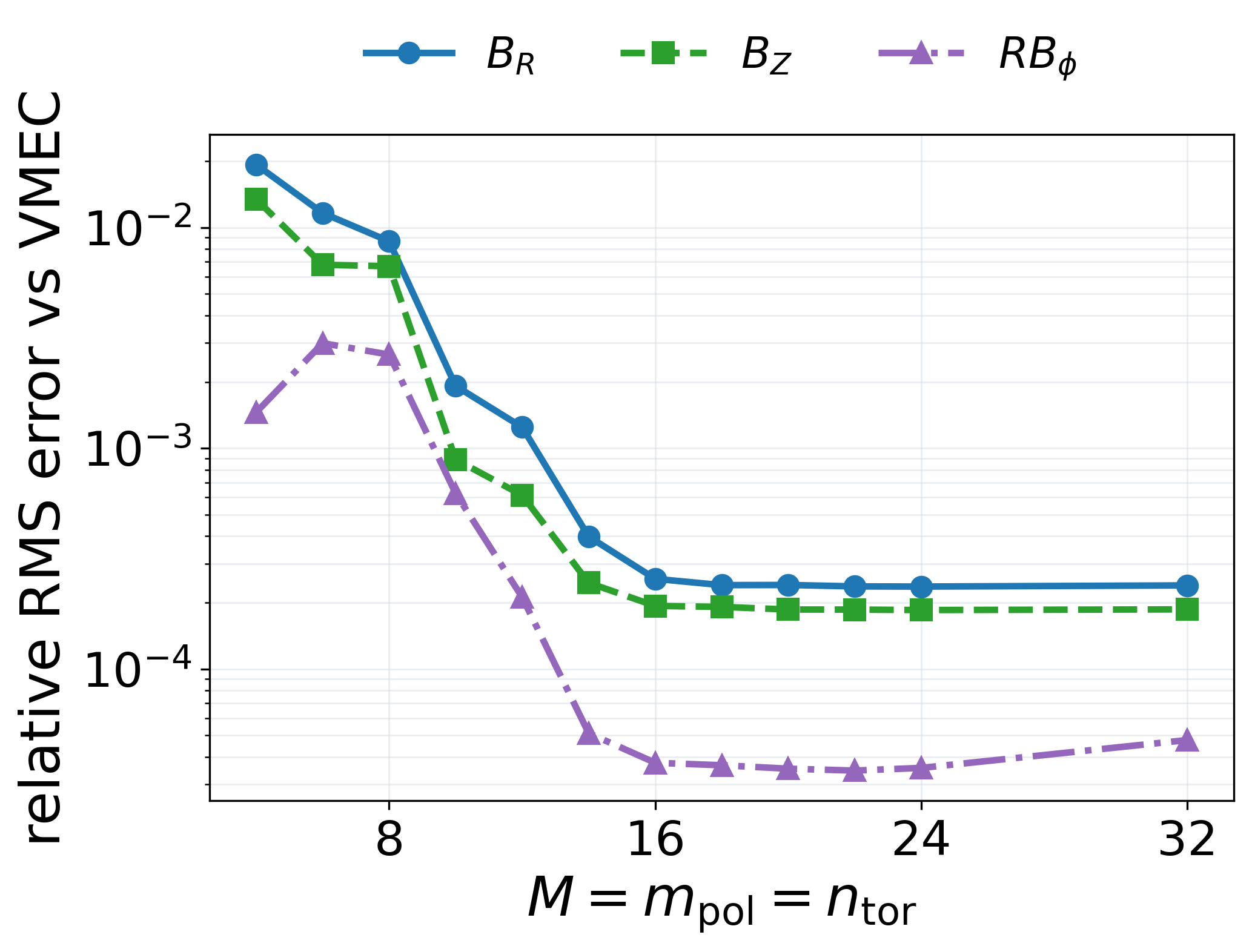}
  \hfill
  \includegraphics[width=0.44\linewidth]{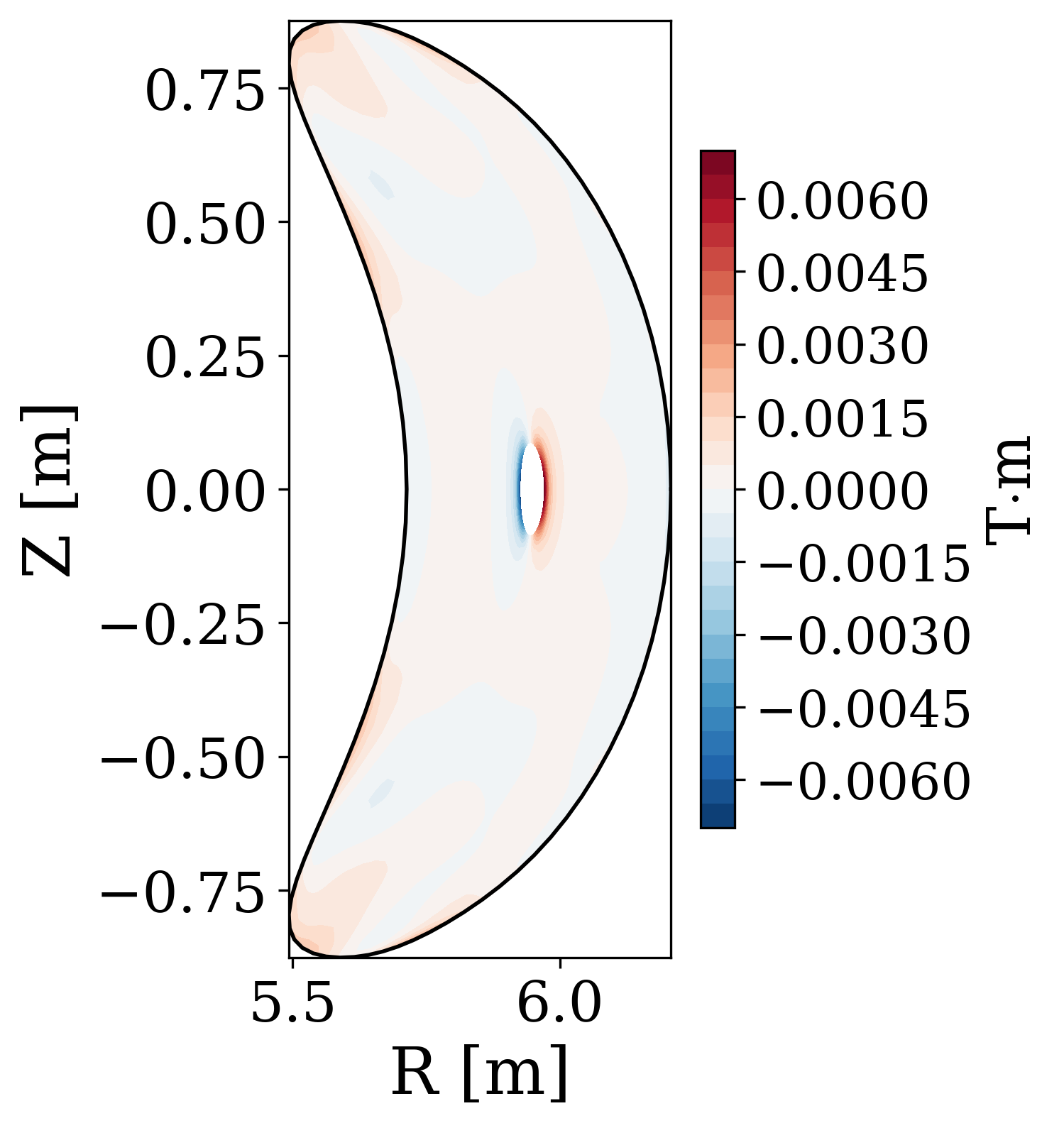}
  \caption{Equilibrium-field reproduction by the W7-X sheet-current source.
\emph{Left:} relative RMS error with respect to VMEC at interior cross-section
points versus current-potential resolution $M$ ($\lambda\to0$). The error
saturates near $M=16$, reaching $4\times10^{-5}$ for $RB_\phi$ and
$\sim2\times10^{-4}$ for $B_R$ and $B_Z$. \emph{Right:} $RB_\phi$ residual map
at $M=24$ on the bean cross-section (RMS $=0.004\%$); the remaining structure
is localized near the interior field null.}
  \label{fig:w7x_regcoil}
\end{figure}

\begin{figure}[t]
  \centering
  \includegraphics[width=0.465\linewidth]{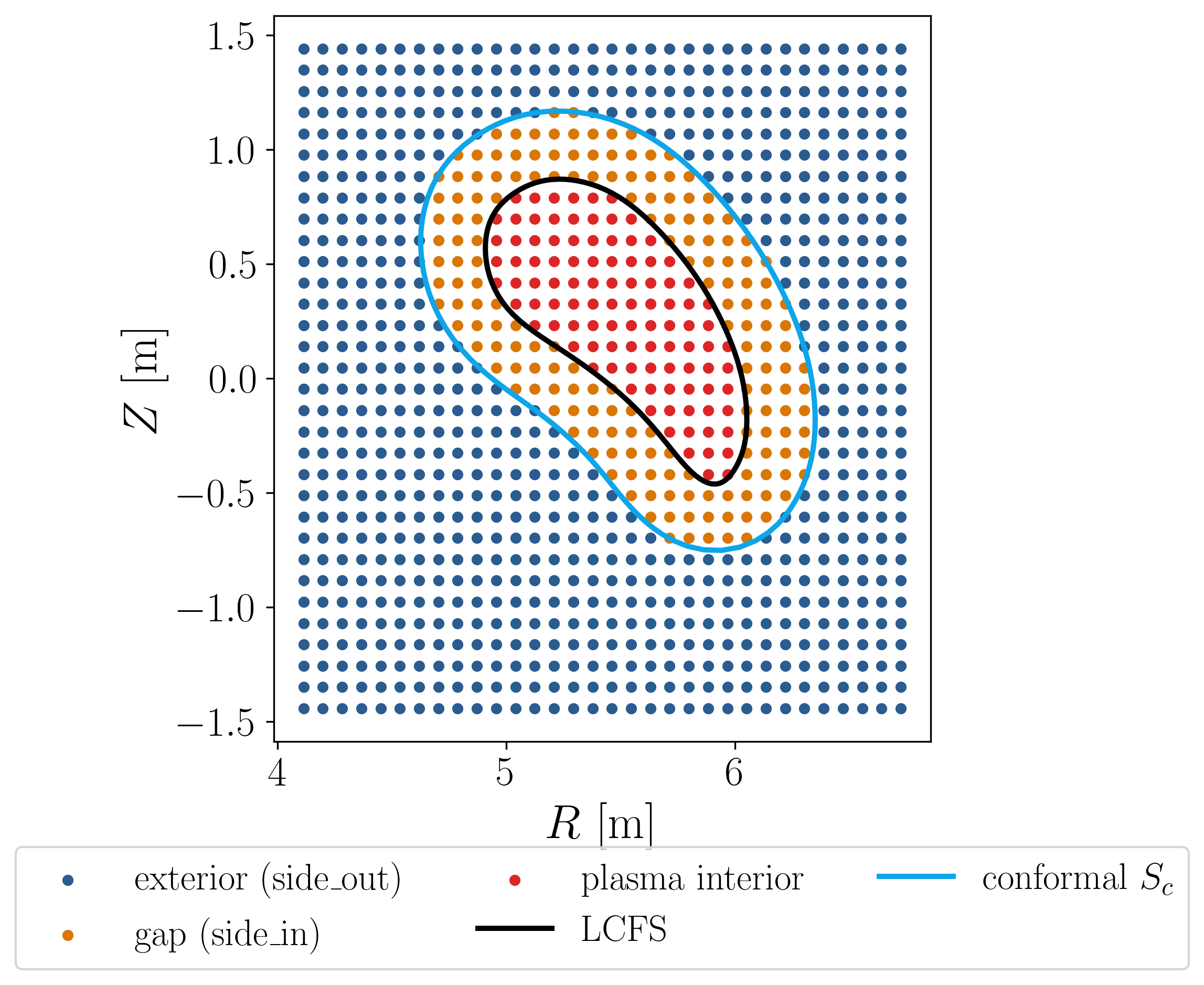}
  \hfill
  \includegraphics[width=0.525\linewidth]{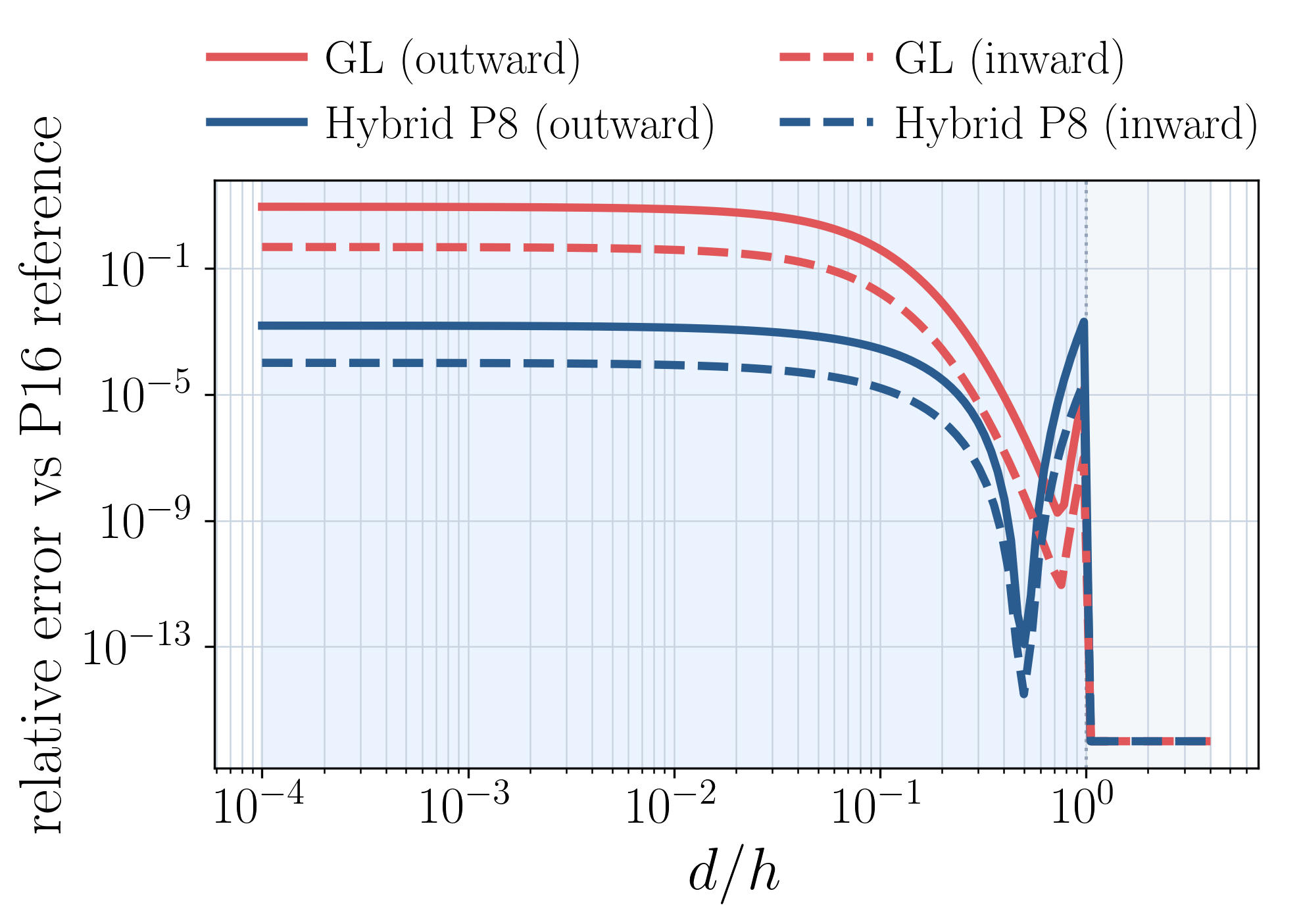}
  \caption{Two-sided QBX on the conformal winding surface.
\emph{Left:} target-side classification relative to the winding surface and
LCFS. \emph{Right:} relative error versus $d/h$ along the outboard normal:
direct GL versus two-sided hybrid QBX ($P=8$) with a $P=16$ reference.
\textcolor{red}{The shaded band $d/h<1$ is the QBX-active (near) region. For
$d/h\ge1$ the hybrid evaluator and the $P=16$ reference fall back to the same
direct-GL branch, so their difference vanishes identically and the curves drop
to round-off.}}
  \label{fig:w7x_conformal}
\end{figure}

\begin{figure}[t]
  \centering
  \includegraphics[width=0.495\linewidth]{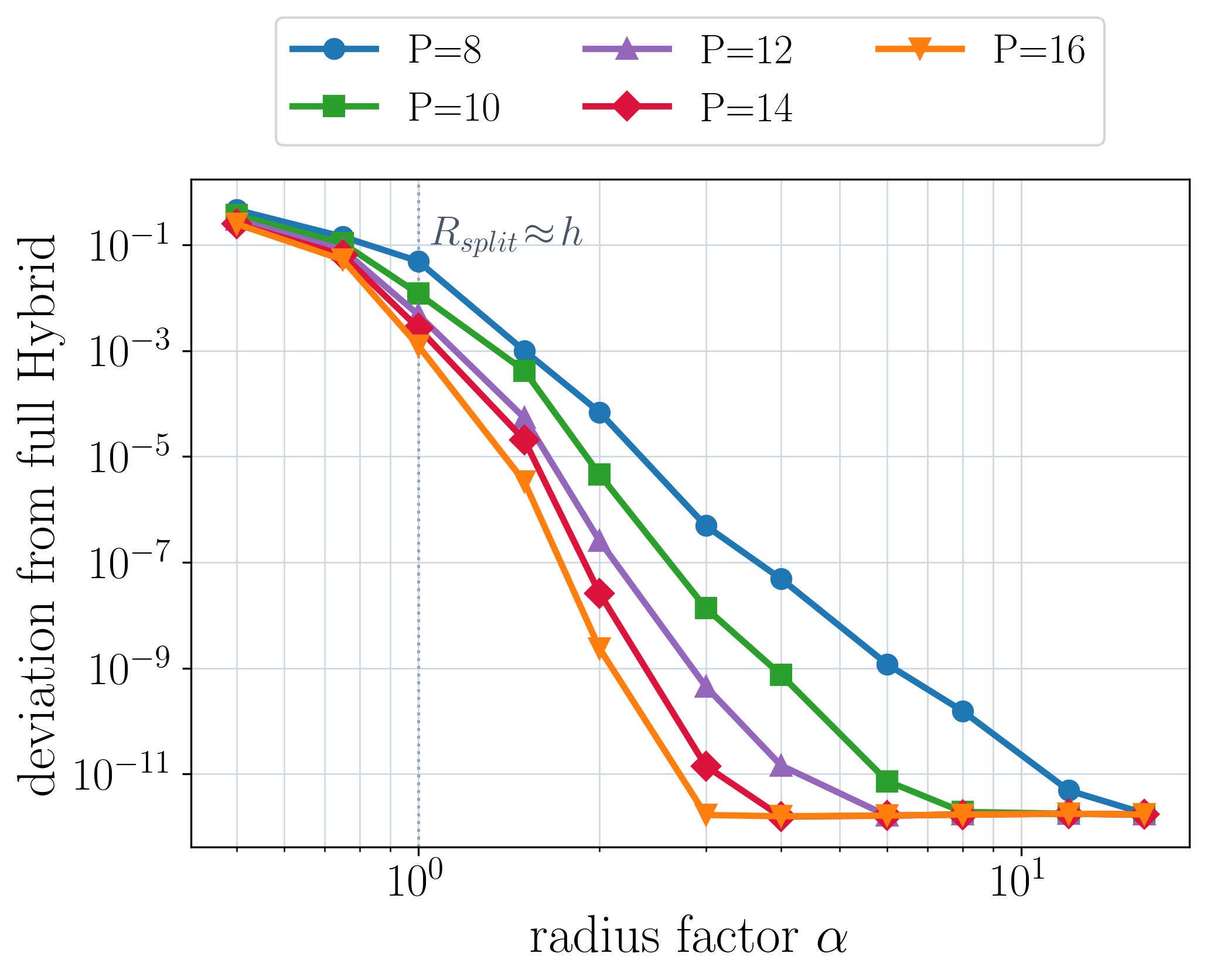}
  \hfill
  \includegraphics[width=0.495\linewidth]{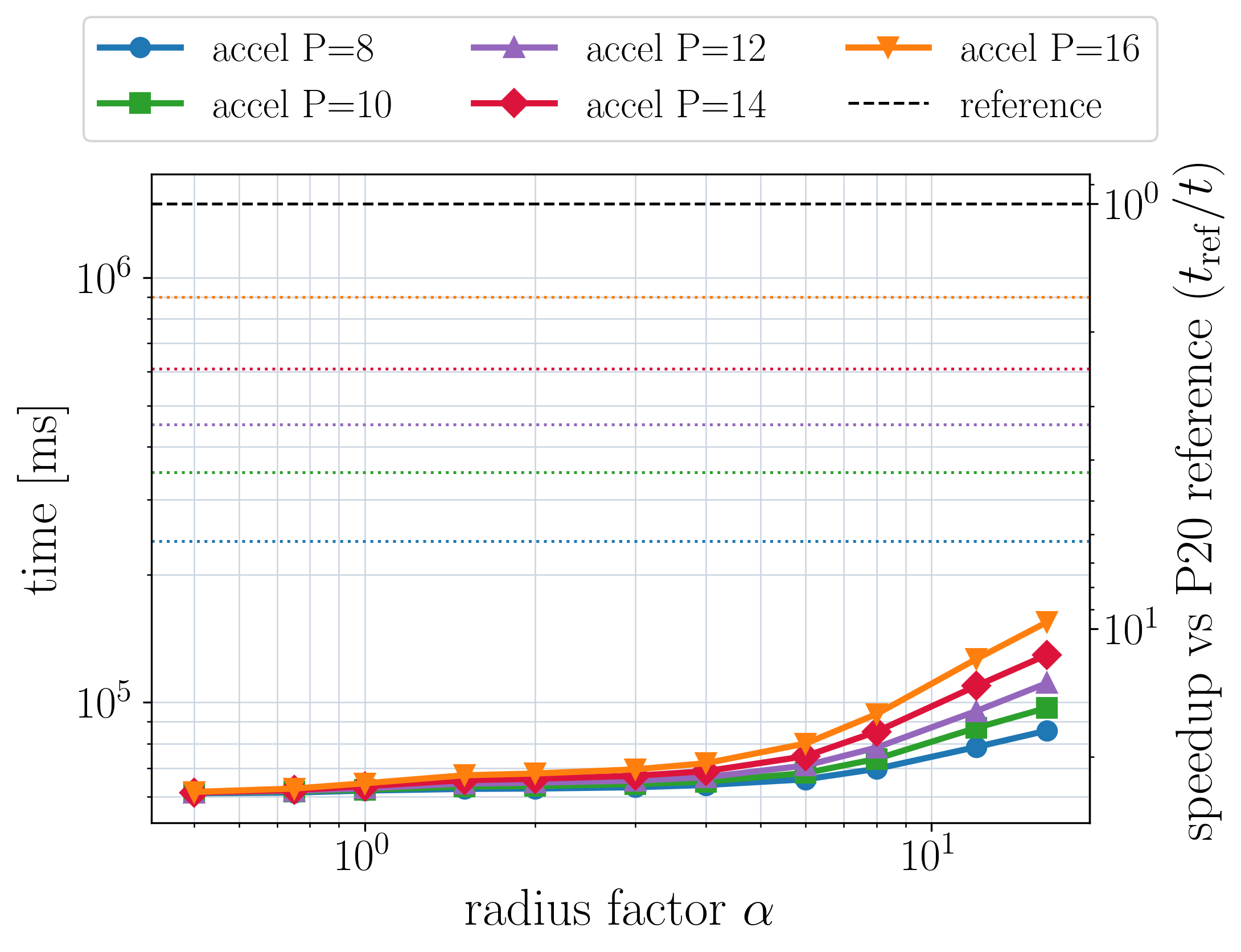}
  \caption{BVH local correction on the conformal W7-X problem
($R_{\rm split}=\alpha h$). \emph{Left:} maximum deviation of the accelerated
evaluation from the full hybrid result at fixed $P$ versus $\alpha$, showing
spectral convergence to the $\sim10^{-12}$ round-off floor. \emph{Right:}
wall-time versus $\alpha$ for the accelerated evaluation compared with the
full-hybrid baselines and the $P=20$ reference; the right axis shows the
corresponding speed-up.}
  \label{fig:w7x_accel}
\end{figure}

\begin{figure}[t]
  \centering
  \includegraphics[width=0.495\linewidth]{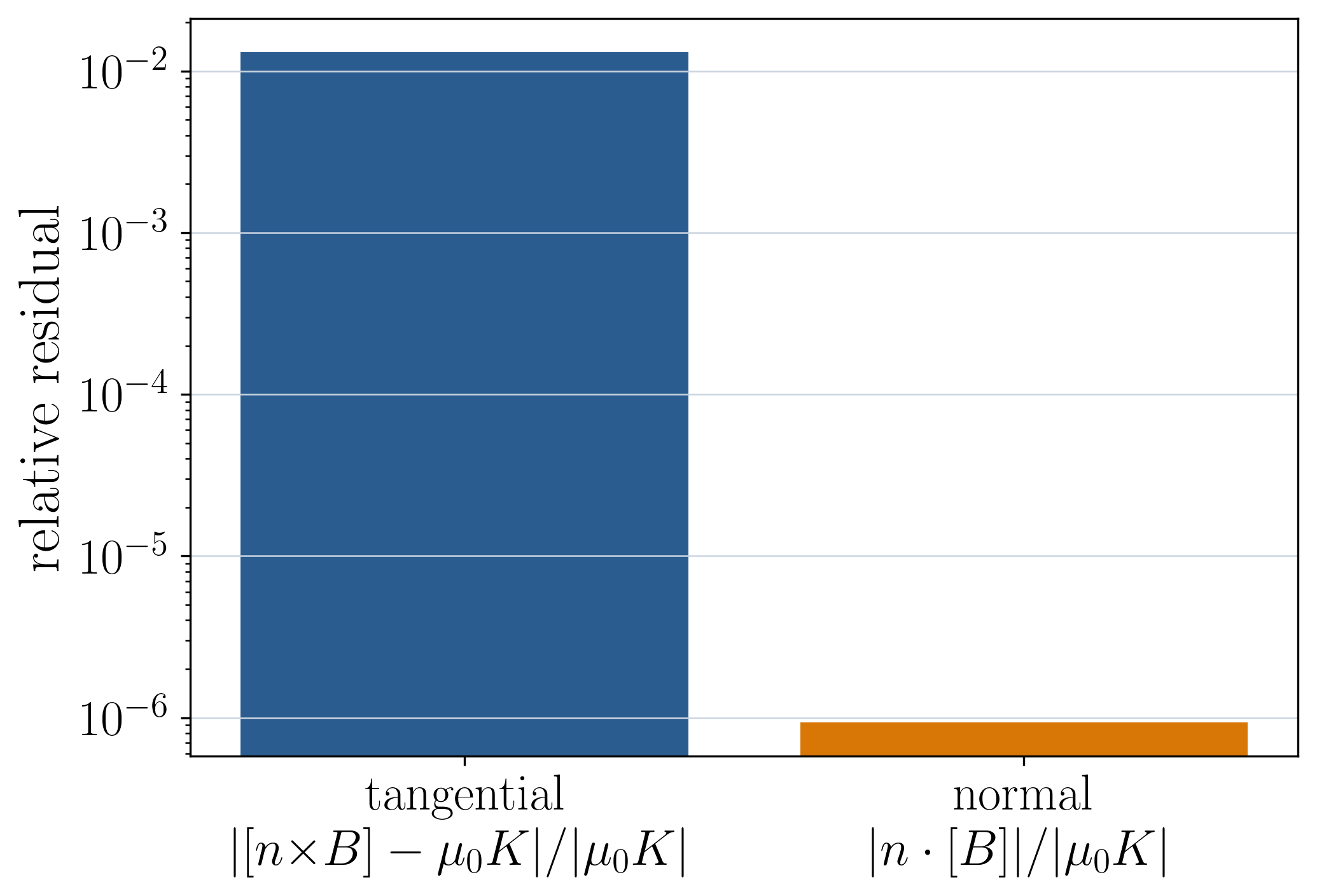}
  \hfill
  \includegraphics[width=0.495\linewidth]{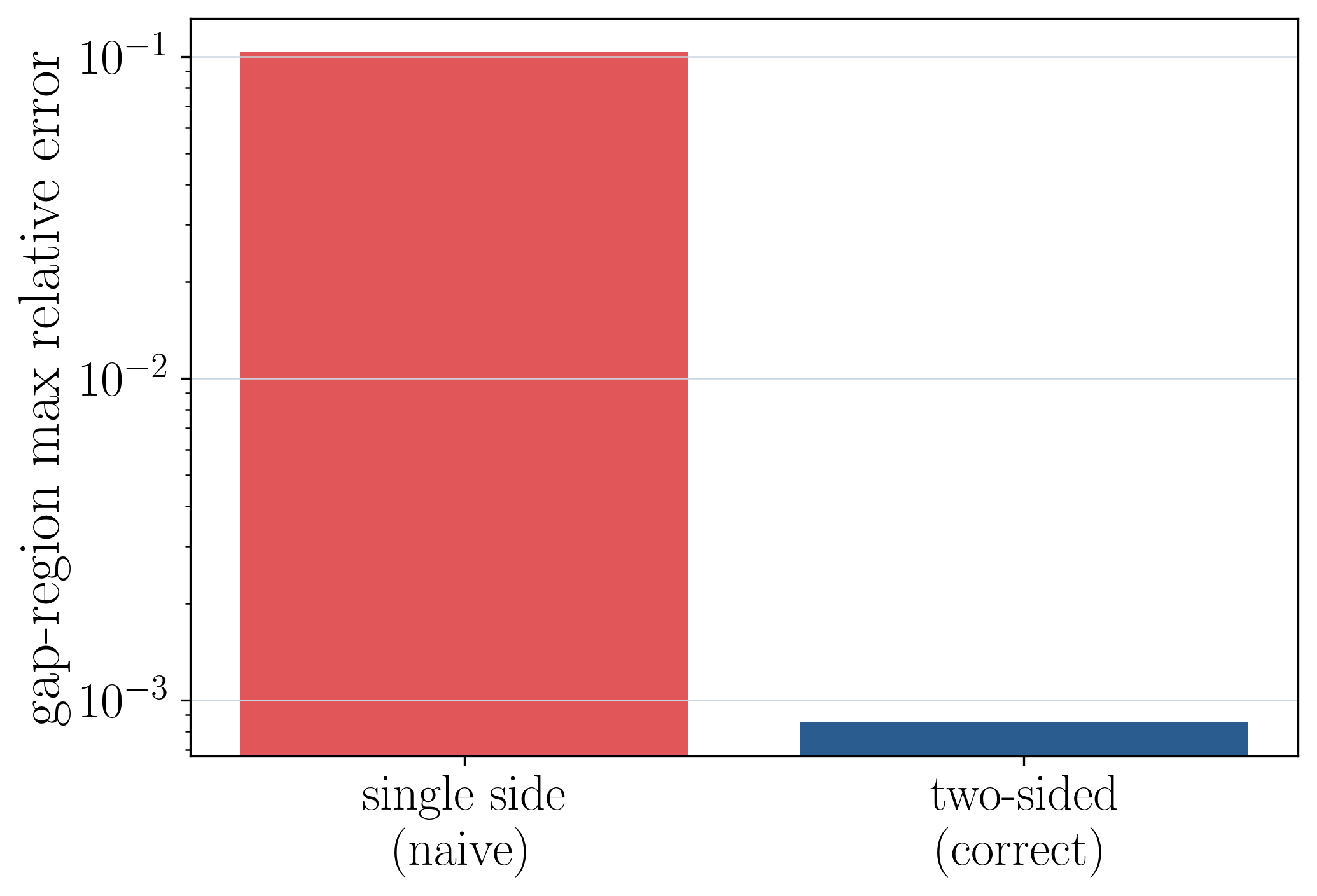}
  \caption{Verification of the two-sided QBX treatment on the winding surface.
\emph{Left:} recovery of the sheet jump condition
$\hat{\mathbf n}\times[\mathbf B]=\mu_0\mathbf K$ and continuity of the normal
field component. \emph{Right:} gap-region error for global versus per-target
expansion-side selection.}
  \label{fig:w7x_jump}
\end{figure}

\begin{figure}[t]
  \centering
  \includegraphics[width=0.495\linewidth]{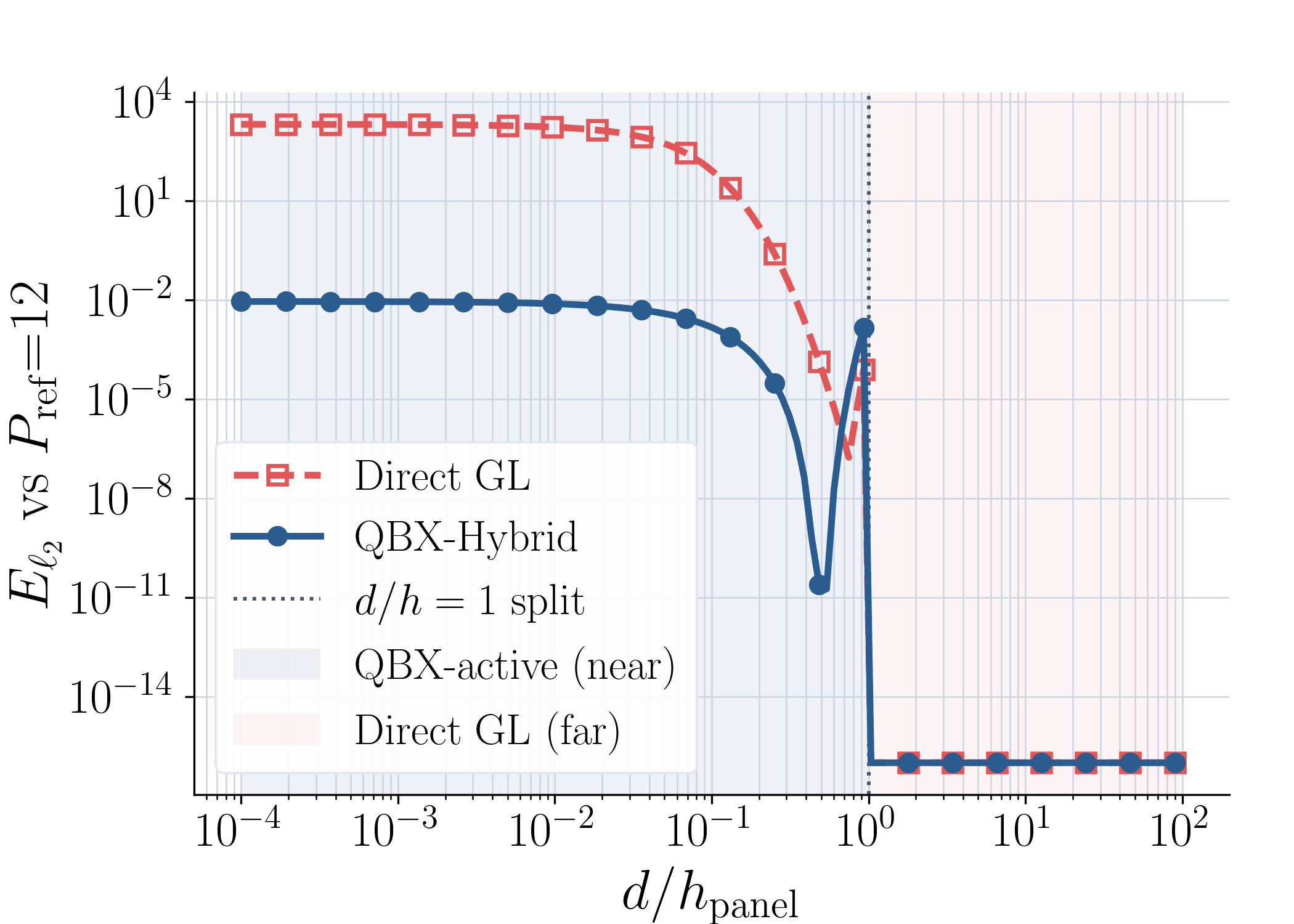}
  \hfill
  \includegraphics[width=0.495\linewidth]{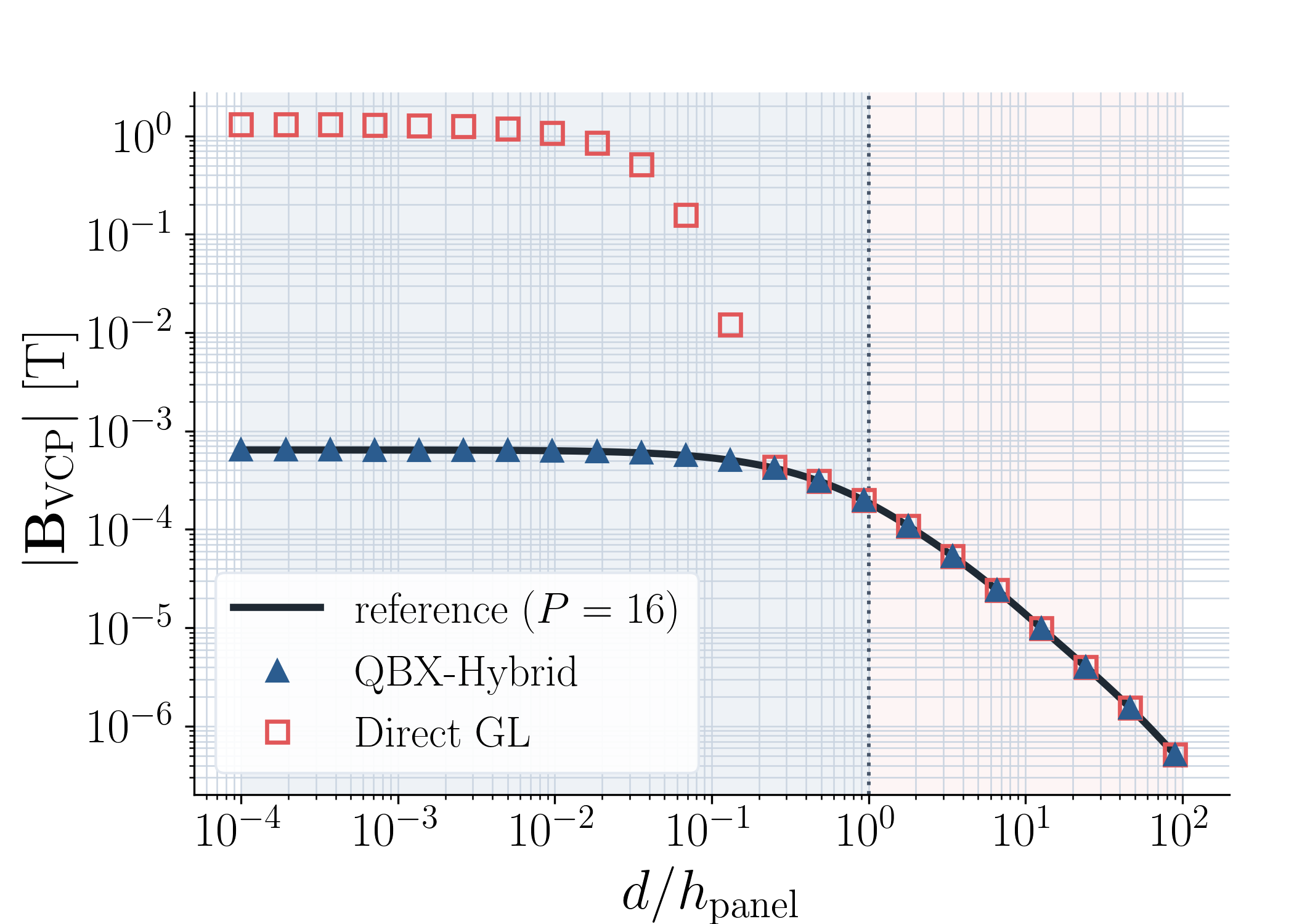}
  \caption{Virtual casing on the W7-X LCFS  along
the outboard panel normal. \emph{Left:} relative error versus
$d/h_{\rm panel}$. \emph{Right:} reconstructed virtual-casing field
magnitude for $P=12$ with a $P=16$ reference. The hybrid evaluator resolves the
$\sim10^{-4}$~T vacuum-residual field.}
  \label{fig:w7x_vcp}
\end{figure}

\begin{figure}[t]
  \centering
  \includegraphics[width=0.56\linewidth]{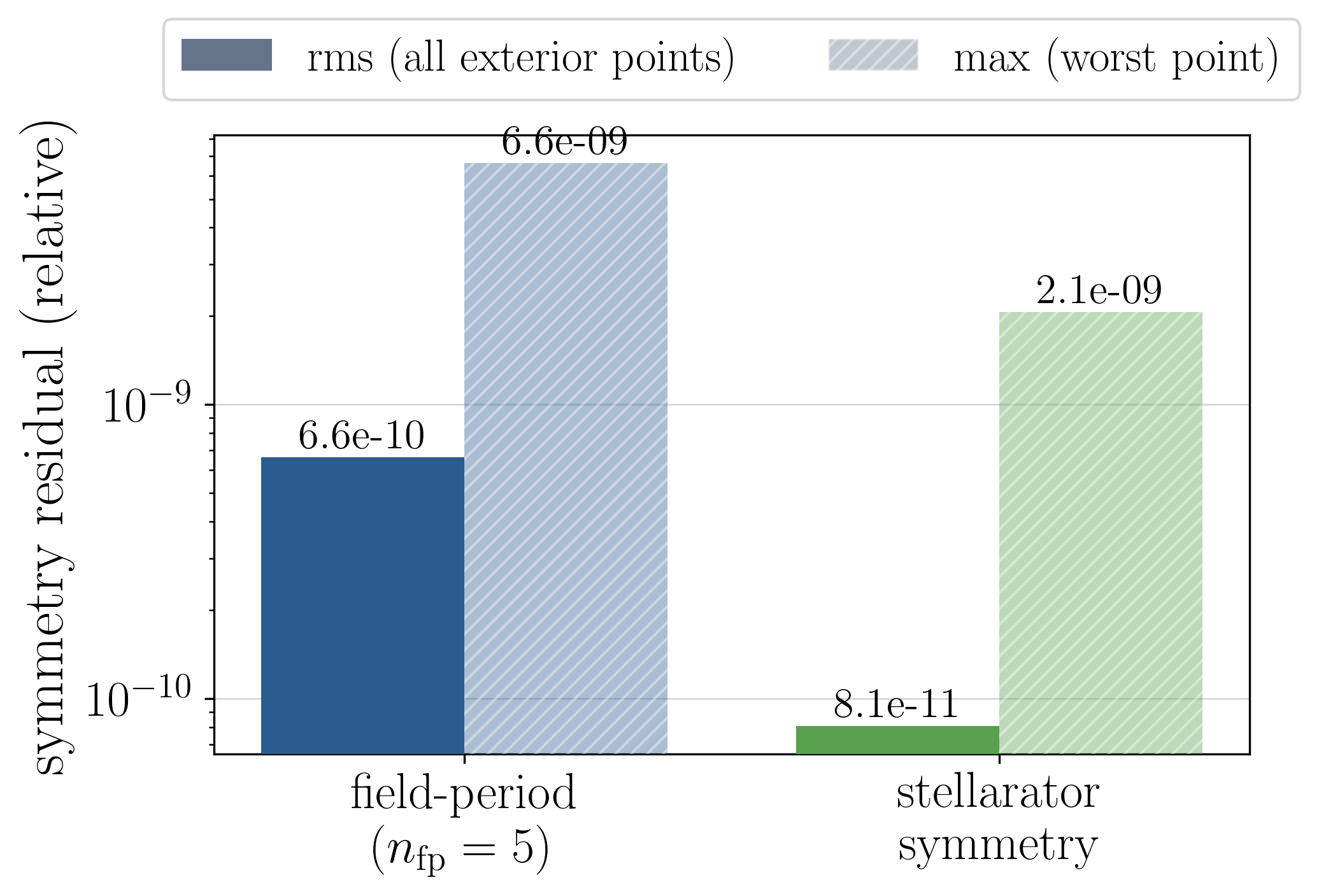}.
  \caption{Equilibrium symmetries of the evaluated W7-X exterior field:
relative residuals of the five-fold field-period and stellarator symmetries
over exterior grid points.}
  \label{fig:w7x_sym}
\end{figure}

\subsection{Structure preservation}
\label{sec:res_divfree}

Proposition~\ref{prop:divfree} shows that the QBX $\mathbf K$-term field is
solenoidal by construction at every truncation order and provides an analytic
Jacobian from the same local expansion. This property is verified on the
Malhotra configuration (exact reference) and the W7-X LCFS (production
geometry) by comparing the analytic divergence with finite-difference
approximations of the evaluated field.

For the Malhotra configuration (Fig.~\ref{fig:divfree}, left), the analytic
divergence remains at round-off level ($\sim10^{-20}$), whereas finite
differences exhibit the expected truncation--round-off trade-off. The optimal
steps give minimum errors of $7\times10^{-15}$ for FD-4 and
$2\times10^{-15}$ for FD-6. These values are obtained only for very small
spatial steps and are not practical for production solvers due to the
associated computational cost.

For the W7-X near-surface shell (Fig.~\ref{fig:divfree}, right), using the
normalized metric $|\nabla\cdot\mathbf B|a/|\mathbf B|$, the analytic
divergence has an rms value of $1.1\times10^{-13}$ and a maximum of
$9.1\times10^{-13}$, compared with $2.1\times10^{-3}$ for FD-2 and
$1.8\times10^{-7}$ for FD-4. The observed discrepancy is therefore entirely
attributed to finite-difference error rather than to a violation of
$\nabla\cdot\mathbf B=0$ by the evaluator.

For applications requiring field derivatives, such as field-line tracing or
stability calculations, the analytic Jacobian provides derivatives directly
from the QBX expansion, avoiding the step-size selection and accuracy
limitations associated with finite differences.

The curl provides the complementary property discussed in
Remark~\ref{rem:curl}. It is evaluated analytically from the same local
expansion,
$\nabla\times\mathbf B_P=\nabla(\nabla\cdot\mathbf A_P)$, and agrees with a
fourth-order finite-difference Jacobian to $2\times10^{-12}$. For a closed-loop
source, the result is at round-off level ($5\times10^{-16}$) for all tested
$P$. Since the Stratton--Chu surface trace is analytically divergence-free, the
curl is expected to reflect the accuracy of the discrete source
representation rather than QBX truncation, as confirmed in
Fig.~\ref{fig:curl}.

For the Malhotra configuration, the normalized curl
$|\nabla\times\mathbf B|\,r_{\rm QBX}/|\mathbf B|$ is nearly independent of
$P$ ($\sim6\times10^{-7}$), but decreases strongly with mesh refinement:
$4.8\times10^{-4}$, $3.1\times10^{-5}$, and $6.8\times10^{-7}$ for increasing
resolution. Reducing the panel quadrature order to $n_{\rm gl}=4$ increases
the error to $4\times10^{-6}$. For the W7-X near-surface shell, the normalized
curl $|\nabla\times\mathbf B|\,a/B_0$ has an rms value of
$5.3\times10^{-5}$ and a maximum of $2.2\times10^{-4}$, again with negligible
dependence on $P$, while the divergence remains at round-off level. The curl
therefore quantifies the accuracy of the discrete source data, whereas the
divergence verifies the solenoidal construction of the evaluator.

\begin{figure}[t]
  \centering
  \includegraphics[width=0.495\linewidth]{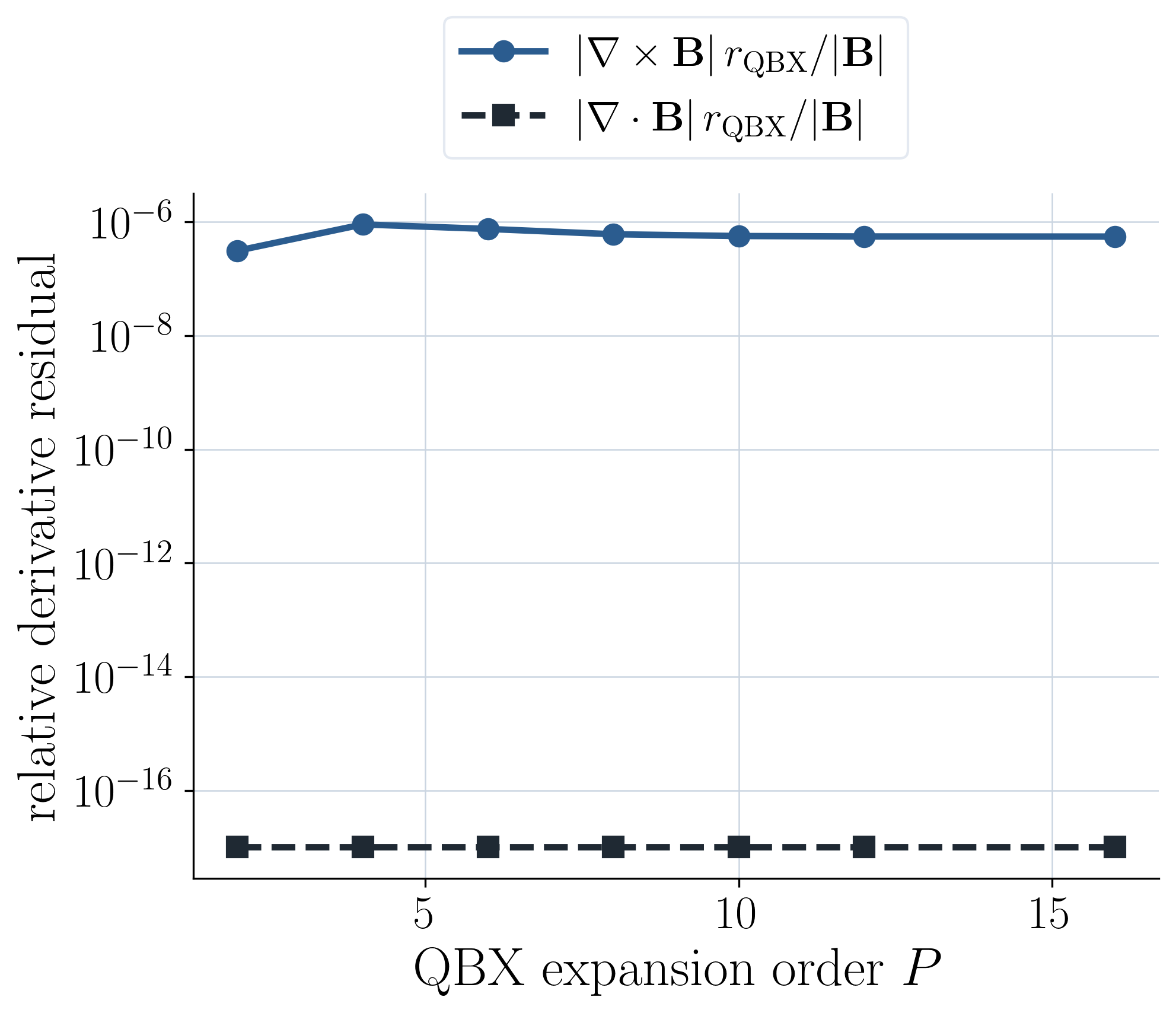}
  \hfill
  \includegraphics[width=0.495\linewidth]{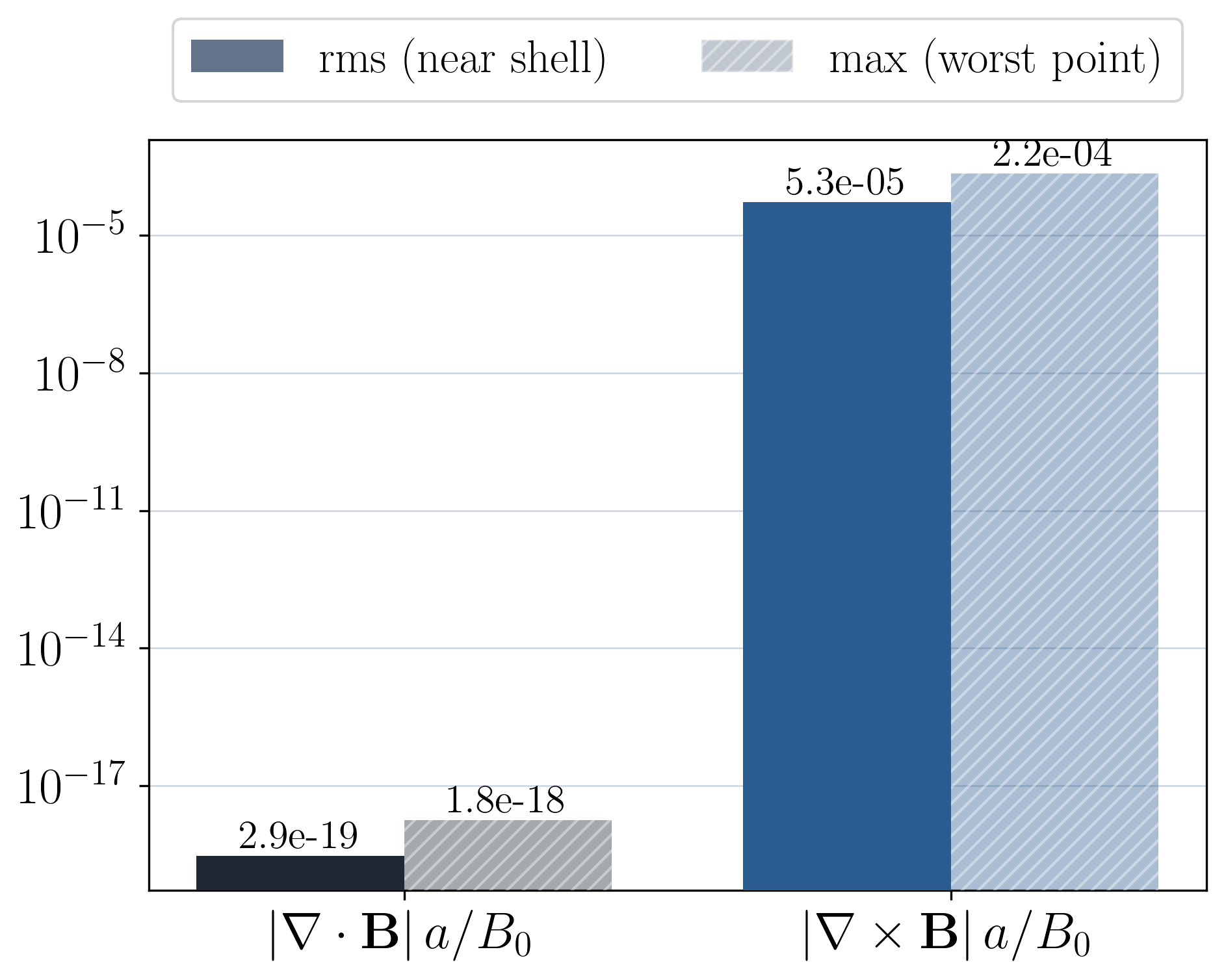}
  \caption{Div/curl asymmetry. Both derivatives are
computed analytically from the same QBX expansions. \emph{Left:} Malhotra
case. \emph{Right:} W7-X current sheet.}
  \label{fig:curl}
\end{figure}

\begin{figure}[t]
  \centering
  \includegraphics[width=0.495\linewidth]{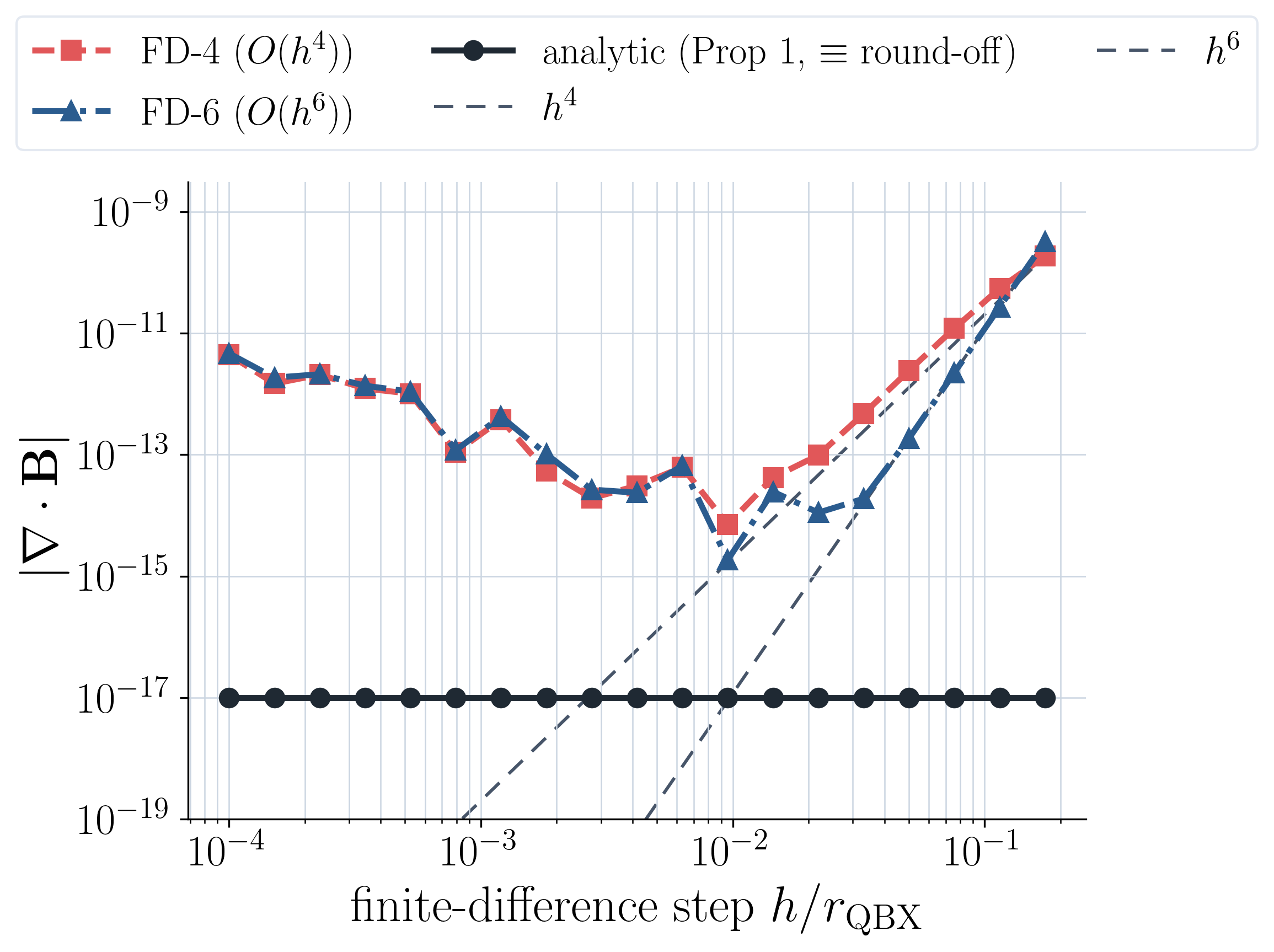}
  \hfill
  \includegraphics[width=0.495\linewidth]{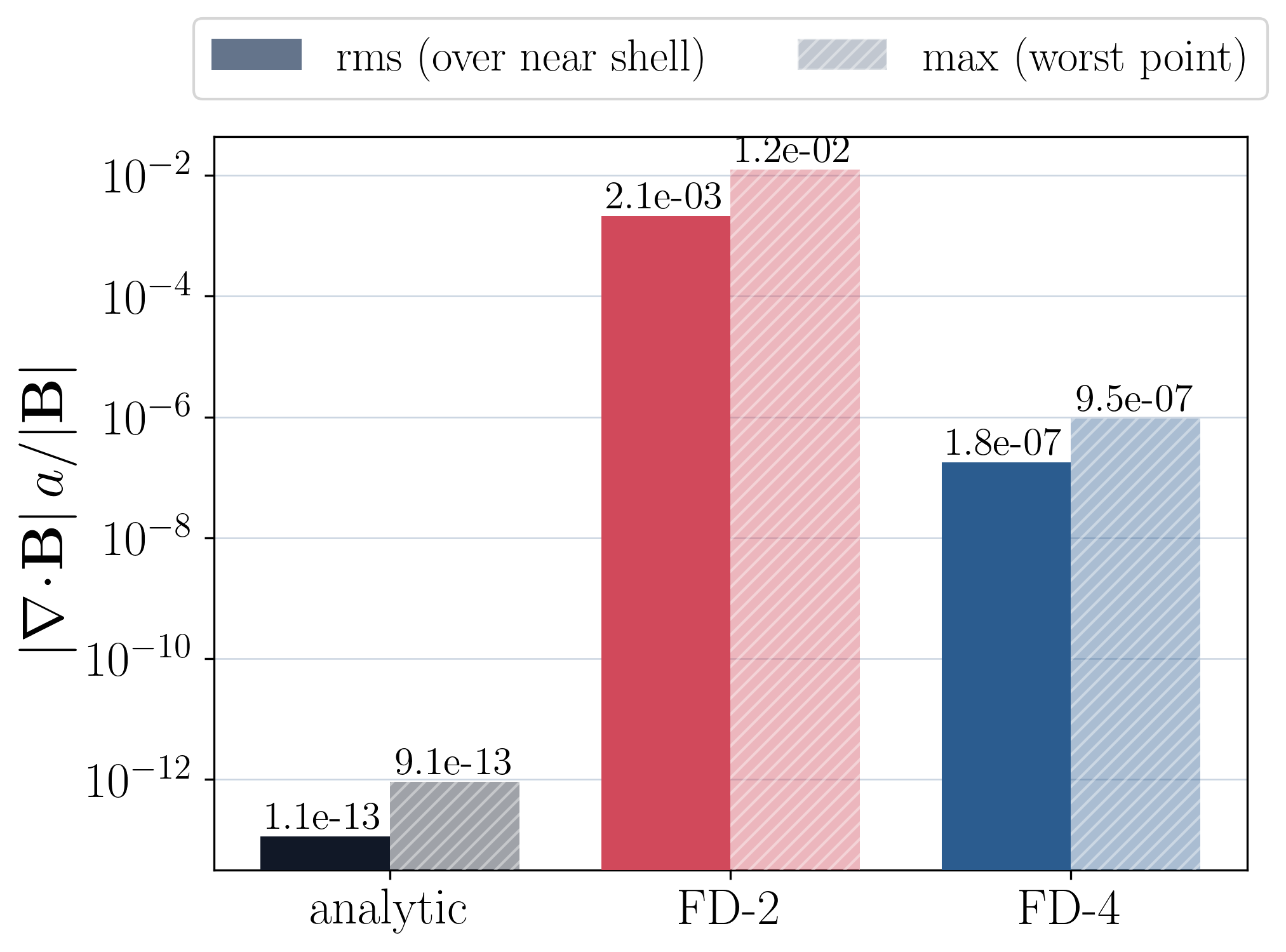}
  \caption{Divergence-free evaluation from a single QBX expansion. \emph{Left:}
Malhotra configuration: analytic divergence at round-off compared with FD-4/FD-6
convergence curves. \emph{Right:} W7-X near-surface shell.}
  \label{fig:divfree}
\end{figure}

\subsection{Performance portability: the same evaluator on CPU and GPU}
\label{sec:res_gpu}

All CPU timings use the OpenMP backend on a dual-socket Intel Xeon
E5-2640~v4 node (40 threads), while GPU timings use the \emph{same source
code} through the Kokkos CUDA backend~\cite{Kokkos_2014} on a single NVIDIA
RTX~2080~Ti. On the production W7-X problem of
Section~\ref{sec:res_w7x}, the CPU and GPU produce identical physics: TSE and
coefficient paths agree to round-off ($3\times10^{-12}$--$3\times10^{-11}$),
and the local-correction deviations are identical to every reported digit.

Figure~\ref{fig:w7x_gpu} shows the performance crossover. GPU execution is
launch-limited for small target sets, while CPU time scales linearly with
workload; the split-TSE path crosses over at
$N_t\approx4\times10^3$ targets and reaches $168$~s versus $564$~s at
$65{,}536$ targets. Because both backends execute the same source, the choice
of CPU or GPU depends only on workload.

The per-path timings highlight the algorithmic accelerations. The
full-coefficient path is register-limited and gains essentially nothing from
the GPU ($0.9\times$), whereas the accelerated paths achieve speed-ups of
$1.7\times$ (full TSE), $3.2\times$ (split coefficient), and
$3.4\times$ (split TSE). Thus the target-specific contraction and local
correction not only accelerate the CPU implementation but also make efficient
GPU execution possible.

No GPU-specific optimizations were applied. The reported performance therefore
reflects portability rather than hardware tuning, leaving occupancy
optimization, batching, mixed precision, and multi-GPU execution as future
work.

\begin{figure}[t]
  \centering
  \includegraphics[width=0.495\linewidth]{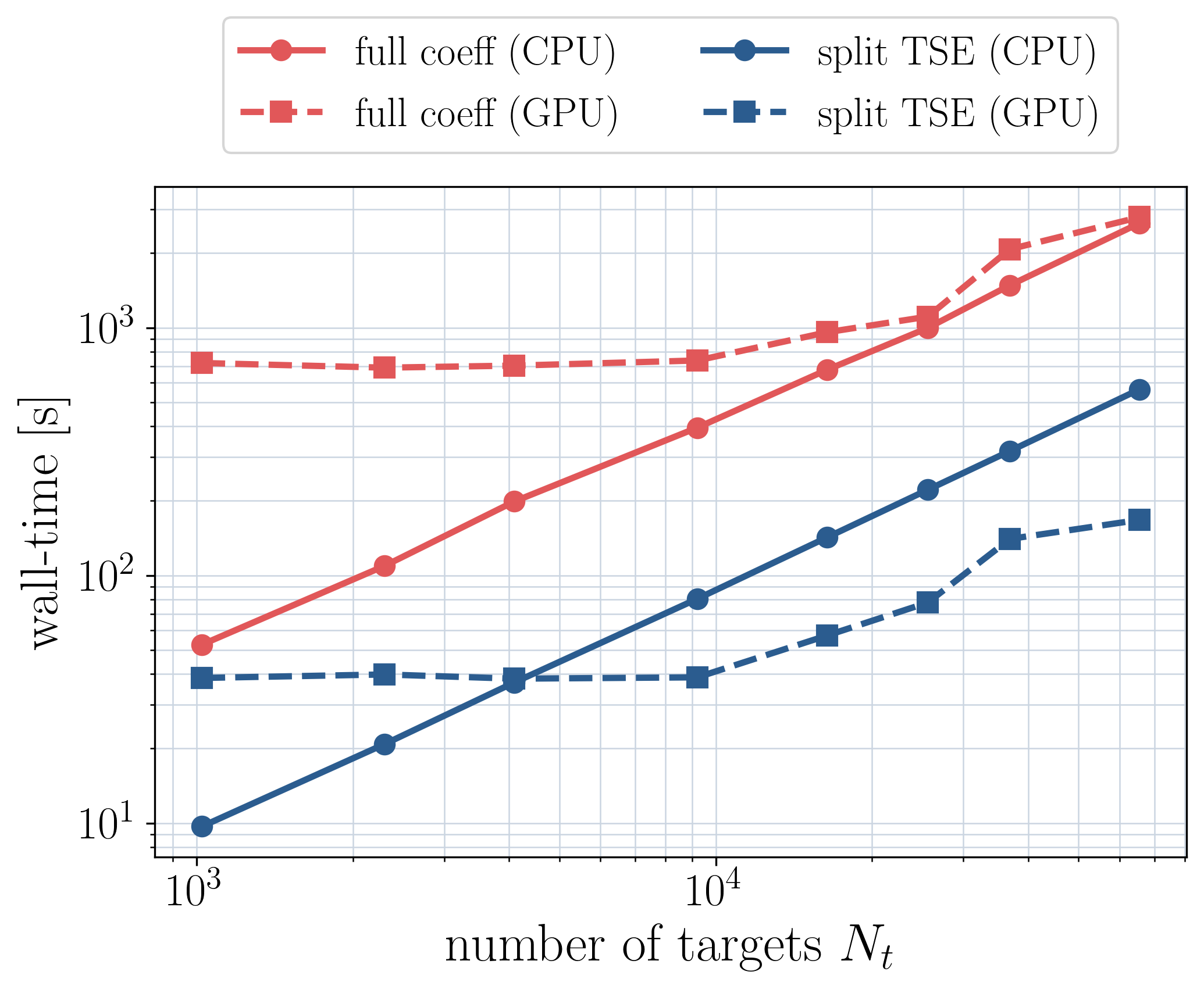}
  \hfill
  \includegraphics[width=0.495\linewidth]{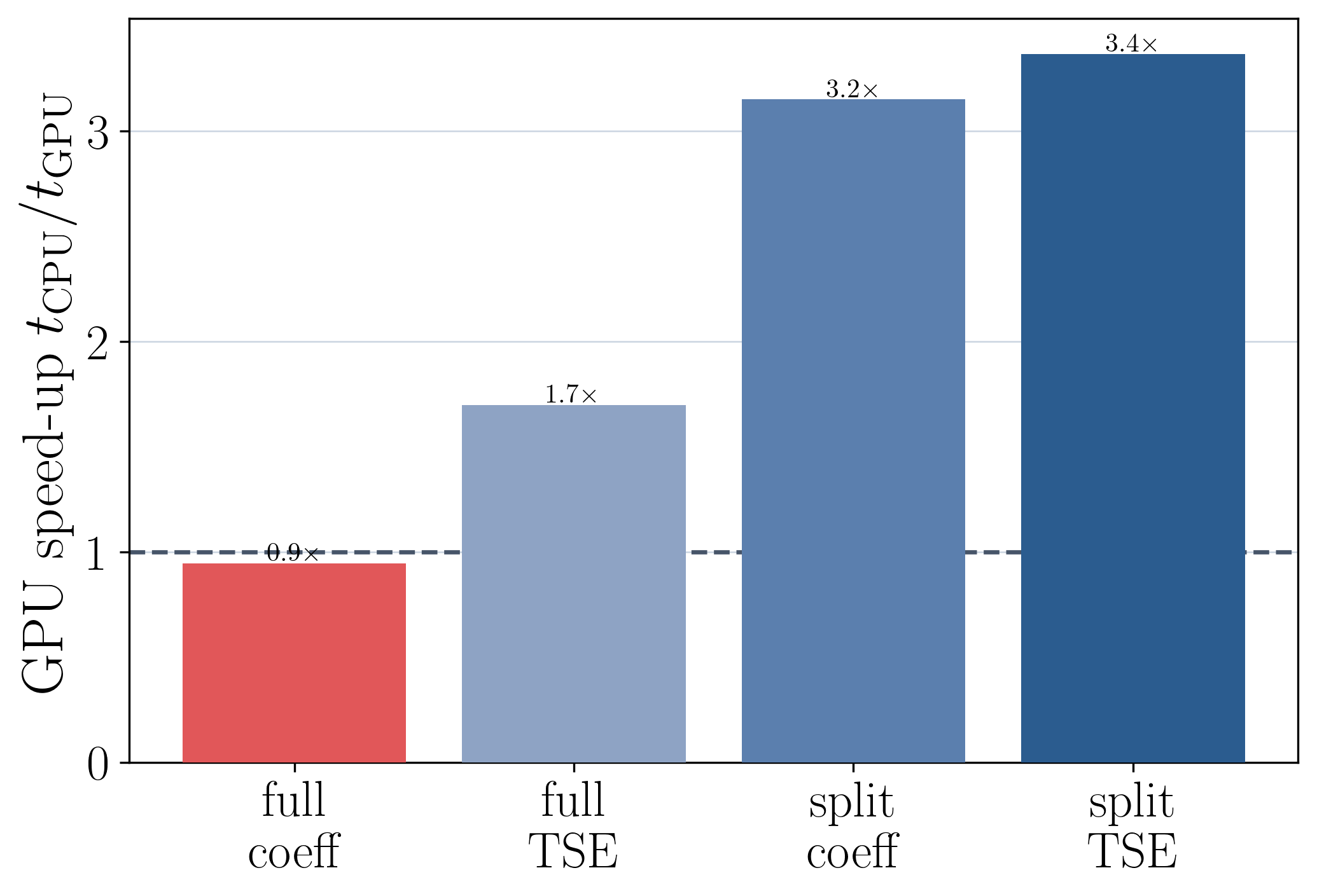}
  \caption{CPU/GPU performance portability on the W7-X conformal current sheet.
  \emph{Left:} wall-time versus target
count for full-coefficient and split-TSE paths on OpenMP (solid) and CUDA
(dashed); GPU acceleration appears beyond $N_t\approx4\times10^{3}$ for the
accelerated path. \emph{Right:} GPU speed-up at $N_t=65{,}536$: coefficient
evaluation is register-bound, whereas TSE and split paths achieve up to
$3.4\times$ acceleration.}
  \label{fig:w7x_gpu}
\end{figure}

\section{Software}
\label{sec:software}

\paragraph{Implementation and availability.}
AMPERe is implemented in C++17 as a header-only library built on the
Kokkos performance-portability framework~\cite{Kokkos_2014}, supporting CPU
(OpenMP, Serial) and GPU (CUDA, HIP, SYCL) backends through a unified execution
model. AMPERe is independent of equilibrium-file formats and consumes only
self-contained HDF5 inputs generated by CLEBSCH, which performs the required
coordinate transformations when interfacing with fusion-device equilibria.

The source code is released under the MIT license at
\url{https://github.com/<ORG>/<REPO>}, with the version used for this work
archived at \texttt{DOI:<Zenodo-DOI>}. Build instructions, backend selection,
and examples are provided in the repository.

\paragraph{Reproducibility.}
All numerical results in Section~\ref{sec:results} can be reproduced from the
released \texttt{examples/} directory. Each example contains the C++ driver
used to generate the data and the Python script
\texttt{make\_paper\_figs.py} used to regenerate the corresponding figures.
Table~\ref{tab:examples} summarizes the mapping between examples and numerical
experiments.

\begin{table}[t]
\centering
\small
\caption{Released examples reproducing the numerical experiments of
Section~\ref{sec:results}.}
\label{tab:examples}
\setlength{\tabcolsep}{5pt}
\begin{tabular}{p{4.8cm}p{5.8cm}}
\hline
Example directory & Reproduced results \\
\hline
\texttt{planar\_sheet\_direct\_vs\_gl} &
Smooth quadrature failure (Figs.~\ref{fig:ps_failure},\ref{fig:ps_geom})\\
\texttt{planar\_sheet\_local\_qbx} &
Panel QBX and orphan fallback (Figs.~\ref{fig:qbx_orphan},\ref{fig:qbx_handoff})\\
\texttt{toroid\_direct\_vs\_gl} &
Near-surface breakdown of smooth quadrature (Fig.~\ref{fig:coil_breakdown})\\
\texttt{vcp\_qbx\_hybrid\_dipole} &
Rotating-ellipse virtual casing (Figs.~\ref{fig:vcp_geom}--\ref{fig:vcp_anchors})\\
\texttt{malhotra\_protocol} &
QBX versus upsampled trapezoidal quadrature (Figs.~\ref{fig:mal_headtohead},\ref{fig:mal_conv})\\
\texttt{target\_specific\_qbx} &
Evaluation strategies and acceleration (Figs.~\ref{fig:paths_equiv},\ref{fig:paths_cost})\\
\texttt{regcoil}, \texttt{vcp\_qbx\_w7x} &
W7-X REGCOIL and virtual-casing results (Section~\ref{sec:res_w7x})\\
\hline
\end{tabular}
\end{table}

\paragraph{Hardware.}
Timing results in Sections~\ref{sec:res_paths}--\ref{sec:res_gpu} were obtained
on the systems described in Section~\ref{sec:res_gpu}: a dual-socket Intel
Xeon E5-2640~v4 node for OpenMP and an NVIDIA RTX~2080~Ti for CUDA. Absolute
timings are hardware-dependent; the reported speed-ups, error levels, and
cross-backend consistency are the reproducible quantities.

\section{Conclusions}
\label{sec:conclusion}
\textcolor{red}{We have presented AMPERe, an algorithm and code that employs QBX to
evaluate magnetic fields at singular and near-singular target points. Originally
developed in the boundary-integral-equation community, QBX is here adapted to
fusion applications, with particular emphasis on virtual-casing fields and
evaluations close to winding surfaces. The code, however, is
not restricted to fusion applications and can be readily used to other
problems involving near-singular field evaluation. We assessed the accuracy and
performance of AMPERe through a series of numerical experiments designed to
probe the near-surface regime, where conventional quadrature schemes become
inaccurate. By construction, the AMPERe preserves the
divergence-free property of the magnetic field at every expansion order. The
numerical results further demonstrate that AMPERe accurately recovers the
symmetries of the underlying configurations despite being entirely
symmetry-unaware. For vacuum magnetic fields, the curl-free property is
likewise recovered to high accuracy.}

\textcolor{red}{The evaluator incorporates several variants aimed at improving
computational efficiency---including target-driven QBX, target-specific
expansions, and local correction---while retaining the panel-driven
formulation. The latter provides a naturally separable structure that
facilitates future integration with fast multipole methods and
boundary-integral solvers, which we leave to future work.}

\textcolor{red}{Two companion components make the overall workflow
self-contained for fusion applications. CLEBSCH converts VMEC and Boozer
equilibria into format-independent HDF5 inputs compatible with AMPERe, while a
Kokkos implementation of REGCOIL, as a part of AMPERe, computes the winding-surface sheet
currents within the same computational pipeline. The latter reproduces the
reference Fortran implementation to within $0.001\%$ in RMS error.}

\section{Acknowledgments}
\label{sec:ack}
The author H. Betar thanks Mohamed Aziat (Renaissance Fusion), Elizabeth Paul and Alexey Knyazev (Columbia University),
Donald Spong and Yashika Ghai (Oak Ridge National Laboratory), Jacobo Varela
(Institute for Fusion Studies, The University of Texas at Austin), and Daniele
Del Sarto (Universit\'e de Lorraine) for fruitful discussions that allowed the
improvement of this work.

\appendix

\section{Analytical configurations and reference fields}
\label{app:configs}

This appendix gives the exact surface parametrizations and closed-form reference
fields of the test configurations of Section~\ref{sec:setup}, together with the
Stratton--Chu trace construction that turns a reference field into the surface
data AMPERe evaluates. These, with the parameter values quoted in
Section~\ref{sec:setup}, fix each test unambiguously; the released code and data
(Section~\ref{sec:software}) provide the reference implementation.

\subsection{Surface parametrizations}
\label{app:surfaces}

\paragraph{Planar sheet.}
A flat square patch in the plane $z=0$, occupying $(x,y)\in[-L,L]^2$ with $L=5$,
carrying a uniform surface current $\mathbf{K}=K_0\,\hat{\mathbf{x}}$.

\paragraph{Torus (toroidal coil).}
A torus of major radius $R_0$ and minor radius $a$ carrying a purely poloidal
surface current of total current $I$, parametrized by $(\theta,\varphi)\in
[0,2\pi)^2$ as
\begin{equation}\label{eq:torus_map}
\mathbf{x}(\theta,\varphi)=\big((R_0+a\cos\varphi)\cos\theta,\ (R_0+a\cos\varphi)\sin\theta,\ a\sin\varphi\big),
\end{equation}
with $R_0=1$, $a=0.3$, $I=1$.

\paragraph{Rotating ellipse.}
The rotating ellipse of Malhotra et al.~\cite{malhotra2019} (their \textcolor{red}{Eq.~(55)}):
an elliptical cross-section of semi-axes $a,b$ whose orientation rotates at rate
$\kappa$ as it is swept around the torus. With $(\theta,\varphi)\in[0,2\pi)^2$,
\begin{equation}\label{eq:ellipse_map}
\begin{aligned}
R(\theta,\varphi) &= R_0 + a\cos(\kappa\theta)\cos\varphi + b\sin(\kappa\theta)\sin\varphi,\\
Z(\theta,\varphi) &= -\,a\sin(\kappa\theta)\cos\varphi + b\cos(\kappa\theta)\sin\varphi,\\
\mathbf{x}(\theta,\varphi) &= \big(R\cos\theta,\ R\sin\theta,\ Z\big),
\end{aligned}
\end{equation}
with $\kappa=3/2$ (the half-integer rate closes the surface, which is
$\pi$-symmetric). Two parameter sets are used: $(R_0,a,b)=(3,\,0.6,\,0.3)$ for
the dipole configuration and $(R_0,a,b)=(2,\,0.7,\,1.0)$ for the Malhotra
benchmark. The tangents, normal, and surface Jacobian follow analytically from
\eqref{eq:ellipse_map}. The W7-X last closed flux surface is instead supplied as
a Fourier boundary from the equilibrium pipeline (Section~\ref{sec:software});
it has no closed-form reference field.

\subsection{Reference fields}
\label{app:fields}

\paragraph{Planar sheet.}
For $\mathbf{K}=K_0\hat{\mathbf{x}}$ the field is curl-free off the sheet and,
writing the corner abbreviations $u_i=X_i-x$, $v_j=Y_j-y$,
$\rho_{ij}=\sqrt{u_i^2+v_j^2+z^2}$ with $(X_1,X_2)=(Y_1,Y_2)=(-L,L)$, is given in
closed form by
\begin{equation}\label{eq:sheet_field}
B_x=0,\qquad
B_y=-\frac{\mu_0}{4\pi}K_0\,\Omega,\qquad
B_z=+\frac{\mu_0}{4\pi}K_0\!\!\sum_{i,j\in\{1,2\}}\!\! s_{ij}\,
\operatorname{arcsinh}\!\frac{u_i}{\sqrt{v_j^2+z^2}},
\end{equation}
where $\Omega=\sum_{i,j\in\{1,2\}} s_{ij}\,\operatorname{atan2}\!\big(u_i v_j,\,
z\,\rho_{ij}\big)$ is the solid angle subtended by the patch and the sign pattern
is $s_{11}=s_{22}=+1$, $s_{12}=s_{21}=-1$.

\paragraph{Toroidal coil.}
The poloidal surface current produces the ideal toroidal-solenoid field, purely
azimuthal and confined to the bore,
\begin{equation}\label{eq:torus_field}
\mathbf{B}=B_\varphi\,\hat{\mathbf{e}}_\varphi,\qquad
B_\varphi=
\begin{cases}
-\,\dfrac{\mu_0 I}{2\pi R}, & R_0-a<R<R_0+a,\\[4pt]
0, & \text{otherwise},
\end{cases}
\end{equation}
with $R=\sqrt{x^2+y^2}$ the cylindrical radius; in the normalized units used here
($\mu_0/4\pi=1$, $I=1$) the interior field is $|B_\varphi|=2/R$.

\paragraph{Magnetic dipole.}
A point dipole of moment $\mathbf{m}$ at $\mathbf{r}_d$ produces, with
$\mathbf{s}=\mathbf{r}-\mathbf{r}_d$,
\begin{equation}\label{eq:dipole_field}
\mathbf{B}(\mathbf{r})=\frac{\mu_0}{4\pi}
\left[\frac{3(\mathbf{m}\cdot\mathbf{s})\,\mathbf{s}}{|\mathbf{s}|^{5}}
-\frac{\mathbf{m}}{|\mathbf{s}|^{3}}\right],
\qquad \mathbf{r}_d=(3,0,0),\ \ \mathbf{m}=\hat{\mathbf{z}} .
\end{equation}

\paragraph{Current loop.}
A closed filamentary loop $C$ carrying current $I$ produces the vacuum
($\lambda=0$) Biot--Savart field
\begin{equation}\label{eq:loop_field}
\mathbf{B}_0(\mathbf{r})=\frac{\mu_0 I}{4\pi}\oint_C
\frac{\mathrm{d}\boldsymbol{\ell}'\times(\mathbf{r}-\mathbf{r}')}
{|\mathbf{r}-\mathbf{r}'|^{3}},
\qquad
\mathbf{c}(t)=\mathbf{c}_0+\rho\,(\cos t\,\mathbf{u}+\sin t\,\mathbf{v}),
\end{equation}
where the loop is the circle of radius $\rho=5$ in the tilted plane spanned by
the orthonormal pair $\{\mathbf{u},\mathbf{v}\}$, chosen so that the ellipse is
not a flux surface of $\mathbf{B}_0$. Since $C$ is well separated from the
surface, the loop integral is evaluated by the periodic trapezoidal rule to
$\sim16$ digits, providing a closed-form-quality ground truth (the protocol
of~\cite{malhotra2019}).

\subsection{Stratton--Chu trace construction}
\label{app:traces}

Each virtual-casing test converts a reference field $\mathbf{B}_0$, known on the
source surface $S$ with outward unit normal $\hat{\mathbf{n}}$, into the
equivalent surface densities of \eqref{eq:Ksigma},
\begin{equation}\label{eq:traces_app}
\mathbf{K}=\tfrac{1}{\mu_0}\,\hat{\mathbf{n}}\times\mathbf{B}_0,
\qquad
\sigma=\hat{\mathbf{n}}\cdot\mathbf{B}_0 .
\end{equation}
Writing $\mathbf{B}_K$ and $\mathbf{B}_\sigma$ for the surface-current and
surface-charge integrals of \eqref{eq:VCP}, the field is reconstructed in the
extinction sign convention
\begin{equation}\label{eq:extinction}
\mathbf{B}_{\rm VCP}(\mathbf{r})=-\mathbf{B}_K(\mathbf{r})-\mathbf{B}_\sigma(\mathbf{r})
=\begin{cases}\mathbf{B}_0(\mathbf{r}), & \mathbf{r}\in\Omega\ (\text{inside }S),\\
\mathbf{0}, & \mathbf{r}\notin\Omega,\end{cases}
\end{equation}
in the exact-quadrature limit. On the rotating ellipse neither trace vanishes
($S$ is not a flux surface of $\mathbf{B}_0$), so both the $\mathbf{K}$-term ---
the structure-preserving piece of Section~\ref{sec:structure} --- and the
$\sigma$-term are exercised; on a genuine flux surface $\sigma\equiv0$ and only
the $\mathbf{K}$-term remains.

\section{AMPERe-REGCOIL}
\label{app:REGCOIL_Kokkos}

The stellarator sheet-current source used in Section~\ref{sec:res_w7x} is
generated by a Kokkos reimplementation, internal to AMPERe, of
\textsc{regcoil}~\cite{landreman2017regcoil}.

\textsc{regcoil}~\cite{landreman2017regcoil} designs a stellarator coil set as
a smooth, divergence-free sheet current on a toroidal \emph{winding surface}
$S_c$ surrounding the plasma boundary $S_p$ (the LCFS), through a
single-valued current potential $\Phi$,
\begin{equation}\label{eq:reg_phi}
\mathbf{K}=\hat{\mathbf{n}}\times\nabla\Phi,
\qquad
\Phi(\theta,\zeta)=\frac{G\,\zeta}{2\pi}+\frac{I\,\theta}{2\pi}
+\sum_{j=1}^{m_{\rm nmax}}\Phi_j\,\sin(m_j\theta-n_j\zeta),
\end{equation}
with $(\theta,\zeta)$ poloidal/toroidal angles on $S_c$, $G$ the net poloidal
current, and $I$ the net toroidal current (zero for a stellarator). The
coefficients $\Phi_j$ minimize
\begin{equation}\label{eq:reg_obj}
\chi^2_B+\lambda\,\chi^2_K,
\qquad
\chi^2_B=\oint_{S_p} w_p\,(\mathbf{B}\!\cdot\!\hat{\mathbf{n}})^2 ,
\qquad
\chi^2_K=\oint_{S_c} |\mathbf{K}|^2\,dA ,
\end{equation}
where $\chi^2_B$ [T$^2$m$^2$] is the residual normal field the coils leave on
the plasma boundary and $\chi^2_K$ [A$^2$] penalizes coil complexity;
$\lambda\to0$ is the unregularized NESCOIL~\cite{merkel1987} limit, and the
knee of the $(\chi^2_K,\chi^2_B)$ L-curve is the usual compromise. Since
$\mathbf{B}\!\cdot\!\hat{\mathbf{n}}$ is linear in $\boldsymbol\Phi$ through a
Biot--Savart coupling between $S_c$ and $S_p$, both terms are quadratic and
$\partial(\chi^2_B+\lambda\chi^2_K)/\partial\boldsymbol\Phi=0$ reduces the
problem, at every $\lambda$, to the dense symmetric positive-definite normal
equations
\begin{equation}\label{eq:reg_normal}
\big(A+\lambda\,C\big)\,\boldsymbol\Phi(\lambda)=-\big(\mathbf{dB}+\lambda\,
\mathbf{dK}\big),
\end{equation}
of size $m_{\rm nmax}\times m_{\rm nmax}$ (full expressions
in~\cite{landreman2017regcoil}). AMPERe solves \eqref{eq:reg_normal} by a
dense Cholesky factorization implemented entirely in-house, without external
linear-algebra dependencies, for each $\lambda$ in a user-specified grid. The
resulting sheet-current field is then evaluated on the plasma boundary or at
arbitrary interior and exterior targets using the hybrid QBX evaluator of
Sections~\ref{sec:qbx}--\ref{sec:acceleration}.

REGCOIL is implemented directly in Kokkos rather than invoked as an external
Fortran executable, making the coil-design stage performance-portable across
the same CPU and GPU backends as the remainder of AMPERe
(Section~\ref{sec:accel_kokkos}). This design provides a unified workflow in
which sheet-current construction and field evaluation operate on the same
panelized surface and share the same QBX infrastructure, eliminating
dependencies on VMEC/netCDF during evaluation as well as intermediate file
formats or external executables.

\subsection{Benchmark against standalone REGCOIL}
\label{app:reg_bench}

The Kokkos solver is validated against the standalone Fortran implementation
on the W7-X vacuum equilibrium of Section~\ref{sec:res_w7x}, using identical
plasma and winding surfaces to isolate solver differences. Three cases are
compared: \emph{Fortran}, the reference implementation; \emph{Kokkos-A}, the
AMPERe solver on the Fortran winding surface; and \emph{Kokkos-B}, the AMPERe
solver on the uniformly offset surface used in production
(Section~\ref{sec:res_w7x}). Figure~\ref{fig:app_regcoil_chi2} shows the two
objective terms in \eqref{eq:reg_obj} versus $\lambda$ at $M=24$. The two
implementations produce indistinguishable L-curves, with Kokkos-A differing
from Fortran by median relative errors of $0.11\%$ in $\chi^2_B$ and
$0.003\%$ in $\chi^2_K$ over the full scan. Kokkos-B differs only in the
weakly regularized limit ($\lambda\rightarrow0$), where the objective reaches
its numerical floor and the effect of the modified winding geometry becomes
visible. The resulting fields reproduce the VMEC benchmark of
Section~\ref{sec:res_w7x}, while the direct Fortran--Kokkos comparison gives
component-wise differences of $0.001$--$0.003\%$.

The same solver has also been extended to axisymmetric
tokamak equilibria with net plasma current and validated on a JET equilibrium;
these results are reported elsewhere.

\begin{figure}[t]
  \centering
  \includegraphics[width=0.48\linewidth]{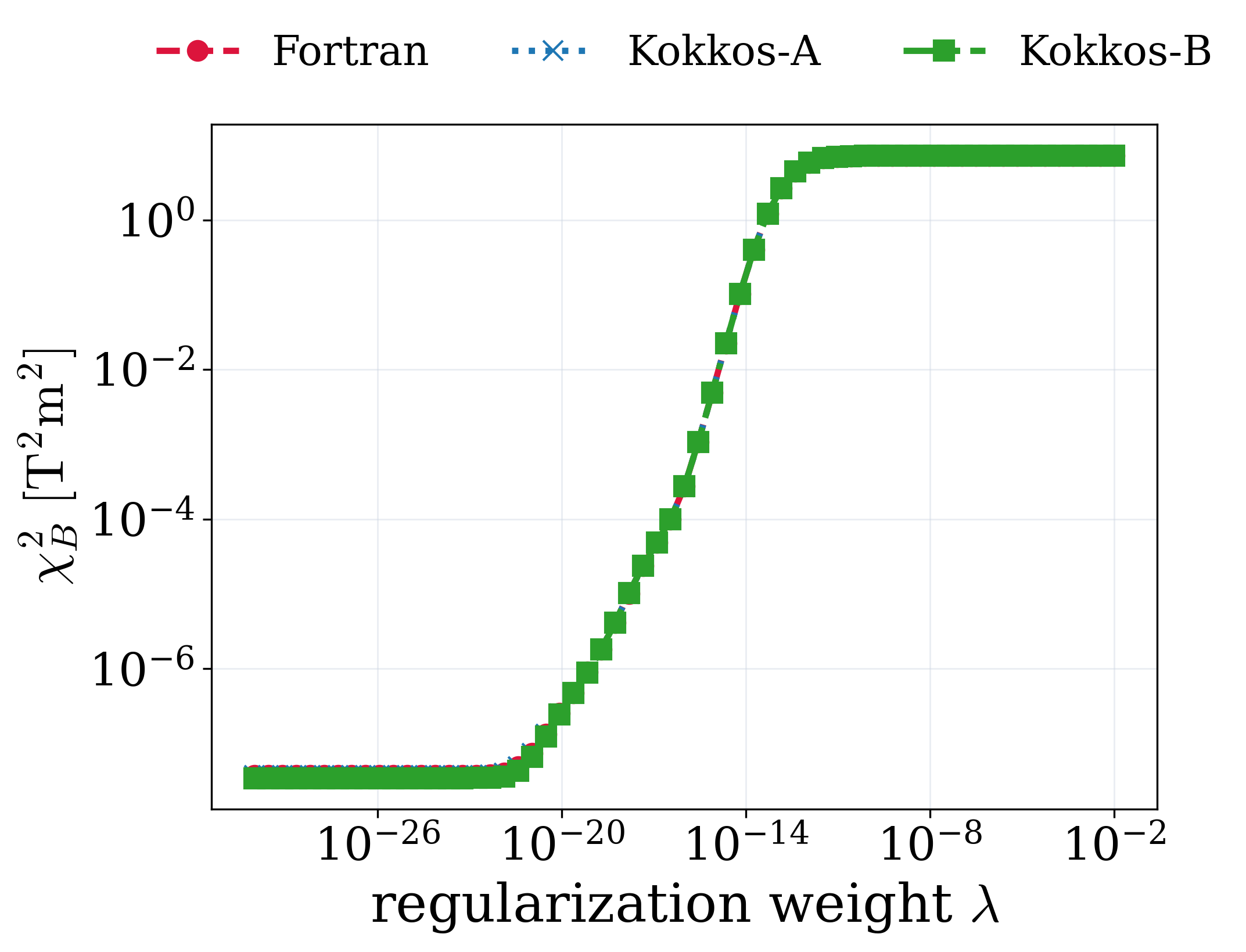}
  \hfill
  \includegraphics[width=0.48\linewidth]{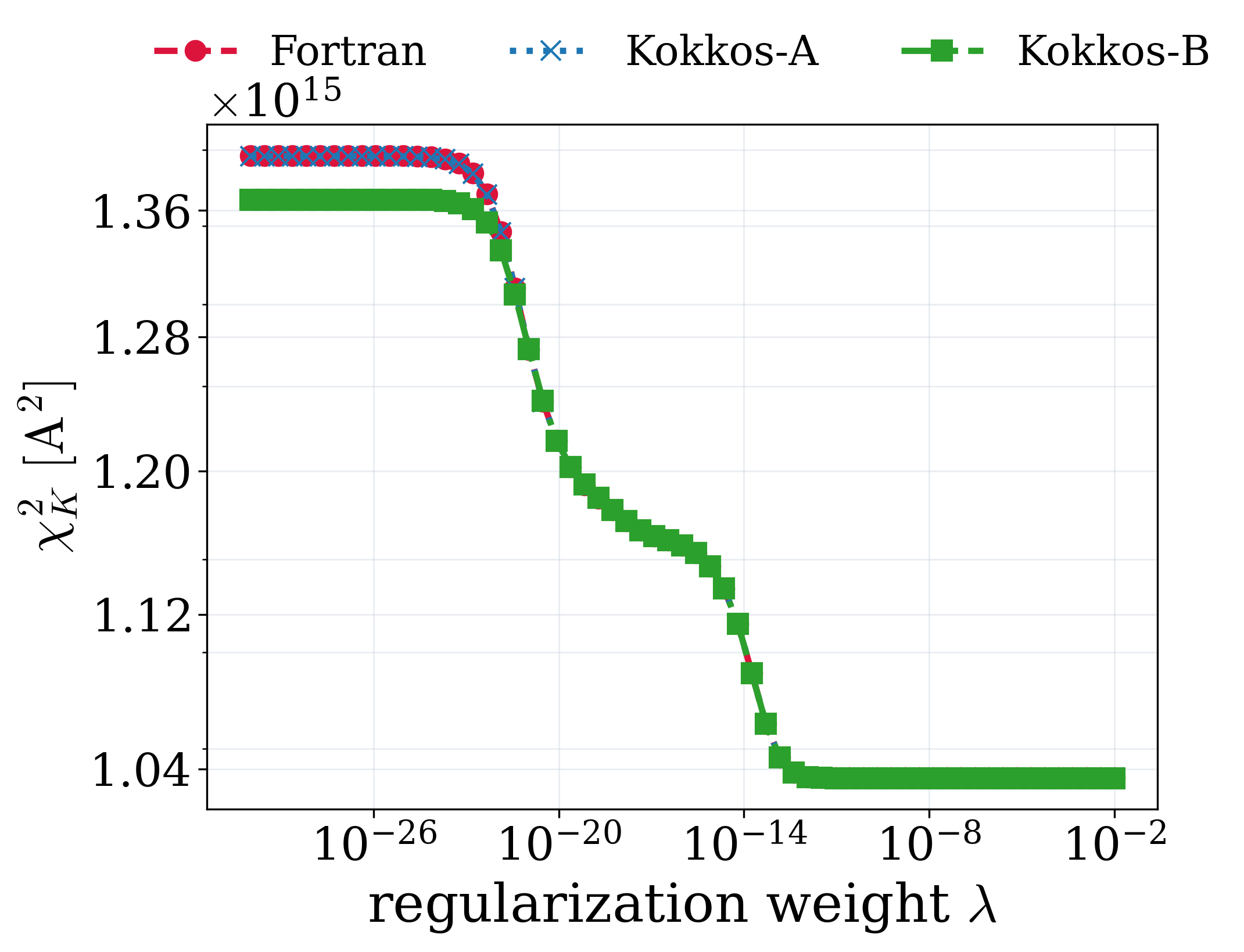}
\caption{REGCOIL objective terms from \eqref{eq:reg_obj} versus regularization
weight $\lambda$ at $M=24$ for the Fortran, Kokkos-A (Fortran winding
surface), and Kokkos-B (AMPERe offset surface) solutions. \emph{Left:}
normal-field residual $\chi^2_B$. \emph{Right:} coil-complexity penalty
$\chi^2_K$.}
  \label{fig:app_regcoil_chi2}
\end{figure}

\section{CLEBSCH Kokkos}
\label{app:CLEBSCH_Kokkos}

CLEBSCH is the companion code that supplies every equilibrium-derived input
required by AMPERe in this paper (Sections~\ref{sec:res_w7x}
and~\ref{app:REGCOIL_Kokkos}). It is named for the Clebsch representation of a
magnetic field that underlies straight-field-line flux coordinates,
\begin{equation}\label{eq:clebsch}
\mathbf{B}=\nabla\psi\times\nabla\alpha,
\qquad
\alpha=\theta^{*}-\iota(\psi)\,\zeta,
\end{equation}
with $\psi$ labeling the nested flux surfaces and $\alpha$ the field-line
label built from the straight-field-line poloidal angle $\theta^{*}$, the
rotational transform $\iota$, and the toroidal angle $\zeta$.

Starting from a VMEC~\cite{hirshman1983vmec} equilibrium (readers for DESC and
Boozer equilibria share the same interface), CLEBSCH reconstructs real-space
fields, geometry, and metric quantities from the stored Fourier spectrum on
arbitrary grids or single flux surfaces, and perform coordinate transformations.
Only one of its outputs matters for the present paper: the plasma-boundary geometry with
exact tangents, the net poloidal/toroidal currents.

Like AMPERe, CLEBSCH is a standalone, header-only, open-source Kokkos
library, usable independently of AMPERe for equilibrium reconstruction and
flux-coordinate transformation. It is released under the MIT license at
\url{https://github.com/<ORG>/<CLEBSCH-REPO>}.

\section*{Declaration of generative AI and AI-assisted technologies}
During the preparation of this work the author used (H. Betar) Google Gemini and
Anthropic Claude in order to refactor and improve an earlier version of the
AMPERe source code, and to check the manuscript for clarity, and
typographical errors. After using these tools, the author reviewed and edited
the content as needed and takes full responsibility for the content of the
publication.

\bibliographystyle{elsarticle-num}
\bibliography{AMPERe}

\end{document}